\title{Two-Page Book Embeddings of 4-Planar Graphs\thanks{Work on this problem began at Dagstuhl Seminar 13151. We thank the organizers, participants and Prof. Dr. M. Kaufmann.}}
\author{Michael A. Bekos%
\thanks{Electronic address: \texttt{bekos@informatik.uni-tuebingen.de}}}
\affil{Wilhelm-Schickhard-Institut f\"ur Informatik, Universit\"at T\"ubingen, Germany}
\author{Martin~Gronemann%
\thanks{Electronic address: \texttt{gronemann@informatik.uni-koeln.de}}}
\affil{Institut f\"ur Informatik, Universit\"at zu K\"oln, Germany}
\author{Chrysanthi~N.~Raftopoulou%
\thanks{Electronic address: \texttt{crisraft@mail.ntua.gr}}}
\affil{School of Applied Mathematical \& Physical Sciences, NTUA, Greece.}
\date{}
\def\@maketitle{%
  \newpage
  \null
  \vskip 2em%
  \begin{center}%
  \let \footnote \thanks
    {\Large\bfseries \@title \par}%
    \vskip 1.5em%
    {\normalsize
      \lineskip .5em%
      \begin{tabular}[t]{c}%
        \@author
      \end{tabular}\par}%
    \vskip 1em%
    {\normalsize \@date}%
  \end{center}%
  \par
  \vskip 1.5em}
\newtheorem{lemma}{Lemma}
\newtheorem{theorem}{Theorem}
\newtheorem{corollary}{Corollary}
\newcommand{\inGraph}[1]{G_{in}(#1)}
\newcommand{\outGraph}[1]{G_{out}(#1)}
\newcommand{\inGraphDummy}[1]{G^*_{in}(#1)}
\newcommand{\outGraphDummy}[1]{G^*_{out}(#1)}
\newcommand{\inGraphInc}[1]{\overline{G}_{in}(#1)}
\newcommand{\outGraphInc}[1]{\overline{G}_{out}(#1)}
\newcommand{\inHamilton}[1]{{H}_{in}(#1)}
\newcommand{\outHamilton}[1]{{H}_{out}(#1)}
\newcommand{\inHamiltonDummy}[1]{{H}^*_{in}(#1)}
\newcommand{\outHamiltonDummy}[1]{{H}^*_{out}(#1)}
\newcommand{\WLOG}{{w.l.o.g. }}
\newcommand{\an}{anchor\xspace}
\newcommand{\cc}{ancillary\xspace}
\newcommand{\Cc}{Ancillary\xspace}
\newcommand{\bv}{block-vertex\xspace}
\newcommand{\ans}{anchors\xspace}
\newcommand{\ccs}{ancillaries\xspace}
\newcommand{\bvs}{block-vertices\xspace}
\begin{document}

\maketitle

\begin{abstract}
Back in the eighties, Heath~\cite{heath-thesis-85} showed that every
3-planar graph is subhamiltonian and asked whether this result can
be extended to a class of graphs of degree greater than three. In
this paper we affirmatively answer this question for the class of
4-planar graphs. Our contribution consists of two algorithms: The
first one is limited to triconnected graphs, but runs in linear time
and uses existing methods for computing hamiltonian cycles in planar
graphs. The second one, which solves the general case of the
problem, is a quadratic-time algorithm based on the book embedding
viewpoint of the problem.
\end{abstract}
\section{Introduction}
\label{sec:intro}
Book embeddings have a long history and arise in various application
areas such as VLSI design, parallel computing, design of
fault-tolerant systems~\cite{Chung87embeddinggraphs}. In a
\emph{book embedding} the placement of nodes is restricted to a
line, the \emph{spine} of the book. The edges are assigned to
different \emph{pages} of the book. A page can be thought of as a
half-plane bounded by the spine where the edges are drawn as
circular arcs between their endpoints. We say that a graph admits a
\emph{$k$-page book embedding} or is \emph{k-page embeddable} if one
can assign the edges to $k$ pages and there exists a linear ordering
of the nodes on the spine such that no two edges of the same page
cross. The minimum number of pages required to construct such an
embedding is the \emph{book thickness} or \emph{page number} of a
graph. The book thickness of planar graphs has received much
attention in the past. Yannakakis~\cite{JCSS::Yannakakis1989}
describes a linear-time algorithm to embed every planar graph into a
book of four pages. We study the problem of embedding 4-planar
graphs, i.e., planar graphs with maximum degree four, into books
with two pages. Bernhart et al.~\cite{bk-btg-79} show that a graph
is two-page embeddable iff it is subhamiltonian. A
\emph{subhamiltonian graph} is a subgraph of a planar hamiltonian
graph. It is \textit{NP}-complete to determine whether a graph is
subhamiltonian~\cite{wigderson82}. Often referred to as
\emph{augmented hamiltonian cycle}, a \emph{subhamiltonian cycle} is
a cyclic sequence of nodes in a graph that would form a hamiltonian
cycle when adding the missing edges without destroying planarity.
The relation between subhamiltonian cycles and two-page book
embeddings is quite intuitive. The order of the nodes on the spine
is equivalent to the cyclic order of the subhamiltonian cycle. The
edges are partitioned by whether they lie in the interior of the
cycle or not.

An early important result is due to Whitney~\cite{Whi}, who proves
that every maximal planar graph with no separating triangles is
hamiltonian (recall that a \emph{separating triangle} is a 3-cycle
whose removal disconnects the graph). Tutte~\cite{Tut56} shows that
every 4-connected planar graph has a hamiltonian cycle.
Chiba~et~al.~\cite{chibanishizeki89} provide a linear-time algorithm
to find a hamiltonian cycle in a 4-connected planar graph.
Chen~\cite{Chen:2003fk} gives a proof that every maximal planar
graph with at least five vertices and no separating triangles is
4-connected. Sanders~\cite{sanders-97} generalizes a theorem of
Thomassen and shows that any 4-connected planar graph has a
hamiltonian cycle that contains two arbitrarily chosen edges of the
graph. Based on Whitney's theorem,
Kainen~et~al.~\cite{Kainen2007835} show that every planar graph with
no separating triangles is subhamiltonian. Another result is by
Chen~\cite{Chen:2003fk} who shows that if a maximal planar graph
contains only one such triangle, then it is hamiltonian.
Helden~\cite{Helden20071833} improves this result further to two
triangles. The aforementioned results are all related to the problem
of embedding planar graphs into two pages. However, there is an
extensive amount of literature on embedding various types of graphs
into books; for an overview see e.g.~\cite{DujmovicW04}. One result
that is interesting in our context is that of
Heath~\cite{heath-thesis-85}. In his thesis, he describes a
linear-time algorithm to embed any 3-planar graph into two pages and
concludes that it would be interesting to know if a higher degree
bound is possible.

We tackle the 4-planar case from two sides. The first approach based
on the subhamiltonicity is restricted to triconnected graphs
(Section~\ref{sec:triconnected-planar}) but builds on existent
results and is therefore of a simple nature compared to the second
approach. Extending it to biconnected graphs is not straightforward,
though. The algorithm of Section~\ref{sec:general-planar} --which is
less efficient in terms of time complexity-- exploits the degree
restriction to construct a two-page book embedding.

\section{Subhamiltonicity of Triconnected 4-Planar Graphs}
\label{sec:triconnected-planar}
In this section we restrict ourselves to triconnected 4-planar
graphs. To state the main result of this section, we proceed in a
step-by-step manner. First we investigate the special properties of
separating triangles in 4-planar graphs, then we use those to derive
a solution for a single separating triangle. Unlike
Chen~\cite{Chen:2003fk} and Helden~\cite{Helden20071833}, we are
able to extend our approach to an unbounded number of triangles by
exploiting the degree restriction. We say a subhamiltonian cycle $H$
\emph{crosses} a face if there are two consecutive vertices in $H$
that are incident to the face but not adjacent to each other.

\begin{figure}[t]
    \centering
    \begin{minipage}[b]{0.4\textwidth}
        \centering
        \includegraphics[width=0.65\textwidth,page=1]{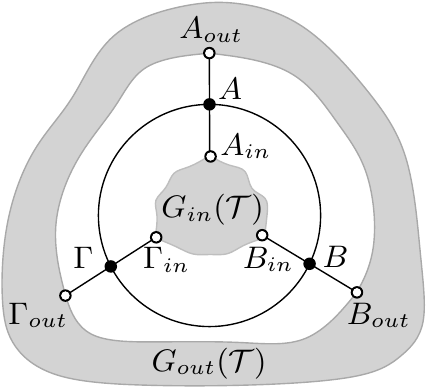}
        \caption{Triangle $\mathcal{T}$ separating $\inGraph{\mathcal{T}}$ and $\outGraph{\mathcal{T}}$ on removal.}
        \label{fig:sep_tri_4planar}
    \end{minipage}
    \hfill
    \begin{minipage}[b]{0.5\textwidth}
        \centering
        \includegraphics[width=0.65\textwidth,page=2]{images/sep_tri}
        \caption{Merging $\inHamilton{\mathcal{T}}$ (dotted) and $\outHamilton{\mathcal{T}}$ (dashed) into $H$ (bold gray).}
        \label{fig:merging_cycles}
    \end{minipage}
\end{figure}

\begin{lemma}
\label{lemma:linear_time_one_crossing} Every triconnected planar
graph with no separating triangles has a subhamiltonian cycle that
crosses every face at most once and it can be computed in linear
time.
\end{lemma}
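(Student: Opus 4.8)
The plan is to reduce the statement to the existence of a Hamilton cycle in a suitable triangulation and then read off the subhamiltonian cycle from it. Concretely, I would first augment $G$ to a maximal planar graph $G'$ that has no separating triangle; this is exactly the augmentation underlying the subhamiltonicity result of Kainen et al.~\cite{Kainen2007835}, and it can be carried out in linear time. By Chen's characterisation~\cite{Chen:2003fk} a maximal planar graph on at least five vertices with no separating triangle is $4$-connected, so Tutte's theorem~\cite{Tut56} guarantees a Hamilton cycle of $G'$, which the algorithm of Chiba et al.~\cite{chibanishizeki89} computes in linear time. Since $E(G)\subseteq E(G')$, the cyclic order in which such a Hamilton cycle $C$ visits $V(G)$ is a subhamiltonian cycle $H$ of $G$: every pair of consecutive vertices of $H$ that is non-adjacent in $G$ is joined by an edge of $G'\setminus G$, i.e. by a chord that $G'$ already draws inside the unique face of $G$ containing it, so adding all these chords completes $H$ to a Hamilton cycle without destroying planarity.

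To control how often $H$ crosses a face I would fix the augmentation rather than take an arbitrary one. The idea is to triangulate each non-triangular face $f$ of $G$ by inserting a single \emph{apex} vertex $w_f$ in its interior and joining $w_f$ to every vertex on the boundary of $f$. In the resulting $G'$ the Hamilton cycle $C$ enters and leaves each $w_f$ exactly once, say along $w_f a$ and $w_f b$; suppressing $w_f$ and replacing this length-two path by the arc $ab$ drawn through $f$ contributes at most one consecutive non-adjacent pair incident to $f$. Hence after deleting all apices the cycle $H$ crosses each non-triangular face at most once, while a triangular face admits no chord at all and is therefore never crossed. This yields the desired bound, and all steps are linear.

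The hard part will be that the apex construction is only safe when the boundary walk of $f$ is an induced cycle: if that boundary contains a chord $ab\in E(G)$, then $w_f a b$ is a triangle of $G'$ whose two sides both meet the interior, i.e. a new separating triangle, and Chen's theorem no longer applies. Here triconnectivity is the key lever: a chord joining two vertices at distance two along a face boundary would force the enclosed boundary vertex to have degree two, which is impossible, so only \emph{long} chords survive. For a face carrying such a chord I would replace the apex by a triangulation that never creates a triangle on an already existing chord, and then argue by a local exchange --- rerouting the portion of $C$ inside $f$ between the same entry and exit vertices so that it uses at most one interior chord --- that the crossing count of $f$ can be pushed down to one without increasing it in any other face. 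Verifying that such a rerouting always exists (equivalently, that every interface a planar Hamilton cycle can impose on a triangulated polygon is realisable with a single chord), and that it preserves both $4$-connectivity and the linear running time, is, I expect, the technical heart of the proof.
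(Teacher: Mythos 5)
Your first two paragraphs are precisely the paper's proof: stellate every non-triangular face with an apex, observe that the resulting maximal planar graph is free of separating triangles and hence 4-connected by Chen~\cite{Chen:2003fk}, extract a hamiltonian cycle with the linear-time algorithm of Chiba et al.~\cite{chibanishizeki89}, and suppress the apices, noting that each suppressed apex contributes at most one non-adjacent consecutive pair, confined to its face. That part is complete and correct.

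The ``hard part'' you defer in your last paragraph, however, is vacuous, so the rerouting and re-triangulation machinery you sketch (and leave unverified) is not needed. In a triconnected planar graph no face boundary carries a chord: if $ab\in E(G)$ were a chord of the cycle bounding a face $f$, the edge $ab$ is drawn outside $f$, and the closed Jordan curve formed by $ab$ together with an arc through the interior of $f$ from $a$ to $b$ meets the graph only in $a$ and $b$; the two arcs of the face cycle between $a$ and $b$ are both non-empty (since $a,b$ are non-consecutive) and lie on opposite sides of this curve, so $\{a,b\}$ would be a separation pair, contradicting triconnectivity. (This is the classical fact that face boundaries of 3-connected planar graphs are induced, non-separating cycles.) Consequently every triangle of $G'$ through an apex $w_f$ uses an edge of the boundary of $f$ and bounds a face of $G'$, while every triangle avoiding the apices is a triangle of $G$ and non-separating by hypothesis; so the plain apex construction never creates a separating triangle, exactly as the paper asserts. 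Note also that your intermediate claim --- that a chord joining vertices at distance two along the boundary would force the intermediate vertex to have degree two --- is not sound on its own (that vertex may have further neighbours in the region cut off by the two boundary edges and the chord); the separation-pair argument above is the correct use of triconnectivity, and it excludes \emph{all} chords, not merely short ones. With that observation your proof closes and coincides with the paper's.
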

\begin{proof}
In the triconnected case, Kainen~et~al.~\cite{Kainen2007835}
construct a new maximal planar graph $G'=(V',E')$ by inserting a
vertex into each non-triangular face of $G$ and connect it to the
vertices of that face. Clearly this takes linear time. $G'$ is
maximal planar, free of separating triangles, hence, 4-connected. We
can use the linear-time algorithm of
Chiba~et~al.~\cite{chibanishizeki89} to obtain a hamiltonian cycle
$H'$ for $G'$. Deleting the newly inserted vertices $V' - V$ yields
a subhamiltonian cycle $H$ for $G$ that crosses each face at most
once.
\end{proof}

Before investigating the properties of separating triangles, we
introduce some notation. Given an embedded triconnected 4-planar
graph $G$ with a fixed outerface and a separating triangle
$\mathcal{T}$ with vertices $V(\mathcal{T})= \{A, B, \Gamma\} $, we
denote the subgraph of $G$ contained in $\mathcal{T}$ by
$\inGraph{\mathcal{T}}$ and the subgraph of $G$ outside
$\mathcal{T}$ by $\outGraph{\mathcal{T}}$. We also denote
$\inGraphInc{\mathcal{T}} = G - \outGraph{\mathcal{T}}$ and
$\outGraphInc{\mathcal{T}} = G - \inGraph{\mathcal{T}}$. Since $G$
is triconnected and 4-planar, every vertex of $\mathcal{T}$ has
degree four and is adjacent to exactly one vertex in
$\inGraph{\mathcal{T}}$ and $\outGraph{\mathcal{T}}$, respectively.
We denote these with $A_{in}, B_{in}, \Gamma_{in}$ and $A_{out},
B_{out}, \Gamma_{out}$, respectively (see
Fig.~\ref{fig:sep_tri_4planar}).

\begin{lemma}\label{lemma:abc_distinct}
Given a 4-planar triconnected graph $G$ and a separating triangle
$\mathcal{T} = \{A,B,\Gamma\}$, then $A_{in}, B_{in}, \Gamma_{in}
(A_{out}, B_{out}, \Gamma_{out})$ are pairwise distinct or all
represent the same vertex.
\end{lemma}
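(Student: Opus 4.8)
The plan is to argue by contradiction, ruling out the only intermediate possibility: that \emph{exactly two} of $A_{in}, B_{in}, \Gamma_{in}$ coincide. Up to relabeling I would assume $A_{in} = B_{in} = v$ while $\Gamma_{in} \neq v$; by the symmetric role of the interior and the exterior of $\mathcal{T}$ (re-rooting the embedding so that the chosen outer face lies inside $\mathcal{T}$), the same argument disposes of the analogous claim for $A_{out}, B_{out}, \Gamma_{out}$. Since $v$ is adjacent to both $A$ and $B$, the edges $vA$, $vB$ together with the triangle edge $AB$ bound a triangle, while the edges $vA$, $A\Gamma$, $\Gamma B$, $Bv$ bound a quadrilateral region $Q$ (with boundary $4$-cycle $v, A, \Gamma, B$) whose interior lies inside $\mathcal{T}$.

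First I would locate $\Gamma_{in}$. Because $\Gamma$ is incident only to the boundary cycle of $Q$ and not to the triangle $vAB$, its unique interior edge $\Gamma\,\Gamma_{in}$ enters $Q$, so $\Gamma_{in}$ is a vertex in the interior of $Q$; in particular the interior of $Q$ is nonempty.

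Next I would exploit the degree restriction recorded just before the lemma: each of $A$, $B$, $\Gamma$ sends exactly one edge into $\inGraph{\mathcal{T}}$. For $A$ and $B$ these unique interior edges are precisely $Av$ and $Bv$, both landing on the boundary vertex $v$ of $Q$. Hence neither $A$ nor $B$ has any edge to a vertex in the interior of $Q$. Combined with planarity (the Jordan curve theorem applied to the boundary cycle of $Q$), every path leaving the interior of $Q$ must cross one of the four boundary vertices $\{v, A, \Gamma, B\}$; but since $A$ and $B$ have no neighbor inside $Q$, such a path must in fact pass through $v$ or $\Gamma$.

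I would then conclude that $\{v, \Gamma\}$ is a $2$-cut: its removal separates the nonempty interior of $Q$, which contains $\Gamma_{in}$, from the remainder of $G$, which contains $A$, $B$ and all of $\outGraph{\mathcal{T}}$. This contradicts the triconnectivity of $G$, excluding the case of exactly two coincidences and leaving only ``all distinct'' or ``all equal'', as claimed. The step I expect to require the most care is the region bookkeeping --- verifying that $\Gamma_{in}$ genuinely lies in the interior of $Q$ and that the single-interior-edge property forbids any $A$- or $B$-incident chord into $Q$ --- since it is exactly this that collapses the putative $3$-cut $\{v, A, B\}$ down to the forbidden $2$-cut $\{v, \Gamma\}$.
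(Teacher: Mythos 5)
Your proposal is correct and takes essentially the same route as the paper: the paper's one-line proof simply asserts that under the assumption $A_{in}=B_{in}=v$, $\Gamma_{in}\neq v$, the pair $(v,\Gamma)$ is a separation pair contradicting triconnectivity, and your quadrilateral-region bookkeeping is exactly the (omitted) justification of that assertion. Nothing needs fixing.
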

\begin{proof}
In the other case, where \WLOG $A_{in} = B_{in} = v$ and
$\Gamma_{in} \neq v$, there exists a separation pair $(v,\Gamma)$
contradicting the triconnectivity of $G$. A symmetric argument
applies to $A_{out}, B_{out}, \Gamma_{out}$.
\end{proof}

\begin{lemma}\label{lemma:triangle_disjoint}
In a 4-planar triconnected graph, every pair of distinct separating
triangles $\mathcal{T}$ and $\mathcal{T}'$ is vertex disjoint, i.e.
$V(\mathcal{T}) \cap V(\mathcal{T}') = \emptyset$.
\end{lemma}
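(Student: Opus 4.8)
The plan is to argue by contradiction: assume $\mathcal{T}$ and $\mathcal{T}'$ are distinct separating triangles that share a vertex, say $A \in V(\mathcal{T}) \cap V(\mathcal{T}')$, and write $\mathcal{T} = \{A, B, \Gamma\}$. The whole argument rests on the degree-four restriction and the structure noted just above the lemma. Recall that $A$ has exactly four incident edges, namely to $B$, $\Gamma$, $A_{in}$ and $A_{out}$, where $A_{in} \in \inGraph{\mathcal{T}}$ and $A_{out} \in \outGraph{\mathcal{T}}$. Since $A$ also lies on the separating triangle $\mathcal{T}'$, the very same four edges must account for the two edges of $\mathcal{T}'$ incident to $A$ together with exactly one edge entering the interior of $\mathcal{T}'$ and exactly one edge entering its exterior. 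First I would pin down how these four edges are arranged around $A$ in the planar embedding.

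The key structural observation is about the rotation at $A$. For the triangle $\mathcal{T}$, the two boundary edges $AB$ and $A\Gamma$ split the neighbourhood of $A$ into an interior sector (containing the edge $AA_{in}$) and an exterior sector (containing $AA_{out}$); hence the cyclic order of edges around $A$ is $B, A_{in}, \Gamma, A_{out}$. The same reasoning applies verbatim to $\mathcal{T}'$: its two edges at $A$ must separate the unique neighbour of $A$ inside $\mathcal{T}'$ from the unique neighbour of $A$ outside $\mathcal{T}'$, so these two edges cannot be consecutive in the rotation; were they adjacent, both remaining edges would fall into a single sector and hence lie on the same side of $\mathcal{T}'$. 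Thus the two $\mathcal{T}'$-edges at $A$ must form a non-adjacent (``opposite'') pair. In the cyclic order $B, A_{in}, \Gamma, A_{out}$ there are precisely two such pairs, namely $\{AB, A\Gamma\}$ and $\{AA_{in}, AA_{out}\}$, so one of these is the pair of $\mathcal{T}'$-edges at $A$.

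It then remains to rule out both pairs. If $\{AB, A\Gamma\}$ are the $\mathcal{T}'$-edges at $A$, then $B$ and $\Gamma$ are the other two vertices of $\mathcal{T}'$ and, since $B\Gamma$ is an edge, $\mathcal{T}' = \{A, B, \Gamma\} = \mathcal{T}$, contradicting distinctness. If instead $\{AA_{in}, AA_{out}\}$ are the $\mathcal{T}'$-edges at $A$, then $A_{in}A_{out}$ would be an edge of $\mathcal{T}'$; but $A_{in}$ lies strictly inside $\mathcal{T}$ and $A_{out}$ strictly outside, and $\mathcal{T}$ separates these two regions, so no such edge can exist (equivalently, it would have to cross the triangle $\mathcal{T}$, violating planarity). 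Both cases are impossible, so no shared vertex $A$ can occur and the triangles are vertex disjoint. I expect the only delicate point to be the rotation argument of the second paragraph: making precise that a degree-four vertex on a separating triangle has its two triangle edges opposite in the rotation, which is exactly what forces the $\mathcal{T}'$-edges at $A$ to coincide either with $\mathcal{T}$'s boundary at $A$ or with the two ``crossing'' edges $AA_{in}, AA_{out}$.
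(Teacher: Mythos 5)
Your proof is correct, but it takes a genuinely different route from the paper's. The paper disposes of the lemma by pure degree counting in two cases: if $\mathcal{T}$ and $\mathcal{T}'$ share an edge $(u,v)$, then $u$ and $v$ each need three edges for the two triangles plus two more edges connecting $\inGraph{\mathcal{T}}$ and $\inGraph{\mathcal{T}'}$ to them, so $\deg(u),\deg(v)\geq 5$; if they share only a vertex $v$, then $v$ lies on two edge-disjoint $3$-cycles and is additionally joined to both interiors, giving $\deg(v)\geq 6$ --- either way contradicting $4$-planarity. You instead work in the fixed planar embedding and analyze the rotation at the shared vertex: the structure fact stated before Lemma~\ref{lemma:abc_distinct} forces the two triangle edges of any separating triangle at a degree-four vertex to be opposite in the $4$-edge rotation, which leaves only the pairs $\{AB,A\Gamma\}$ (forcing $\mathcal{T}'=\mathcal{T}$) and $\{AA_{in},AA_{out}\}$ (forcing an edge from inside to outside $\mathcal{T}$, impossible by the Jordan-curve property of the cycle $\mathcal{T}$). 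Each approach buys something: the paper's count is shorter and needs no embedding at all, only the combinatorial fact that each triangle vertex has a neighbor in each component side; your argument unifies the two cases (edge sharing is subsumed by vertex sharing, since an opposite pair containing $AB$ must be $\{AB,A\Gamma\}$) and is arguably more careful at one point --- the paper's ``$\deg(v)\geq 6$'' tacitly assumes the edges into $\inGraph{\mathcal{T}}$ and $\inGraph{\mathcal{T}'}$ are distinct from the six triangle edges, which deserves a word if a vertex of $\mathcal{T}'$ happens to lie inside $\mathcal{T}$ (though $\deg(v)\geq 5$ would still suffice there). Your rotation argument sidesteps that bookkeeping entirely, at the price of invoking planarity of the embedding explicitly.
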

\begin{proof}
Assume to the contrary that $\mathcal{T}$ and $\mathcal{T}'$ share
an edge or a vertex. In the first case, let \WLOG $e = (u,v)$ be the
common edge. The degree of both $u$ and $v$ is at least five, since
three edges are required for $\mathcal{T}, \mathcal{T}'$ and two
additional edges to connect  $\inGraph{\mathcal{T}}$ and
$\inGraph{\mathcal{T}'}$ to $\mathcal{T}$ and $\mathcal{T}'$,
respectively. In the second case, let $v$ denote the common vertex.
Since $v$ is part of two edge disjoint cycles and connected to
$\inGraph{\mathcal{T}}$ and $\inGraph{\mathcal{T}'}$, it follows
that $deg(v) \geq 6$.
\end{proof}

Consider now a 4-planar triconnected graph with a single separating
triangle $\mathcal{T}$. Similar to Chen~\cite{Chen:2003fk}, the idea
is to compute two cycles $\inHamilton{\mathcal{T}}$ and
$\outHamilton{\mathcal{T}} $ for $\inGraphInc{\mathcal{T}}$ and
$\outGraphInc{\mathcal{T}}$ and link them via the separating
triangle together. The crucial observation is that if two cycles
intersect as illustrated in Fig.~\ref{fig:merging_cycles}, i.e.,
they contain two edges of the triangle but have only one of them in
common, then we can always merge them into one cycle.
\begin{lemma}
\label{lemma:cycle_merging} Let $G$ be a triconnected 4-planar
graph, $\mathcal{T}$ a separating triangle, and
$\inHamilton{\mathcal{T}}$ and $\outHamilton{\mathcal{T}} $ two
subhamiltonian cycles for $\inGraphInc{\mathcal{T}}$ and
$\outGraphInc{\mathcal{T}}$, resp. If $E(\inHamilton{\mathcal{T}})
\cap E(\mathcal{T})=\{e_{in},e\}$ and $E(\outHamilton{\mathcal{T}})
\cap  E(\mathcal{T})=\{e_{out},e\} $ where $\{e, e_{in} ,e_{out}\}$
are the edges of $\mathcal{T}$, then $G$ is subhamiltonian.
\end{lemma}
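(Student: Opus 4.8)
The plan is to splice the two given subhamiltonian cycles along their common edge $e$ into a single cyclic order of $V(G)$, and then to certify that this order is itself subhamiltonian by realizing it as the spine of a two-page book embedding of $G$, which is equivalent to subhamiltonicity~\cite{bk-btg-79}. First I fix notation. Write $e=(A,B)$; as any two edges of $\mathcal{T}$ share a vertex and $e_{in}\neq e_{out}$, I may \WLOG label the triangle so that $e_{in}=(A,\Gamma)$ and $e_{out}=(B,\Gamma)$. Then in $\inHamilton{\mathcal{T}}$ the two cyclic neighbours of $A$ are exactly $B$ and $\Gamma$, so deleting $e_{in}$ opens $\inHamilton{\mathcal{T}}$ into a linear order $P_{in}$ with ends $A$ and $\Gamma$ that still contains $e$; symmetrically, deleting $e_{out}$ opens $\outHamilton{\mathcal{T}}$ into a linear order $P_{out}$ with ends $B$ and $\Gamma$ that still contains $e$.

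Since $P_{in}$ and $P_{out}$ have in common only the edge $e$ (hence $A$ and $B$) and the vertex $\Gamma$, their union, with $e$ kept once, is a single cyclic order visiting every vertex of $G$ exactly once:
\[
H:\quad A \to \bigl(\text{vertices of }\outGraph{\mathcal{T}}\bigr) \to \Gamma \to \bigl(\text{vertices of }\inGraph{\mathcal{T}}\bigr) \to B \xrightarrow{\;e\;} A.
\]
Note $H$ uses $e$ but neither $e_{in}$ nor $e_{out}$; the usual symmetric-difference merge of two cycles sharing a single edge does not apply here, because both cycles also pass through $\Gamma$, which that operation would leave of degree four. It remains to show that the order $H$ is subhamiltonian.

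For this I would take the two-page book embeddings witnessing $\inHamilton{\mathcal{T}}$ and $\outHamilton{\mathcal{T}}$ and carry their page assignments over to the spine order $H$. The decisive observation is that along $H$ the vertices of $\inGraph{\mathcal{T}}$ and those of $\outGraph{\mathcal{T}}$ occupy two disjoint arcs, separated by the single vertex $\Gamma$ on one side and by the edge $e$ on the other. Hence the relative order of any two inside (resp. outside) vertices is preserved, so the inside and the outside page assignments remain crossing-free on their own parts, and an inside edge can interleave with an outside edge only when both are incident to a vertex of $\mathcal{T}$.

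The main obstacle is to resolve these few remaining cross-side conflicts. A short case analysis shows that the only interleaving pairs are $(A,A_{in})$ with $(B,B_{out})$, together with the two triangle edges $e_{in}$ and $e_{out}$ (which $H$ no longer uses but which must still be drawn), crossing $(B,B_{out})$ and $(A,A_{in})$ respectively; their conflict graph is the single path $e_{out}$--$(A,A_{in})$--$(B,B_{out})$--$e_{in}$, and is therefore bipartite. Because flipping the two pages of either witness embedding is free, I can first arrange $(A,A_{in})$ and $(B,B_{out})$ on different pages and then place $e_{in}$ and $e_{out}$ on the complementary pages. This yields a two-page book embedding of $G$ with spine $H$, establishing that $G$ is subhamiltonian.
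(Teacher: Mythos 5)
Your construction is essentially the paper's own proof: deleting $e_{in}$ and $e_{out}$ while keeping $e$ once produces exactly the cycle the paper obtains by removing all edges of $\mathcal{T}$ from both cycles, joining the resulting paths $P_{out}=B\rightsquigarrow\Gamma$ and $P_{in}=\Gamma\rightsquigarrow A$ at $\Gamma$, and reinserting $e$. Your explicit page-assignment verification (the conflict path $e_{out}$--$(A,A_{in})$--$(B,B_{out})$--$e_{in}$, resolved by page flips) is correct and merely spells out the planarity the paper asserts by reference to Fig.~\ref{fig:merging_cycles}; the one imprecision is that along $H$ the inside vertices appear in the \emph{reverse} of their order on $\inHamilton{\mathcal{T}}$, not the same order, which is harmless since crossings depend only on cyclic alternation, an order-reversal invariant.
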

\begin{proof}
Let \WLOG $e = (A,B)$, $e_{in}=(B,\Gamma)$ and $e_{out}=(A,\Gamma) $
as illustrated in Fig.~\ref{fig:merging_cycles}. The result of
removing the edges of $\mathcal{T}$ from both cycles are two paths
$P_{out} = B \rightsquigarrow \Gamma$ and $P_{in} = \Gamma
\rightsquigarrow A$. Joining them at $\Gamma$ and inserting $e$
yields a subhamiltonian cycle.
\end{proof}

\begin{figure}[t]
\centering
    \begin{minipage}[b]{.20\textwidth}
        \centering
        \subfloat[\label{fig:sep_tri_dummy}{$v_\mathcal{T}$ in $\outGraphDummy{\mathcal{T}}$}]
        {\includegraphics[width=\linewidth,page=3]{images/sep_tri}}
    \end{minipage}
    \hfill
      \begin{minipage}[b]{.20\textwidth}
        \centering
        \subfloat[\label{fig:sep_tri_dummy_T}{$\mathcal{T}$ in $\outGraphInc{\mathcal{T}}$}]
        {\includegraphics[width=\linewidth,page=4]{images/sep_tri}}
       \end{minipage}
       \hfill
    \begin{minipage}[b]{.15\textwidth}
        \centering
        \subfloat[\label{fig:sep_tri_inside_cycle}{$v'_\mathcal{T}$ in $\inGraphDummy{\mathcal{T}}$}]
        {\includegraphics[width=\linewidth,page=5]{images/sep_tri}}
    \end{minipage}
    \hfill
    \begin{minipage}[b]{.20\textwidth}
        \centering
        \subfloat[\label{fig:sep_tri_result_cycle}{$\mathcal{T}$ in $\inGraphInc{\mathcal{T}}$}]
        {\includegraphics[width=\linewidth,page=6]{images/sep_tri}}
    \end{minipage}
    \begin{minipage}[b]{.20\textwidth}
        \centering
        \subfloat[\label{fig:sep_tri_result_cycle}{$G$ with $\mathcal{T}$ and $H$}]
        {\includegraphics[width=\linewidth,page=7]{images/sep_tri}}
    \end{minipage}
    \caption{
    (a)~Subhamiltonian cycle $\outHamiltonDummy{\mathcal{T}}$ in $\outGraphDummy{\mathcal{T}}$ containing $v_\mathcal{T}$.
    (b)~Augmenting $\outHamiltonDummy{\mathcal{T}}$ yields $\outHamilton{\mathcal{T}}$ containing edges $e_1 = (\Gamma,A)$ and $e_2 = (A,B)$.
    (c)~Dummy vertex $v'_\mathcal{T}$ as replacement for $\mathcal{T}$ in $\inGraphDummy{\mathcal{T}}$ and a cycle $\inHamiltonDummy{\mathcal{T}}$.
    (d)~Rerouting $\inHamiltonDummy{\mathcal{T}}$ through $\mathcal{T}$ resulting in $\inHamilton{\mathcal{T}}$ with edges $e'_1 = (\Gamma,B)$ and $e_2 = (A,B)$.
    (d)~The result of merging $\inHamilton{\mathcal{T}}$ and $\outHamilton{\mathcal{T}}$ into a cycle $H$ for $G$.}
\end{figure}

It remains to show that we can always find two cycles that satisfy
the requirements of Lemma~\ref{lemma:cycle_merging}. In the
following, we neglect the degenerated case of
Lemma~\ref{lemma:abc_distinct}, where $\outGraph{\mathcal{T}}$ or
$\inGraph{\mathcal{T}}$ is a single vertex, because finding a cycle
in that case is trivial. Consider for example
$\outGraphInc{\mathcal{T}}$, for $\inGraphInc{\mathcal{T}}$ a
symmetric argument holds. To obtain $\outHamilton{\mathcal{T}}$, we
temporarily replace $\mathcal{T}$ in $\outGraphInc{\mathcal{T}}$
with a single vertex $v_\mathcal{T}$ as depicted in
Fig.~\ref{fig:sep_tri_dummy}. The resulting graph
$\outGraphDummy{\mathcal{T}}$ remains 4-planar and triconnected,
because $deg(v_\mathcal{T}) = 3$ by construction and any path via
$\mathcal{T}$ can use $v_\mathcal{T}$ instead. One may argue that
this operation may introduce additional separating triangles.
However, such a triangle must contain $v_\mathcal{T}$ and,
therefore, $deg(v_\mathcal{T}) = 4$, a contradiction. Now let us
assume that $\outHamiltonDummy{\mathcal{T}}$ is a subhamiltonian
cycle for $\outGraphDummy{\mathcal{T}}$. The idea is to reinsert
$\mathcal{T}$ and reroute $\outHamiltonDummy{\mathcal{T}}$ through
$\mathcal{T}$ such that the resulting cycle
$\outHamilton{\mathcal{T}}$ contains two edges $e_1,e_2 \in
E(\mathcal{T})$.

\begin{lemma}
\label{lemma:cycle_rerouting} Let $G$ be a triconnected 4-planar
graph, $\mathcal{T}$ a separating triangle. Furthermore, let
$\outGraphDummy{\mathcal{T}}$ denote the graph resulting from
replacing $\mathcal{T}$ by a vertex $v_\mathcal{T}$ in
$\outGraphInc{\mathcal{T}}$. A subhamiltonian cycle
$\outHamiltonDummy{\mathcal{T}}$ for $\outGraphDummy{\mathcal{T}}$
can be augmented to a subhamiltonian cycle
$\outHamilton{\mathcal{T}}$ for $\outGraphInc{\mathcal{T}}$ such
that it contains two edges of $\mathcal{T}$, i.e.,
$E(\outHamilton{\mathcal{T}}) \cap E(\mathcal{T}) = \{e_1, e_2\}$.
If $\outHamiltonDummy{\mathcal{T}}$ crosses every face of
$\outGraphDummy{\mathcal{T}}$ at most once, one may choose any pair
$e_1, e_2 \in E(\mathcal{T})$ to lie on $\outHamilton{\mathcal{T}}$.
\end{lemma}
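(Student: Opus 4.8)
The plan is to analyze how $\outHamiltonDummy{\mathcal{T}}$ traverses the degree-three vertex $v_\mathcal{T}$ and then to ``blow up'' $v_\mathcal{T}$ back into the triangle $\mathcal{T}$, threading the cycle through all three triangle vertices while spending exactly two triangle edges. Since $v_\mathcal{T}$ has only the three neighbours $A_{out},B_{out},\Gamma_{out}$ and lies on exactly three faces, call them $f_1,f_2,f_3$, the cycle reaches $v_\mathcal{T}$ through two of these three directions (ports). A port is either a real edge $(v_\mathcal{T},X_{out})$ or an augmenting chord lying in one incident face; in the latter case, after the blow-up that chord can be re-anchored to one of the two triangle vertices bounding the face, which reduces us to the case of two real ports. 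So I would assume \WLOG that the cycle enters and leaves at $A_{out}$ and $B_{out}$, with $f_1$ the face between these two ports, $f_2$ between $B_{out}$ and $\Gamma_{out}$, and $f_3$ between $\Gamma_{out}$ and $A_{out}$; each $f_i$ survives the blow-up with a triangle edge replacing $v_\mathcal{T}$ on its boundary.

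For the first statement I would expand $v_\mathcal{T}$ into $\mathcal{T}$ so that each $X_{out}$ becomes incident to $X$, and reroute the segment $A_{out}-v_\mathcal{T}-B_{out}$ into $A_{out}-A-\Gamma-B-B_{out}$. This path stays inside the region formerly occupied by $v_\mathcal{T}$ together with the (empty) interior face of $\mathcal{T}$, so planarity is preserved, all three triangle vertices now lie on the cycle, and exactly the two triangle edges incident to the ``middle'' vertex $\Gamma$ are used. This already yields $\outHamilton{\mathcal{T}}$ with $\{e_1,e_2\}=\{(A,\Gamma),(\Gamma,B)\}$, and it needs no new augmenting edge, proving the first claim.

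For the second statement, observe that the two triangle edges produced are precisely those incident to the middle vertex, so prescribing $\{e_1,e_2\}$ amounts to prescribing which triangle vertex is the middle one; the natural routing makes it the vertex $\Gamma$ whose port is unused. To make $B$ the middle vertex instead, I keep the same two connections at $A_{out}$ and $B_{out}$ but thread the triangle as $A_{out}-A-B-\Gamma$ and close the local path with the augmenting chord $(\Gamma,B_{out})$; since the blown-up $f_2$ has the triangle edge $(B,\Gamma)$ on its boundary, both $\Gamma$ and $B_{out}$ lie on $\partial f_2$ and the chord can be drawn inside $f_2$. Symmetrically, middle vertex $A$ uses the chord $(\Gamma,A_{out})$ inside $f_3$. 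Thus each target pair requires at most one augmenting chord, placed in one prescribed incident face, and this is exactly where the hypothesis enters: because $\outHamiltonDummy{\mathcal{T}}$ crosses each of $f_1,f_2,f_3$ at most once, the designated face carries at most one pre-existing chord that the new chord could conflict with.

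The hard part will be this last step. If the designated face carries no chord, or one that does not interleave with the new chord along the face boundary, both can be drawn disjointly and we are done. The single remaining obstacle is the configuration in which the designated face already carries one chord that would interleave with $(\Gamma,B_{out})$ (resp. $(\Gamma,A_{out})$). I expect to resolve it by a short case analysis on the endpoints of that chord relative to $v_\mathcal{T}$, $B_{out}$ and $\Gamma_{out}$ on the boundary of the face, rerouting the cycle so that $\Gamma$ is threaded into the existing crossing and only one planar chord survives; the at-most-once hypothesis is what bounds this analysis to a single interfering chord and thus keeps it finite.
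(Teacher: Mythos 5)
Your overall architecture (classify how $\outHamiltonDummy{\mathcal{T}}$ passes through $v_\mathcal{T}$, then blow $v_\mathcal{T}$ up into $\mathcal{T}$ and thread the cycle through all three vertices using two triangle edges) is the same as the paper's, and your first claim -- route through the ``unused'' vertex $\Gamma$ as middle vertex, which needs no new chord -- matches the paper's universally solvable pair $(A,\Gamma),(B,\Gamma)$. But your ``re-anchoring'' WLOG, which converts a chord-port into a real port at $A_{out}$ or $B_{out}$, is not valid, and it erases exactly the distinction on which the lemma turns. A chord entering $v_\mathcal{T}$ through a face $f$ can, after the blow-up, only be re-anchored to one of the \emph{two} triangle vertices bounding $f$; the third vertex is inaccessible from $f$. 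So if both ports are chords into the \emph{same} face (the paper's Case~3(ii)), the cycle is forced to enter at $A$ and exit at $B$, the middle vertex is forced to be $\Gamma$, and the pairs $(A,B),(A,\Gamma)$ and $(A,B),(B,\Gamma)$ are \emph{unachievable} -- the paper even builds its counterexample around this. Under your WLOG these chord-ports become real ports at $A_{out},B_{out}$, after which your second-claim construction (chord $(\Gamma,B_{out})$ inside $f_2$, resp.\ $(\Gamma,A_{out})$ inside $f_3$) would ``prove'' that any pair is achievable in this configuration too, which is false: in the same-face case the exit attachment is a chord to some $u$ on the boundary of $f_1$, not the edge $(B,B_{out})$, and $\Gamma$ does not lie on the boundary of $f_1$, so no closing chord exists. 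Relatedly, you misidentify the role of the hypothesis: its essential function is that entering \emph{and} leaving $v_\mathcal{T}$ through one face constitutes two crossings of that face, so the at-most-once condition excludes Case~3(ii) outright -- not merely that it bounds the number of interfering chords in your designated face to one.

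Second, the step you yourself flag as ``the hard part'' -- the prescribed new chord interleaving with an existing chord in its face -- is left as a hope, and it is precisely where the paper's actual content lies. In the paper's Cases~2 and~3(i), whenever a new chord is placed into a face that already carries a port chord, the rerouting simultaneously \emph{removes} that port chord (the sequence replacing $A_{out},v_\mathcal{T},f_1$ by, e.g., $A_{out},A,B,\Gamma,f_2$ reattaches the very chord that occupied the face), and the at-most-once hypothesis guarantees that the port chord was the only chord crossing that face; for chords whose span along the blown-up face boundary contains only newly created triangle vertices (such as $(A_{out},\Gamma)$ in Case~1), no pre-existing chord can interleave at all. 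Carrying this out correctly requires the exhaustive enumeration over port types (edge--edge, edge--face with the predecessor incident or not to the successor's face, face--face with distinct faces) that the paper performs; your proposal defers it, so as it stands the proof of the second claim is incomplete, and the reduction it rests on is unsound.
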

\begin{proof}
To prove the claim, it is sufficient to consider every combination
of $e_1, e_2$ and the location of the predecessor and successor of
$v_\mathcal{T}$ in $\outHamiltonDummy{\mathcal{T}}$. 
In the following, we enumerate and describe in detail all possible
cases that occur when augmenting $\outHamiltonDummy{\mathcal{T}}$
such that the resulting cycle $\outHamilton{\mathcal{T}}$ contains
two edges $e_1, e_2$ of $\mathcal{T}$. 
To avoid any redundancies, we omit
symmetric cases and consider for the same reason a directed cycle.
We distinguish between three main cases depending on the location of
the predecessor and successor of $v_\mathcal{T}$ in
$\outHamiltonDummy{\mathcal{T}}$.

\begin{figure}[h!]
\centering
    \begin{minipage}[b]{.32\textwidth}
        \centering
        \subfloat[\label{fig:dummy_edge_edge}{Case 1}]
        {\includegraphics[page=1]{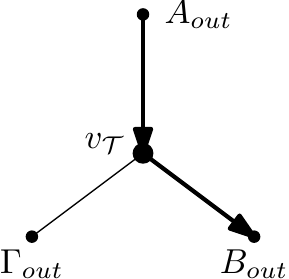}}
    \end{minipage}\hfill
      \begin{minipage}[b]{.32\textwidth}
        \centering
        \subfloat[\label{fig:dummy_edge_face}{Case 2}]
        {\includegraphics[page=2]{images/sep_tri_reroute_cases}}
       \end{minipage} \hfill
    \begin{minipage}[b]{.32\textwidth}
        \centering
        \subfloat[\label{fig:dummy_face_face}{Case 3}]
        {\includegraphics[page=3]{images/sep_tri_reroute_cases}}
       \end{minipage}\hfill
        \caption{The three main cases at $v_\mathcal{T}$: (a)~The cycle uses two of the three edges incident to $v_\mathcal{T}$.
        (b)~The cycle enters via an edge and leaves through a face.
        (c)~Predecessor and successor are not adjacent to $v_\mathcal{T}$.}
        \label{fig:sep_tri_reroute_cases}
\end{figure}

\begin{figure}[t]
\centering
    \begin{minipage}[b]{.32\textwidth}
        \centering
        \subfloat[\label{fig:reroute_edge_edge_ACB}{$(A,\Gamma)$ and $(B,\Gamma)$}]
        {\includegraphics[page=4]{images/sep_tri_reroute_cases}}
    \end{minipage}
    \hfill
    \begin{minipage}[b]{.32\textwidth}
        \centering
        \subfloat[\label{fig:reroute_edge_edge_CAB}{$(A,B)$ and $(A,\Gamma)$}]
        {\includegraphics[page=5]{images/sep_tri_reroute_cases}}
       \end{minipage} \hfill
    \begin{minipage}[b]{.32\textwidth}
        \centering
        \subfloat[\label{fig:reroute_edge_edge_ABC}{$(A,B)$ and $(B,\Gamma)$}]
        {\includegraphics[page=6]{images/sep_tri_reroute_cases}}
    \end{minipage}
    \hfill
    \caption{
    (a)~Dummy vertex $v_T$ is replaced by the sequence $A,\Gamma,B$ to obtain a cycle with the edges $(A,\Gamma)$ and $(B,\Gamma)$.
    (b)~Sequence $\Gamma,A,B$ yields a cycle with $(A,B)$ and $(A,\Gamma)$ where $(A_{out},\Gamma)$ requires it to cross a face.
    (c)~Augmenting with $A,B,\Gamma$ results in a cycle containing $(A,B)$ and $(B,\Gamma)$.}
    \label{fig:sep_tri_reroute_case_1}
    \begin{minipage}[b]{.32\textwidth}
        \centering
        \subfloat[\label{fig:reroute_edge_face_ACB}{$(A,\Gamma)$ and $(B,\Gamma)$}]
        {\includegraphics[page=7]{images/sep_tri_reroute_cases}}
    \end{minipage}
    \hfill
    \begin{minipage}[b]{.32\textwidth}
        \centering
        \subfloat[\label{fig:reroute_edge_face_CAB}{$(A,B)$ and $(A,\Gamma)$}]
        {\includegraphics[page=8]{images/sep_tri_reroute_cases}}
    \end{minipage}
    \hfill
    \begin{minipage}[b]{.32\textwidth}
        \centering
        \subfloat[\label{fig:reroute_edge_face_ABC}{$(A,B)$ and $(B,\Gamma)$}]
        {\includegraphics[page=9]{images/sep_tri_reroute_cases}}
    \end{minipage}
    \hfill
    \caption{
    In (a)~and~(b) the same sequences as before are used to obtain a cycle containing $(A,\Gamma), (B,\Gamma)$ and $(A,B), (A,\Gamma)$, respectively.
    Subcase-specific links are drawn in blue and red. (c) A more complicated case requiring one additional crossing of a face from $A_{out}$ to $\Gamma$.}
    \label{fig:sep_tri_reroute_case_2}
     \begin{minipage}[b]{.32\textwidth}
        \centering
        \subfloat[\label{fig:reroute_face_face_ACB}{$(A,\Gamma)$ and $(B,\Gamma)$}]
        {\includegraphics[page=10]{images/sep_tri_reroute_cases}}
    \end{minipage}\hfill
    \begin{minipage}[b]{.32\textwidth}
        \centering
        \subfloat[\label{fig:reroute_face_face_CAB}{$(A,B)$ and $(A,\Gamma)$}]
        {\includegraphics[page=11]{images/sep_tri_reroute_cases}}
    \end{minipage} \hfill
    \begin{minipage}[b]{.32\textwidth}
        \centering
        \subfloat[\label{fig:reroute_face_face_ABC}{$(A,B)$ and $(B,\Gamma)$}]
        {\includegraphics[page=12]{images/sep_tri_reroute_cases}}
    \end{minipage}\hfill
    \caption{(a)~Both subcases have a solution. (b,c) When the cycle uses two distinct faces ($f_1\rightsquigarrow f_3$) a solution for both pairs of edges can be found. If only one face is used ($f_1\rightsquigarrow f_2$), then no solution exists for the edges $(A,B), (A,\Gamma)$ and $(A,B), (B,\Gamma)$.}
    \label{fig:sep_tri_reroute_case_3}
\end{figure}

\begin{description}
\item[Case 1] ($\text{Edge} \rightsquigarrow v_\mathcal{T} \rightsquigarrow
\text{Edge}$): Both the predecessor and successor of $v_\mathcal{T}$
in $\outHamiltonDummy{\mathcal{T}}$ are adjacent to $v_\mathcal{T}$,
hence, the cycle $\outHamiltonDummy{\mathcal{T}}$ contains two edges
incident to $v_\mathcal{T}$, let us say $(A_{out}, v_\mathcal{T}),
(v_\mathcal{T}, B_{out})$ as illustrated in
Fig.~\ref{fig:dummy_edge_edge}.
Fig.~\ref{fig:sep_tri_reroute_case_1} depicts how
$\outHamiltonDummy{\mathcal{T}}$ can be augmented such that every
pair of edges of $T$ is contained in $\outHamilton{\mathcal{T}}$.
Notice that while for the pair $(A,\Gamma),(B,\Gamma)$ in
Fig~\ref{fig:reroute_edge_edge_ACB} no face crossing is required,
for the two other pairs one additional face crossing is introduced
(Fig.~\ref{fig:reroute_edge_edge_CAB} and
\ref{fig:reroute_edge_edge_ABC}).

\item[Case 2] ($\text{Edge} \rightsquigarrow v_\mathcal{T} \rightsquigarrow
\text{Face}$): In this case, the predecessor, say $A_{out}$, is
adjacent to $v_\mathcal{T}$, while the successor is not. Since
$\outHamiltonDummy{\mathcal{T}}$ is a subhamiltonian cycle, the
successor is incident to one of the three faces incident to
$v_\mathcal{T}$. To cover all possible combinations, we distinguish
between whether (i) the predecessor $A_{out}$ is incident to that
face or (ii) not. Fig.~\ref{fig:dummy_edge_face} illustrates both
configurations, where $f_1$ denotes the successor located at a face
of type (i), and $f_2$ the successor that is incident to the face at
the opposite side (ii). For both subcases, the rerouting rules for
the first two edge pairs are relatively simple, since they follow
the basic principle of the first case, see
Fig.~\ref{fig:reroute_edge_face_ACB}
and~\ref{fig:reroute_edge_face_CAB}. However, the third pair is more
complicated. For (i) the sequence $A_{out}, v_\mathcal{T} , f_1$ is
replaced by $A_{out}, \Gamma, B, A, f_1$, whereas for (ii) $A_{out},
v_\mathcal{T} , f_1$ is substituted by $A_{out}, A, B, \Gamma, f_2$
(Fig.~\ref{fig:reroute_edge_face_ABC}).

\item[Case 3] $\text{Face} \rightsquigarrow v_\mathcal{T} \rightsquigarrow
\text{Face}$. Both predecessor and successor of $v_\mathcal{T}$ in
$\outHamiltonDummy{\mathcal{T}}$ are not adjacent. Hence, the cycle
enters and leaves $v_T$ through a face. Again to cover all
possibilities, we have to deal with two subcases: (i) the two faces
are distinct or (ii) the cycle $\outHamiltonDummy{\mathcal{T}}$
leaves through the same face as it enters. Rerouting
$\outHamiltonDummy{\mathcal{T}}$ in the first subcase (i) works for
all three different edge pairs, even without introducing any new
face crossings. The three solutions for (i) are displayed in
Fig.~\ref{fig:sep_tri_reroute_case_3}, where the predecessor is
labeled by $f_1$ and the successor by $f_3$. So far we have been
able to resolve every configuration such that any pair of edges can
be selected to be part of $\outHamilton{\mathcal{T}}$. However, the
interesting case is subcase (ii), where the predecessor $f_1$ and
successor $f_2$ are incident to the same face. While there is a
solution for the edge pair $(A,\Gamma), (B,\Gamma)$ as displayed in
Fig.~\ref{fig:reroute_face_face_ACB}, the two remaining edge pairs
create unresolvable configurations, see
Fig.~\ref{fig:reroute_face_face_CAB}
and~\ref{fig:reroute_face_face_ABC}, respectively. This dilemma is
caused by the fact that $\outHamilton{\mathcal{T}}$ has to either
enter or leave $\mathcal{T}$ via $\Gamma$. However, $\Gamma$ is not
accessible from neither $f_1$ nor $f_2$ without destroying
planarity.
\end{description}

We may summarize the solutions for the different cases as follows:
As long as the cycle does not enter and leave $v_\mathcal{T}$ via
the same face, we can always choose two edges of $\mathcal{T}$ in
advance and reroute the cycle such that these two edges become part
of $H$. 
\end{proof}
At this point it is tempting to show that we can always find
a cycle that avoids crossing a face twice. By using
Lemma~\ref{lemma:linear_time_one_crossing}, we may obtain such a
cycle in a triconnected graph with no separating triangles. This
raises the question if we can use it and apply the described rules
to obtain a cycle through multiple triangles for which we may
specify two edges in advance. We answer this question negatively
with a small counterexample.

\begin{figure}[t]
    \centering
    \begin{minipage}[b]{.32\textwidth}
        \centering
        \subfloat[\label{fig:sep_tri_reroute_counterexample_0}{$G$ with $\mathcal{T}$ and $\mathcal{T}'$}]
        {\includegraphics[page=1]{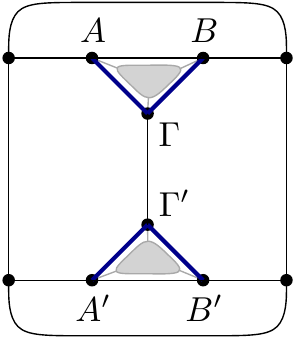}}
    \end{minipage}\hfill
      \begin{minipage}[b]{.32\textwidth}
        \centering
        \subfloat[\label{fig:sep_tri_reroute_counterexample_1}{Cycle from Lemma~\ref{lemma:linear_time_one_crossing}}]
        {\includegraphics[page=2]{images/sep_tri_reroute_counterexample}}
       \end{minipage} \hfill
    \begin{minipage}[b]{.32\textwidth}
        \centering
        \subfloat[\label{fig:sep_tri_reroute_counterexample_2}{After rerouting at $\mathcal{T}$ }]
        {\includegraphics[page=3]{images/sep_tri_reroute_counterexample}}
    \end{minipage}\hfill
    \caption{
    (a)~Two separating triangles $\mathcal{T}$ and $\mathcal{T}'$ with vertices $V(\mathcal{T}) = \{A,B,\Gamma\}$ and $V(\mathcal{T}') = \{A',B',\Gamma'\}$ and for each two prescribed edges (bold, blue).
    (b)~$\mathcal{T}$ and $\mathcal{T}'$ replaced by $v_\mathcal{\mathcal{T}}$ and $v_{\mathcal{T}'}$, every non triangular face is stellated by inserting additional vertices (squares) and
    edges (dashed), and a (sub)hamiltonian cycle $H'$ (bold, red).
    (c)~Result of applying the corresponding rule to $\mathcal{T}$ creating an unresolvable configuration for $\mathcal{T}'$.}
    \label{fig:sep_tri_reroute_counterexample}
\end{figure}

Consider the triconnected 4-planar graph $G$ shown in
Fig.~\ref{fig:sep_tri_reroute_counterexample_0}. It contains two
separating triangles $\mathcal{T}$ and $\mathcal{T}'$ with vertices
$V(\mathcal{T}) = \{A,B,\Gamma\}$ and $V(\mathcal{T}') =
\{A',B',\Gamma'\}$, respectively. In every triangle two edges (bold,
blue) are prescribed to lie on the augmented subhamiltonian cycle
$H$. We proceed as described; both triangles are replaced by a dummy
vertex $v_{\mathcal{T}}$ and $v_{\mathcal{T}'}$, respectively. The
resulting graph (Fig.~\ref{fig:sep_tri_reroute_counterexample_1}) is
triconnected 4-planar and free of separating triangles. The squares
and dashed lines  correspond to the dummy vertices and edges
inserted by the technique of Kainen et al.~\cite{Kainen2007835} as
described in Lemma~\ref{lemma:linear_time_one_crossing}. We may now
compute a hamiltonian cycle $H'$ by applying the linear-time
algorithm of Chiba et al.~\cite{chibanishizeki89}. Assume the result
is the bold cycle in
Fig.~\ref{fig:sep_tri_reroute_counterexample_1}. Clearly the cycle
crosses every face at most once after we remove the dummy vertices
inserted by the technique of Kainen et al.~\cite{Kainen2007835}. We
reinsert $\mathcal{T}$ and apply the corresponding rule, i.e., the
augmentation displayed in Fig.~\ref{fig:reroute_edge_face_CAB}. The
result of augmenting such that the two marked edges of
$\mathcal{T}$, namely $(A,\Gamma), (B,\Gamma)$, lie on the cycle is
displayed in Fig.~\ref{fig:sep_tri_reroute_counterexample_2}. Notice
that we are forced to enter $\mathcal{T}$ via $A$ and exit by $B$.
As a result, the cycle crosses one face twice. Moreover,
$\mathcal{T}'$ must be entered and left through the same face. The
corresponding rule, illustrated in
Fig.~\ref{fig:reroute_face_face_CAB}, implies that we cannot reroute
the cycle such that it contains the edges $(A',\Gamma'),
(B',\Gamma')$. However, we may lift the restriction, use the only
rule applicable in this case (Fig.~\ref{fig:reroute_face_face_ACB}),
and obtain a cycle with edges $(A',\Gamma'), (A',B')$ instead.
Notice that the graph in this example has even a hamiltonian cycle $H$ through
the requested edges. However, the purpose of the example is to
demonstrate that for an arbitrary chosen subhamiltonian cycle, the
described rules cannot always be applied. We may conclude that when
using Lemma~\ref{lemma:linear_time_one_crossing}, we may choose for
one (the first) triangle two edges because the initial cycle visits
every face at most once. From there on, we can only guarantee that
two unknown edges are part of the final cycle. In the following we will benefit from this observation.

Recall the aforementioned single-separating-triangle scenario. Both
$\outGraphInc{\mathcal{T}}$ and $\inGraphInc{\mathcal{T}}$ are free
of separating triangles. Therefore, we may construct two graphs
$\outGraphDummy{\mathcal{T}}, \inGraphDummy{\mathcal{T}}$ by
replacing $\mathcal{T}$ with dummy vertices. Applying
Lemma~\ref{lemma:linear_time_one_crossing} to them yields two
subhamiltonian cycles $\outHamiltonDummy{\mathcal{T}}$ and
$\inHamiltonDummy{\mathcal{T}}$, both crossing every face of
$\outGraphDummy{\mathcal{T}}$ and $\inGraphDummy{\mathcal{T}}$ at
most once. Hence, we may augment them with the aid of
Lemma~\ref{lemma:cycle_rerouting} such that they contain each two
edges of $\mathcal{T}$. By choosing the combination of the edges such
that $\outHamilton{\mathcal{T}}$ and $\inHamilton{\mathcal{T}}$ meet
the requirements of Lemma~\ref{lemma:cycle_merging}, we can
merge them into a single subhamiltonian cycle $H$ for $G$.

While the property that $\outGraphDummy{\mathcal{T}}$ and
$\inGraphDummy{\mathcal{T}}$ are both free of separating triangles
enables us to conveniently choose two edges for each cycle
$\outHamilton{\mathcal{T}}, \inHamilton{\mathcal{T}}$, this only
works for a single separating triangle. However, a closer look
reveals that it is sufficient to have a choice for either
$\outHamilton{\mathcal{T}}$ or  $\inHamilton{\mathcal{T}}$, not
necessarily both of them. The idea is to first augment the cycle for
which we do not have a choice to see which edges of $\mathcal{T}$
are part of it, then we choose the edges for the second cycle
accordingly. We summarize the idea as the main result of this
section and describe it in a more formal manner in form of a proof.

\begin{theorem}\label{theorem:triconnected}
Every triconnected 4-planar graph is subhamiltonian.
\end{theorem}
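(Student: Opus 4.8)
The plan is to induct on the number $k$ of separating triangles of $G$, using as induction hypothesis that every triconnected 4-planar graph with fewer than $k$ separating triangles is subhamiltonian. The base case $k=0$ is immediate from Lemma~\ref{lemma:linear_time_one_crossing}, since a triconnected planar graph with no separating triangle already admits a subhamiltonian cycle. For the inductive step I would first exploit that, by Lemma~\ref{lemma:triangle_disjoint}, the separating triangles of $G$ are pairwise vertex disjoint; being planar cycles, they then form a laminar (nested) family. This lets me select an \emph{innermost} separating triangle $\mathcal{T}$, i.e.\ one with no separating triangle in its interior, so that $\inGraph{\mathcal{T}}$ contains no separating triangle while all remaining $k-1$ triangles lie in the exterior.

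Having fixed $\mathcal{T}$, I would split the graph along it. On the inside, replace $\mathcal{T}$ by a dummy vertex to obtain $\inGraphDummy{\mathcal{T}}$; exactly as in the single-triangle discussion this stays triconnected and 4-planar, and since $\mathcal{T}$ is innermost and the collapse introduces no new separating triangle (such a triangle would force the degree-three dummy to have degree four), $\inGraphDummy{\mathcal{T}}$ is free of separating triangles. On the outside, replace $\mathcal{T}$ by a dummy vertex to obtain $\outGraphDummy{\mathcal{T}}$, which inherits precisely the remaining $k-1$ separating triangles, so the induction hypothesis applies to it and yields a subhamiltonian cycle $\outHamiltonDummy{\mathcal{T}}$.

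The decisive point, highlighted by the counterexample preceding the theorem, is that I may \emph{freely choose} the two triangle edges on only one of the two cycles to be merged. I would therefore process the ``no-choice'' side first: feed the cycle $\outHamiltonDummy{\mathcal{T}}$ from the induction hypothesis into Lemma~\ref{lemma:cycle_rerouting} to augment it to a cycle $\outHamilton{\mathcal{T}}$ for $\outGraphInc{\mathcal{T}}$ containing \emph{some} pair $\{x,y\}$ of edges of $\mathcal{T}$ (I cannot control which; in the worst face-face situation they are forced to be $(A,\Gamma),(B,\Gamma)$). I then turn to the ``choice'' side: because $\inGraphDummy{\mathcal{T}}$ has no separating triangle, Lemma~\ref{lemma:linear_time_one_crossing} gives a subhamiltonian cycle crossing every face at most once, and so the second, free-choice clause of Lemma~\ref{lemma:cycle_rerouting} augments it to a cycle $\inHamilton{\mathcal{T}}$ for $\inGraphInc{\mathcal{T}}$ through \emph{any} prescribed pair. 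Letting $z$ be the third edge of $\mathcal{T}$, I would prescribe $\inHamilton{\mathcal{T}}$ to use $\{z,x\}$, so the two cycles share exactly $x$ and together cover all three edges of $\mathcal{T}$. This is precisely the hypothesis of Lemma~\ref{lemma:cycle_merging}, which merges them into a single subhamiltonian cycle for $G$.

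The hard part is not any individual step but the sequencing forced by the counterexample: one must never require free edge choice on both sides of a triangle simultaneously. The laminar structure guaranteed by Lemma~\ref{lemma:triangle_disjoint} is exactly what makes this achievable, since it supplies an innermost triangle whose interior is triangle-free (hence governed by Lemma~\ref{lemma:linear_time_one_crossing} and open to free choice), while every other triangle is absorbed by the induction on the exterior side. I would finally dispose separately of the degenerate case of Lemma~\ref{lemma:abc_distinct}, where $\inGraph{\mathcal{T}}$ collapses to a single vertex, as routing a cycle through such a trivial interior is immediate.
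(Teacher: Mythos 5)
Your proposal is correct and follows essentially the same route as the paper's proof: induction on the number of separating triangles, selection of an innermost triangle $\mathcal{T}$ with $\tau(\inGraphInc{\mathcal{T}})=0$, applying the induction hypothesis on the dummy-collapsed exterior $\outGraphDummy{\mathcal{T}}$ (where Lemma~\ref{lemma:cycle_rerouting} yields two uncontrolled edges), reserving the free-choice clause of Lemma~\ref{lemma:cycle_rerouting} for the triangle-free interior via Lemma~\ref{lemma:linear_time_one_crossing}, and merging the two cycles with Lemma~\ref{lemma:cycle_merging}. Your explicit laminarity justification for the existence of an innermost triangle is a minor addition the paper leaves implicit; otherwise the arguments coincide.
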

\begin{proof}
Let $G$ denote a triconnected 4-planar graph and $\tau(G)$ the
number of separating triangles in $G$. We prove by induction and
claim that for any $\tau(G) \geq 0$ we can compute a subhamiltonian
cycle $H$ for $G$. \emph{Base case}: Since $\tau(G) = 0$, we can
directly apply Lemma~\ref{lemma:linear_time_one_crossing}.
\emph{Inductive case:} For $\tau(G) > 0$, we pick a separating
triangle $\mathcal{T}$ such that $\tau(\inGraphInc{\mathcal{T}}) =
0$. Let $\outGraphDummy{\mathcal{T}}$ be the result of replacing
$\mathcal{T}$ by $v_\mathcal{T}$ in $\outGraphInc{\mathcal{T}}$.
Notice that $\tau(\outGraphDummy{\mathcal{T}}) = \tau(G) - 1$ holds.
Hence, by induction hypothesis, $\outGraphDummy{\mathcal{T}}$ has a
subhamiltonian cycle $\outHamiltonDummy{\mathcal{T}}$. We reinsert
$\mathcal{T}$ and augment $\outHamiltonDummy{\mathcal{T}}$ such that
the result $\outHamilton{\mathcal{T}}$ contains two (arbitrary)
edges $e_1, e_2$ of $\mathcal{T}$. In a similar way, we replace
$\mathcal{T}$ in $\inGraphInc{\mathcal{T}}$ by $v'_\mathcal{T}$ to
obtain $\inGraphDummy{\mathcal{T}}$. Since
$\tau(\inGraphInc{\mathcal{T}}) = \tau(\inGraphDummy{\mathcal{T}}) =
0$ holds, we can apply Lemma~\ref{lemma:linear_time_one_crossing} to
$\inGraphDummy{\mathcal{T}}$ and compute a cycle
$\inHamiltonDummy{\mathcal{T}}$ that crosses each face at most once.
With Lemma~\ref{lemma:cycle_rerouting} we may obtain a cycle
$\inHamilton{\mathcal{T}}$ for $\inGraphInc{\mathcal{T}}$ with two
edges $e'_1, e'_2 \in E(\mathcal{T})$ of our choice. Choosing $e'_1
= e_1$ and $e'_2 \neq e_2$ yields two cycles
$\outHamilton{\mathcal{T}}, \inHamilton{\mathcal{T}}$ that meet the
requirements of Lemma~\ref{lemma:cycle_merging} and we can merge
them into one cycle $H$ for $G$.
\end{proof}

The proof of Theorem~\ref{theorem:triconnected} is constructive.
Embedding $G$ and identifying all separating triangles in $G$ can be
done in linear time. Augmenting a cycle and merging two of them
takes constant time. Disjointness of separating triangles yields a
linear number of subproblems and every edge occurs in at most one
such subproblem. Hence, the total time spent for the subroutine of
Lemma~\ref{lemma:linear_time_one_crossing} is linear in the size of
$G$.

\begin{corollary}
A subhamiltonian cycle of a triconnected 4-planar graph can be found
in linear time.
\end{corollary}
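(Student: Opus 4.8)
The plan is to observe that the proof of Theorem~\ref{theorem:triconnected} is already algorithmic and to charge the cost of every step either to a distinct separating triangle or to a distinct edge of $G$, so that the total is linear. First I would do the preprocessing: compute a planar embedding of $G$ in linear time and enumerate its separating triangles. Since $G$ is 4-planar, every vertex lies in at most $\binom{4}{2}=6$ triangles, so there are $O(n)$ triangles in total; listing them and testing which ones enclose a vertex on both sides (hence are separating) is a single linear traversal of the embedding. By Lemma~\ref{lemma:triangle_disjoint} the separating triangles are pairwise vertex-disjoint, so there are at most $n/3$ of them and, crucially, their interiors $\inGraph{\mathcal{T}}$ are pairwise edge-disjoint and nested in a laminar fashion. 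Sorting them from innermost to outermost therefore lets me realize exactly the order in which the inductive step of Theorem~\ref{theorem:triconnected} selects a triangle $\mathcal{T}$ with $\tau(\inGraphInc{\mathcal{T}})=0$.

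Next I would bound the dominant cost, the invocations of Lemma~\ref{lemma:linear_time_one_crossing}. Each processed triangle $\mathcal{T}$ triggers one call on $\inGraphDummy{\mathcal{T}}$, and the base case triggers one call on the outermost residual graph. Because the triangle interiors are edge-disjoint, and because contracting an already-processed interior to the single dummy vertex $v_{\mathcal{T}}$ deletes all of its edges from every subsequent subproblem, each edge of $G$ appears in the input of at most one such call; the $O(1)$ dummy edges introduced per triangle contribute only $O(n)$ in aggregate. Lemma~\ref{lemma:linear_time_one_crossing} runs in time linear in the size of its argument (it stellates the non-triangular faces and then runs the algorithm of Chiba et al.~\cite{chibanishizeki89}), so summing over these edge-disjoint inputs yields a total of $O(|E|)=O(n)$.

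It then remains to bound the local work per triangle. Reinserting a contracted triangle, locating the predecessor and successor of $v_{\mathcal{T}}$ on the current cycle, and applying the matching rerouting rule of Lemma~\ref{lemma:cycle_rerouting} each touch only a constant number of edges and faces around $v_{\mathcal{T}}$, and the merge of Lemma~\ref{lemma:cycle_merging} deletes a constant number of triangle edges and splices two paths; given direct access to the embedding these are $O(1)$ per triangle, hence $O(n)$ overall. I expect the main obstacle to be verifying that this bookkeeping is \emph{genuinely} constant-time per triangle: one must maintain the embedding and the face incidences as interiors are repeatedly contracted to dummy vertices, and must recover, for each reinserted triangle, the two faces through which the current cycle actually passes so that the correct figure of Lemma~\ref{lemma:cycle_rerouting} applies. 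Once one checks that a doubly-connected-edge-list representation supports these contractions and face look-ups in amortized $O(1)$ time, the preprocessing, the telescoped Lemma~\ref{lemma:linear_time_one_crossing} calls, and the constant-time local operations combine to the claimed linear bound.
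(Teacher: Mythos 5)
Your proposal is correct and follows essentially the same route as the paper, whose proof likewise rests on the constructive nature of Theorem~\ref{theorem:triconnected}: linear-time embedding and identification of separating triangles, constant-time augmenting and merging per triangle, and the disjointness of separating triangles (Lemma~\ref{lemma:triangle_disjoint}) guaranteeing a linear number of subproblems with every edge occurring in at most one invocation of Lemma~\ref{lemma:linear_time_one_crossing}. Your write-up merely makes explicit the bookkeeping (laminar innermost-to-outermost ordering, DCEL maintenance under contractions) that the paper leaves implicit.
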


In this section, we have shown that in the triconnected case a
rather simple technique can be used to efficiently compute a
subhamiltonian cycle in a 4-planar graph. However, the property that
G is triconnected has been used extensively throughout this section,
thus, a relaxation to biconnectivity is not straightforward.

\section{Two-Page Book Embeddings of General 4-Planar Graphs}
\label{sec:general-planar}
In this section, we prove that any planar graph of maximum degree
four admits a two-page book embedding. The proof is given by a
recursive combinatorial construction, which determines the order of
the vertices along the spine and the page in which each edge is
drawn. W.l.o.g. we assume that the input graph $G$ is biconnected,
since it is known that the page number of a graph equals the maximum
of the page number of its biconnected components \cite{bk-btg-79}.
Note that one can neglect the exact geometry, as two edges that are
drawn on the same page cross if and only if their endpoints
alternate along the spine. We say that an edge $e$ \emph{nests a
vertex} $v$ iff one endpoint of $e$ is to the left of $v$ along the
spine and the other endpoint of $e$ to its right. We also say that
an edge $e$ \emph{nests an edge} $e'$ iff both $e$ and $e'$ are
drawn on the same page and both endpoints of $e'$ are nested by $e$.
Observe that nested edges do not cross.

The general idea of our algorithm is as follows: First remove from
$G$ cycle $C_{out}$ delimiting the outerface of $G$ and
\emph{contract} each bridge-block\footnote{The \emph{bridge-blocks}
of a connected graph $G$ are the connected components formed by
deleting all bridges of $G$. The bridge-blocks and the bridges of
$G$ have a natural tree structure, called \emph{bridge-block tree}.}
of the remaining graph into a single vertex. Let $F$ be the implied
graph, which is a forest in general, since $G-C_{out}$ is not
necessarily connected. Cycle $C_{out}$ is embedded, s.t.: (i)~the
order of the vertices of $C_{out}$ along the spine is fixed (and
follows the one in which the vertices of $C_{out}$ appear along
$C_{out}$), and, (ii)~all edges of $C_{out}$ are on the same page,
except for the one that connects its outermost vertices. Then, we
describe how to embed without crossings: (i)~the chords of
$C_{out}$, (ii)~forest $F$, and, (iii)~the edges between $C_{out}$
and $F$. To obtain a two-page book embedding of $G$, we replace each
vertex of $F$ with a cycle (embedded similarly to $C_{out}$), whose
length equals to the length of the cycle delimiting the outerface of
the bridge-block it corresponds to in $G-C_{out}$, and recursively
embed its interior.

More formally, consider an arbitrary simple cycle $C: v_1
\rightarrow v_2 \rightarrow \ldots \rightarrow v_k \rightarrow v_1$
of $G$. The removal of $C$ results in two planar subgraphs
$\inGraph{C}$ and $\outGraph{C}$ of $G$ that are the components of
$G-C$ that lie in the interior and exterior of $C$ in $G$, resp.
Note that $\inGraph{C}$ and $\outGraph{C}$ are not necessarily
connected. Let $\inGraphInc{C}$ ($\outGraphInc{C}$, resp.) be the
subgraph of $G$ induced by $C$ and $\inGraph{C}$ ($\outGraph{C}$,
resp.). For the recursive step, we assume the following invariant
properties:

\begin{enumerate}[{I}P-1:]
\item \label{ip:1} The order of the vertices of $\outGraphInc{C}$ along the spine $\ell$ is fixed and the page in which each edge of $\outGraphInc{C}$ is drawn (i.e., top or bottom) is determined s.t. the book embedding of $\outGraphInc{C}$ is planar. In other words, we assume that we have already produced a two-page book embedding for $\outGraphInc{C}$, in which no edge crosses the spine.

\item \label{ip:extra} The combinatorial embedding of $\outGraphInc{C}$ is consistent with a given planar combinatorial embedding of $G$.

\item \label{ip:2} The vertices of $C$ occupy consecutive positions along $\ell$, s.t. $v_1$ ($v_k$, resp.) is the leftmost (rightmost, resp.) along $\ell$. Moreover, all edges of $C$ are on the same page, except for the one that connects $v_1$ and $v_k$. Say w.l.o.g. that $(v_1,v_k)$ is on the top-page (or \emph{top-drawn}), while the remaining edges of $C$, namely edges $(v_i,v_{i+1})$ for $1\leq i<k$, are on the bottom-page (or \emph{bottom-drawn}); see Fig.\ref{fig:outerface}.

\item \label{ip:3} If $C$ is not identified with the cycle delimiting the outerface of $G$, the degree of either $v_1$ or $v_k$ is at most 3 in $\inGraphInc{C}$. Say w.l.o.g. that $v_k$ is of degree at most 3.

\item \label{ip:4} If vertex $v_1$ has degree 4 in $\inGraphInc{C}$, then it is adjacent to zero or two chords of $C$.
\end{enumerate}

\begin{figure}[ht]
    \centering
    \includegraphics[width=.4\textwidth]{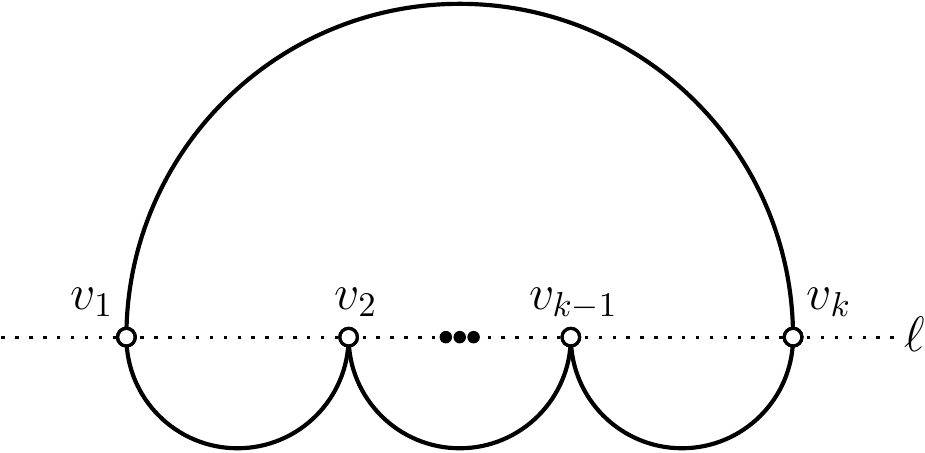}
    \caption{Illustration of invariant property IP-\ref{ip:2}.}
    \label{fig:outerface}
\end{figure}

We explicitly notice that the combinatorial embedding specified in
IP-\ref{ip:extra} is maintained throughout the whole drawing
process. This combined with the fact that every edge entirely lies
on one page (i.e., no edge crosses the spine; see IP-\ref{ip:1}) is
sufficient to ensure planarity. In the following, we describe in
detail how to recursively produce a two-page book embedding of
$\inGraphInc{C}$. Note that we first present the recursive step of
our algorithm and then its base, since this approach shows better
how the different ideas flow one after the other. Let $v_i$ be a
vertex of $C$, $i=1,\ldots,k$. Since $G$ is of max-degree $4$, $v_i$
is incident to at most two undrawn edges. Assume that $v_i$ has at
least one undrawn edge. We refer to the edge incident to $v_i$ that
follows $(v_i, v_{(i+1)\text{ mod }k})$ in the counterclockwise order
of the edges around $v_i$ (as defined by the combinatorial embedding
specified by IP-\ref{ip:extra}), as the \emph{right edge} of $v_i$.
If $v_i$ is adjacent to two undrawn edges, then the one that is not
identified with the right edge of $v_i$ is its \emph{left edge};
otherwise, the left and the right edge of $v_i$ are identified.

Initially, we draw the chords of $C$ on the top-page. By
IP-\ref{ip:extra} and IP-\ref{ip:2}, no two chords intersect. We
then draw $\inGraph{C}$ and the edges between $C$ and $\inGraph{C}$.
Note that $\inGraph{C}$ is not necessarily connected. Hence, its
bridge-block trees form a forest. As already stated, we contract
each bridge-block of $\inGraph{C}$ into a single vertex, which we
call \emph{\bv}; see
Figs.~\ref{fig:example_graph}-\ref{fig:example_forest}. We
distinguish two types of \bvs: those adjacent to vertices of $C$
(\emph{\ans}) and those adjacent to other \bvs only (\emph{\ccs}).
From the contraction, it follows that an edge between $C$ and a
certain \an can be of multiplicity at most two. Edges among \bvs are
always simple. We will first determine the positions of all \ans
along $\ell$. Consider an \an $c$, then among the edges between $c$ and
$C$, we select and \emph{mark} exactly one, s.t.: (i)~the marked
edge will be drawn on the bottom-page and (ii)~all other edges
incident to $c$ (i.e., either edges between $c$ and $C$ that are not
marked, or between $c$ and \bvs) will be drawn on the top-page. Let
$v_{l,c}$ be the leftmost vertex of $C$ adjacent to $c$ along
$\ell$. If $(c,v_{l,c})$ is simple, we select and mark this edge.
Otherwise, we mark the right edge of $v_{l,c}$. Hence, each \an has
exactly one marked edge (which we will shortly utilize to determine
its position along $\ell$) and each vertex of $C$ is incident to at
most two marked edges. Let $v\in C$ be a vertex of $C$ adjacent to
at least one \an through a marked edge. Then we have two cases:

\begin{figure}[t]
    \centering
    \begin{minipage}[b]{.32\textwidth}
        \centering
        \subfloat[\label{fig:example_graph}{Bridge-blocks of $\inGraph{C}$}]
        {\includegraphics[width=.8\textwidth,page=1]{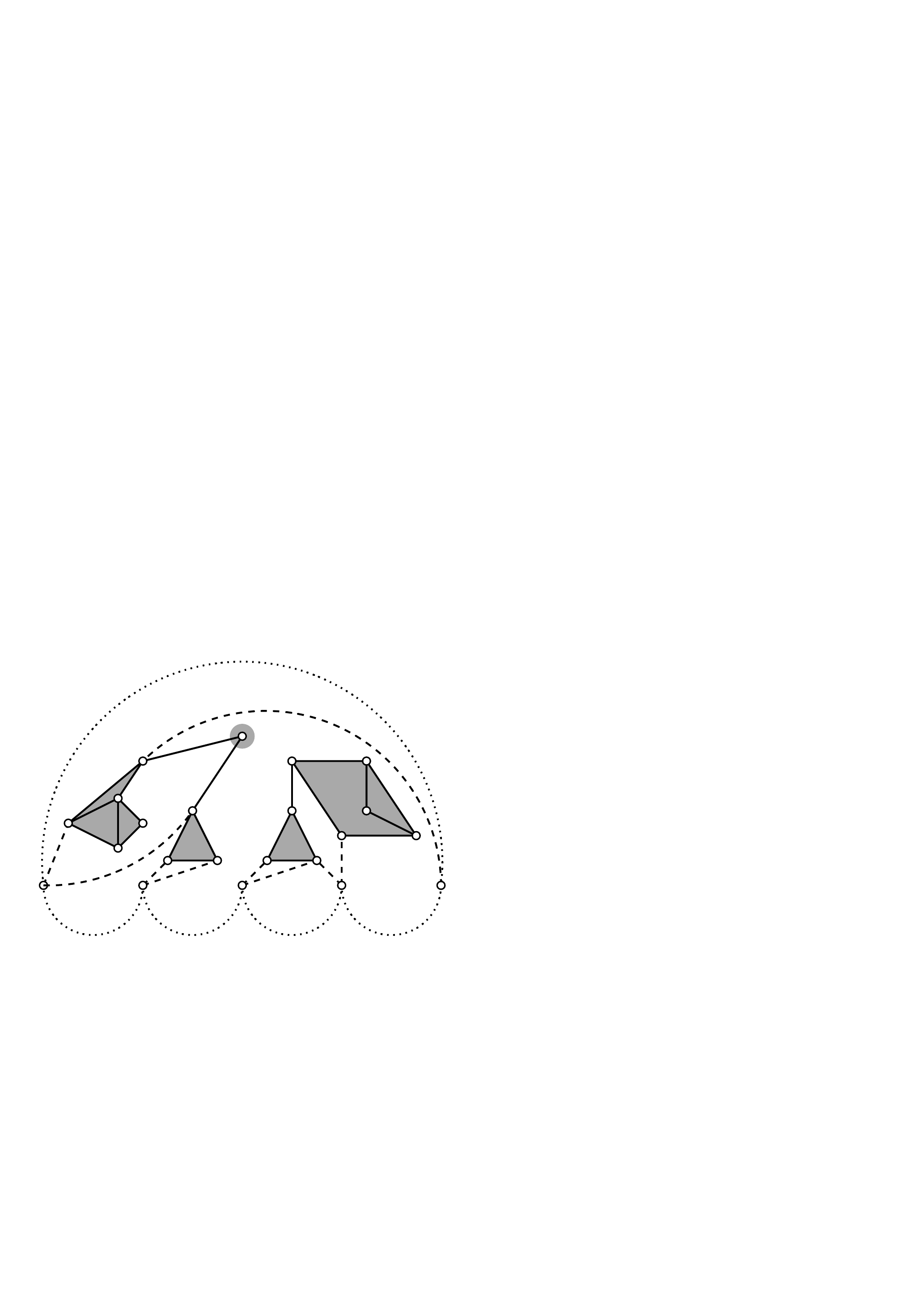}}
    \end{minipage}
    \hfill
    \begin{minipage}[b]{.32\textwidth}
        \centering
        \subfloat[\label{fig:example_forest}{Forest of \bvs}]
        {\includegraphics[width=.8\textwidth,page=2]{images/decomposition}}
    \end{minipage}
    \hfill
    \begin{minipage}[b]{.32\textwidth}
        \centering
        \subfloat[\label{fig:example_forest_solved}{Placement of \ans}]
        {\includegraphics[width=.8\textwidth,page=3]{images/decomposition}}
    \end{minipage}
    \begin{minipage}[b]{.19\textwidth}
        \centering
        \subfloat[\label{fig:right_placement_1}{}]
        {\includegraphics[width=.9\textwidth,page=1]{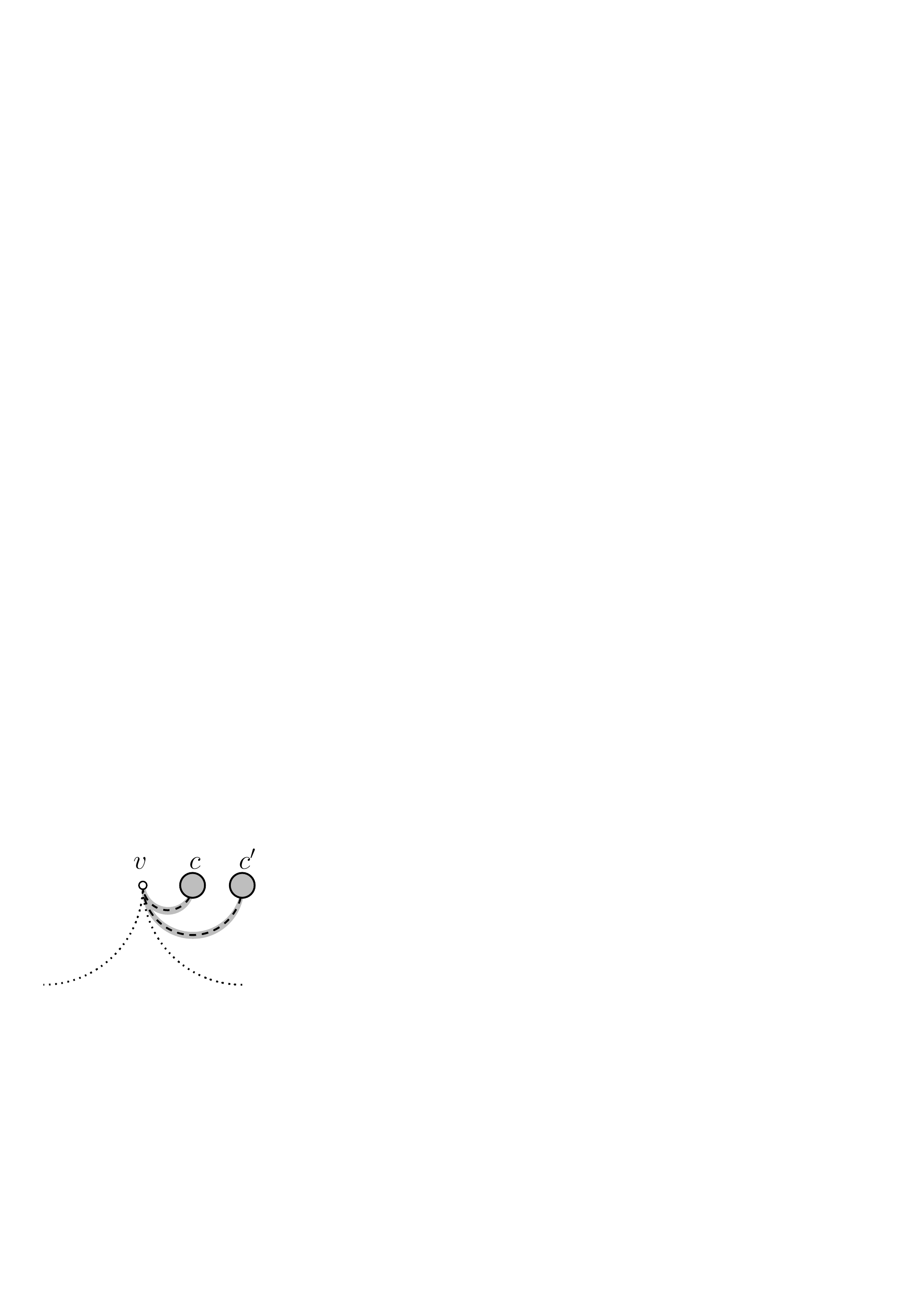}}
    \end{minipage}
    \hfill
    \begin{minipage}[b]{.19\textwidth}
        \centering
        \subfloat[\label{fig:right_placement_2}{}]
        {\includegraphics[width=.9\textwidth,page=2]{images/right_edge}}
    \end{minipage}
    \hfill
    \begin{minipage}[b]{.19\textwidth}
        \centering
        \subfloat[\label{fig:left_placement_1}{}]
        {\includegraphics[width=.9\textwidth,page=3]{images/right_edge}}
    \end{minipage}
    \hfill
    \begin{minipage}[b]{.19\textwidth}
        \centering
        \subfloat[\label{fig:right_placement_3}{}]
        {\includegraphics[width=.9\textwidth,page=4]{images/right_edge}}
    \end{minipage}
    \hfill
    \begin{minipage}[b]{.19\textwidth}
        \centering
        \subfloat[\label{fig:left_placement_2}{}]
        {\includegraphics[width=.9\textwidth,page=5]{images/right_edge}}
    \end{minipage}
    \caption{In all figures, the edges of $C$ are drawn dotted, bridge-blocks are colored gray and edges between $C$ and \ans are drawn dashed; marked edges are highlighted in gray.}
    \label{fig:example_potatoes}
\end{figure}

\begin{description}
\item[Case 1] \emph{$v$ is adjacent to exactly two \ans
$c$ and $c'$ through two marked edges $e$ and $e'$, resp.}: Assume
w.l.o.g. that $e$ is the left edge of $v$ and $e'$ its right edge.
Then, both $c$ and $c'$ are placed directly to the right of $v$ and
$c$ precedes $c'$ (see Fig.~\ref{fig:right_placement_1}). Note that
$v$ cannot be the rightmost vertex of $C$ due to IP-\ref{ip:3}.

\item[Case 2] \emph{$v$ is adjacent to one \an
$c$ through a marked edge $e$}: If $deg(v)=3$ in $\inGraphInc{C}$,
then we distinguish two sub-case. If $v$ is not the rightmost vertex
of $C$, then $c$ is placed directly to the right of $v$ (see
Fig.~\ref{fig:right_placement_2}). Otherwise, directly to its left
(see Fig.~\ref{fig:left_placement_1}). It now remains to consider
the case where $deg(v)=4$ in $\inGraphInc{C}$. In this case, by
IP-\ref{ip:3} it follows that $v$ is not the rightmost vertex of
$C$. Again, we distinguish two sub-cases:
\begin{itemize}
\item[$-$] \emph{If $e$ is the right edge of $v$}, then $c$ is placed directly to the right of $v$ (see Fig.~\ref{fig:right_placement_3}).

\item[$-$] \emph{If $e$ is the left edge of $v$}, then $c$ is placed directly to the left of $v$ (see Fig.~\ref{fig:left_placement_2}); $v$ cannot be  the leftmost vertex of $C$, as the right edge of $v$ would be a chord, violating IP-\ref{ip:4}.
\end{itemize}
\end{description}

As already stated, all marked edges are bottom-drawn. Edges between
\ans and $C$ that are not marked are top-drawn; see
Fig.~\ref{fig:example_forest_solved}. Observe that we do not change
the underlying combinatorial embedding of $G$, preserving
IP-\ref{ip:extra}. Hence, the book embedding constructed so far is
planar.

Before we proceed to describe how \ans ``determine'' the positions
of \ccs, we introduce the notion of (\emph{labeled}) \emph{anchored
tree} and investigate properties of it. Observe that \ccs form a new
forest (\emph{forest of \ccs}), which is subgraph of the initial
forest (containing all \bvs). Let $T$ be a tree of the forest of
\ccs and let $c_1,\dots,c_t$ be \ans that (i)~are adjacent to at
least one \cc of $T$, and (ii)~$c_i$ is to the left of $c_{i+1}$,
$i=1,\ldots,t-1$. We refer to $c_1,\dots,c_t$ as the \emph{\ans} of
$T$, and to the tree formed by $T$ and its \ans as the
\emph{anchored tree} of $T$, denoted by $\overline{T}$. We say that
two \ans of $\overline{T}$ are \emph{consecutive} iff there is no
\an of $\overline{T}$ between them (\ans that do not belong to
$\overline{T}$ or vertices of $C$ may lie in between).

\begin{lemma}\label{lem:basic}
For anchored trees the following hold:
\begin{inparaenum}[$(i)$]
\item \label{lem:common_anchors} Two trees $\overline{T}$ and $\overline{T'}$ share at most a common \an;
\item \label{lem:2-anchors} $\overline{T}$ contains at least two
\ans; and
\item \label{lem:leafs} every leaf of $\overline{T}$ is an \an of $\overline{T}$, and vice versa.
\end{inparaenum}
\end{lemma}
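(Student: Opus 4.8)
The plan is to exploit two structural facts throughout. First, the graph $F$ of \bvs is a \emph{forest}, so every subgraph of it is acyclic; second, $G$ is biconnected, hence has minimum degree at least two and contains no bridge. I will also repeatedly use two auxiliary observations. Recall that an \cc has, by definition, no edge to $C$, so within $G$ every edge leaving (the \nc represented by) an \cc is a bridge of $\inGraph{C}$, i.e.\ a tree-edge of $F$. Moreover, since $T$ is a connected component of the forest of \ccs, it is maximal there, so every \bv of $F$ adjacent to an \cc of $T$ is either another \cc of $T$ or an \an of $T$; consequently \emph{all} $F$-neighbours of an \cc of $T$ lie in $\overline{T}$, and its degree in $\overline{T}$ equals its degree in $F$.

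For part~$(\ref{lem:common_anchors})$ I first note that $T$ and $T'$, being distinct components of the forest of \ccs, are vertex-disjoint, so $\overline{T}$ and $\overline{T'}$ can only share \ans. Suppose for contradiction they shared two \ans $c\neq c'$. Then $c$ is joined in $F$ to an \cc $a\in T$ and to an \cc $a'\in T'$, and likewise $c'$ is joined to some $b\in T$ and $b'\in T'$. Concatenating the edge $(c,a)$, the $T$-path from $a$ to $b$, the edge $(b,c')$, the edge $(c',b')$, the $T'$-path from $b'$ to $a'$, and the edge $(a',c)$ yields a cycle in $F$, since its vertices live in the disjoint sets $V(T)\cup\{c\}$ and $V(T')\cup\{c'\}$ and are therefore pairwise distinct. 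This contradicts $F$ being a forest, so $\overline{T}$ and $\overline{T'}$ share at most one \an.

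For part~$(\ref{lem:leafs})$ I argue both directions. An \an of $\overline{T}$ is adjacent to at least one \cc of $T$, so it has degree at least one; were it adjacent to two \ccs $a_1,a_2$ of $T$, the $T$-path from $a_1$ to $a_2$ together with the two edges to the \an would close a cycle in $F$, a contradiction (the identical argument also forbids an edge between two \ans of $\overline{T}$). Hence every \an has degree exactly one and is a leaf. Conversely, let $\ell$ be a leaf of $\overline{T}$ and suppose it is an \cc. By the maximality observation, $\deg_F(\ell)=\deg_{\overline{T}}(\ell)=1$, so the \nc of $\ell$ has a single incident tree-edge and, being an \cc, no edge to $C$; thus this single edge is the only link of that \nc to the rest of $G$, making it a bridge of $G$ (and if $\ell$ is a single vertex, it simply has $G$-degree one). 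Either way biconnectivity is violated, so every leaf is an \an.

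Part~$(\ref{lem:2-anchors})$ then follows from the same bridge argument: every \cc must have $F$-degree at least two, for otherwise its unique incident tree-edge would be a bridge of $G$. Hence $\overline{T}$ contains an \cc with at least two neighbours and so has at least three vertices; a tree on at least two vertices has at least two leaves, which by part~$(\ref{lem:leafs})$ are \ans, giving the claimed two \ans. The cycle-in-$F$ arguments for $(\ref{lem:common_anchors})$ and for the ``\an$\Rightarrow$leaf'' direction are routine; the main obstacle is the converse direction of $(\ref{lem:leafs})$ together with $(\ref{lem:2-anchors})$, where one must correctly turn ``an \cc of $F$-degree one'' into ``a bridge of $G$''. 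This is exactly the step that requires both the maximality of $T$ (to confine every $F$-neighbour of an \cc to $\overline{T}$) and the absence of bridges in the biconnected graph $G$.
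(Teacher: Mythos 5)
Your proof is correct and takes essentially the same approach as the paper: part~(i) and the \an-to-leaf direction of part~(iii) rest on acyclicity of the forest of \bvs, while the leaf-to-\an direction and the \an count rest on the observation that a degree-one \cc would make its unique incident edge a bridge of $G$, contradicting biconnectivity. The only cosmetic difference is that you deduce part~(ii) from part~(iii) via the fact that a tree on at least two vertices has two leaves, whereas the paper argues it directly (zero \ans contradict connectivity, a single \an yields a bridge); both routes are sound, noting only that your claim that every \cc has $F$-degree at least two also needs connectivity of $G$ to exclude degree zero, which your biconnectivity assumption supplies.
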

\begin{proof}
(\ref{lem:common_anchors})~If $\overline{T}$ and $\overline{T'}$
have two common \ans $c$ and $c'$, then there are two paths from $c$
to $c'$, one through $T$ and one through $T'$, which form a cycle of
\bvs, a contradiction. (\ref{lem:2-anchors})~If $\overline{T}$ has
no \ans, then no path from $C$ to $T$ exists, a contradiction since
$G$ is connected. If $\overline{T}$ has one \an $c$, then the edge
from $c$ to $T$ is a bridge, a contradiction since $G$ is
biconnected. Note that the edge from $c$ to $T$ is always simple;
double edges potentially occur between vertices of $C$ and \ans.
(\ref{lem:leafs})~Removing the \ans of $\overline{T}$, we obtain
$T$. If an \an of $\overline{T}$ is internal to $\overline{T}$, then
its removal disconnects $T$, a contradiction since $T$ is connected.
If there is a leaf $c \in \overline{T}$ that is not an \an of
$\overline{T}$, then the edge from $c$ to $T$ is a bridge, a
contradiction since $G$ is biconnected.
\end{proof}

Assume now that $\overline{T}$ is rooted at \an $c_1$ (\emph{rooted
anchored tree}). For an \an or \cc $c$ of $\overline{T}$, denote by
$p(c)$ the parent of $c$ in $\overline{T}$ and let $p(c_1)$ be any
of the vertices of $C$ adjacent to $c_1$. For an \cc $c$ of $T$
(i.e., non-leaf in $\overline{T}$), we define an order for its
children: if $c'$ and $c''$ are children of $c$, then $c'<c''$ iff
$c'$ precedes $c''$ in the counterclockwise order of the edges
around $c$ (defined by the combinatorial embedding specified by
IP-\ref{ip:extra}), when starting from $(c,p(c))$. By this order, we
label the vertices of $\overline{T}$ as they appear in the pre-order
traversal of $\overline{T}$ (\emph{labeled anchored tree}); see
Fig.\ref{fig:example_tree_ordered}.

\begin{figure}[t]
\centering
    \begin{minipage}[b]{.48\textwidth}
    \centering
    \subfloat[\label{fig:example_tree_ordered}{A labeled anchored tree $\overline{T}$}]
    {\includegraphics[width=\textwidth,page=1]{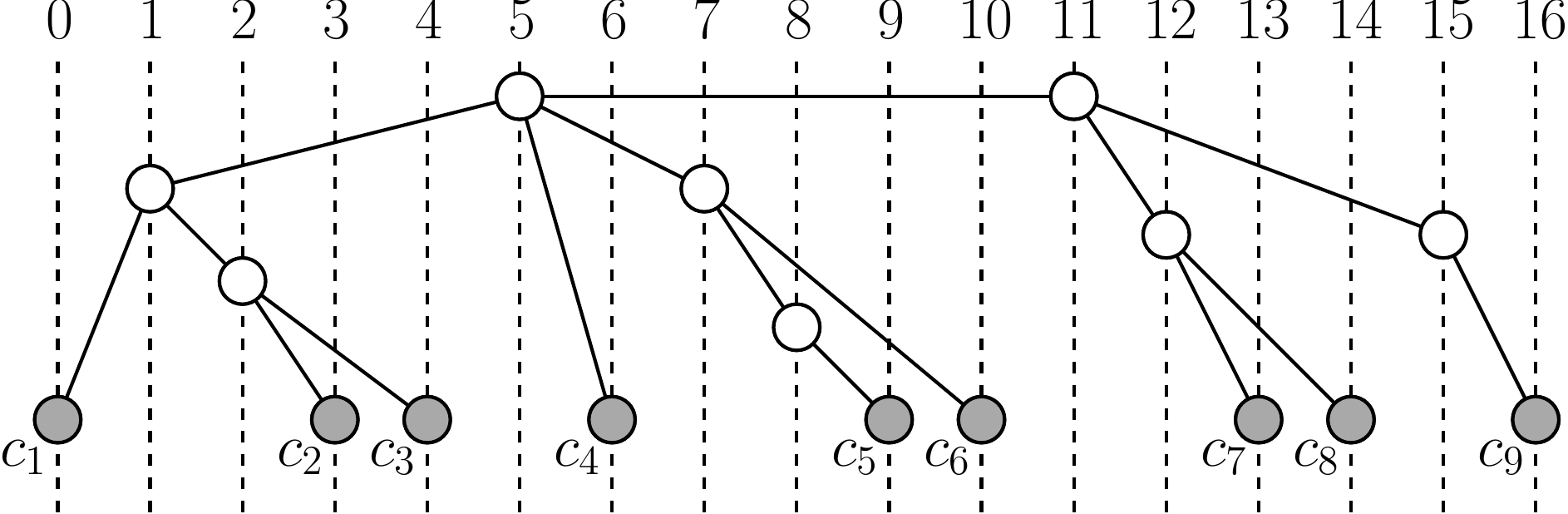}}
    \end{minipage}
    \hfill
    \begin{minipage}[b]{.48\textwidth}
    \centering
    \subfloat[\label{fig:example_tree_placed}{The placement of the \ccs of $\overline{T}$ among its \ans}]
    {\includegraphics[width=\textwidth,page=2]{images/tree_route}}
        \end{minipage}
    \caption{
    In both figures, \ans are colored gray; the indices of the vertical grid-lines denote the labeling of $\overline{T}$.}
    \label{fig:example_placement_tree}
\end{figure}

\begin{lemma}\label{lem:label_component}
For each \cc  $c$ of a labeled anchored tree $\overline{T}$ there is
(i)~at least an \an of $\overline{T}$ with label smaller than that
of $c$ and (ii)~at least another with label greater than that of
$c$.
\end{lemma}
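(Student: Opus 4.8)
The plan is to read off both inequalities directly from the pre-order labeling, leaning on the structural description of $\overline{T}$ supplied by Lemma~\ref{lem:basic}. I would first record two standard facts about a pre-order traversal: every vertex receives a label strictly smaller than those of all of its proper descendants, and strictly larger than those of all vertices visited earlier. The decisive structural input is Lemma~\ref{lem:basic}(\ref{lem:leafs}), which identifies the leaves of $\overline{T}$ with its \ans; consequently every \cc $c$ is an \emph{internal} vertex of $\overline{T}$ and, in particular, is distinct from the root $c_1$ (itself an \an, hence a leaf of the unrooted tree).

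For part (i), I would simply note that $c_1$ is visited first and hence carries the globally smallest label among the vertices of $\overline{T}$. Since any \cc $c$ differs from $c_1$, its label strictly exceeds that of $c_1$, so $c_1$ is an \an of $\overline{T}$ with smaller label, as required.

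For part (ii), I would fix a \cc $c$ and exploit that, being internal and non-root, it has at least one child. The subtree of $\overline{T}$ rooted at $c$ therefore contains a leaf $\ell$ that is a proper descendant of $c$; by the pre-order property $\ell$ has a strictly larger label than $c$, and by Lemma~\ref{lem:basic}(\ref{lem:leafs}) this leaf $\ell$ is an \an. This produces an \an of $\overline{T}$ with larger label and finishes the argument.

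I do not expect a genuine obstacle here: the whole proof rests on the monotonicity of pre-order labels along root-to-leaf paths, combined with the leaf-equals-\an dictionary from Lemma~\ref{lem:basic}. The only delicate point is the treatment of the root—one must observe that, although $c_1$ is the apex of the rooted tree, it is itself an \an and therefore never occurs as the \cc $c$ in the statement, which is precisely what makes it available as the smaller-labeled \an needed in part (i).
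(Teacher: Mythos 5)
Your proof is correct and takes essentially the same approach as the paper: both arguments rest on the monotonicity of pre-order labels combined with Lemma~\ref{lem:basic}(\ref{lem:leafs}) identifying the leaves of $\overline{T}$ with its \ans, with the root $c_1$ (label zero) witnessing part~(i). The only cosmetic difference is in part~(ii), where the paper exhibits the globally greatest-labeled vertex (the last one visited, necessarily a leaf and hence an \an) while you take a descendant leaf of $c$; both witnesses work equally well.
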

\begin{proof}
The leftmost \an (i.e. root) of $\overline{T}$ is zero labeled,
which proves (i). The greatest labeled vertex of $\overline{T}$ is a
leaf of $\overline{T}$ (due to pre-order traversal) and by
Lemma~\ref{lem:basic}(\ref{lem:leafs}) an \an of $\overline{T}$.
This proves (ii).
\end{proof}

We first define the order in which the trees of the forest of \ccs
will be drawn. To do so, we create an auxiliary graph $G_{aux}^T$
whose vertices correspond to trees and there is a directed edge
$(v_{T'},v_T)$ in $G_{aux}^T$ iff $\overline{T'}$ has an \an between
two consecutive \ans of $\overline{T}$. The desired order is defined
by a topological sorting of $G_{aux}^T$, which always exists as the
following lemma suggests.

\begin{lemma}\label{lem:dag}
Auxiliary graph $G_{aux}^T$ is a directed acyclic graph.
\end{lemma}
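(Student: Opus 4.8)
The plan is to show that the auxiliary graph $G_{aux}^T$ contains no directed cycle. Recall that a directed edge $(v_{T'}, v_T)$ is present precisely when some \an of $\overline{T'}$ lies strictly between two consecutive \ans of $\overline{T}$ along the spine $\ell$. Intuitively, such an edge means that $\overline{T'}$ is ``nested inside'' the span of $\overline{T}$, and my goal is to extract a strictly monotone quantity from this nesting relation so that a directed cycle would yield a contradiction.

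First I would associate to each tree $T$ (equivalently, to each anchored tree $\overline{T}$) its spine \emph{span}, the closed interval $[\ell(c_1), \ell(c_t)]$ bounded by its leftmost and rightmost \ans $c_1$ and $c_t$ along $\ell$. By Lemma~\ref{lem:basic}(\ref{lem:2-anchors}) every $\overline{T}$ has at least two \ans, so this span is a genuine nondegenerate interval. The key claim I would establish is that if $(v_{T'}, v_T)$ is an edge of $G_{aux}^T$, then the span of $\overline{T'}$ is contained in the span of $\overline{T}$, and moreover strictly so in an appropriate sense: an \an of $\overline{T'}$ sits between two \emph{consecutive} \ans of $\overline{T}$, hence lies in the interior of the span of $\overline{T}$, which by Lemma~\ref{lem:basic}(\ref{lem:common_anchors}) forces the \ans of $\overline{T'}$ to be confined to a single ``gap'' of $\overline{T}$ and prevents $\overline{T'}$ from ever escaping that gap.

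The crucial structural step, and the one I expect to be the main obstacle, is arguing that the containment is \emph{proper} and cannot be cyclic. Here I would lean on part~(\ref{lem:common_anchors}) of Lemma~\ref{lem:basic}: two distinct anchored trees share at most one common \an. This rules out the two trees interleaving their \ans along $\ell$ (which would otherwise permit edges in both directions), since a shared-free interleaving would require the \ccs of the two trees to be joined into a cycle, contradicting that they are distinct trees of the forest of \ccs. Thus for any two trees the nesting is one-directional: either $\overline{T'}$ fits inside a single gap of $\overline{T}$ or vice versa, but not both. This lets me define a strict partial order by span-containment, where $T' \prec T$ whenever an \an of $\overline{T'}$ lies in a gap of $\overline{T}$.

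Finally, I would close the argument by contradiction: suppose $G_{aux}^T$ has a directed cycle $v_{T_1} \to v_{T_2} \to \cdots \to v_{T_m} \to v_{T_1}$. Following the edges around the cycle and applying the containment relation established above, the spans would have to satisfy a chain of strict inclusions returning to the start, which is impossible for a strict partial order on a finite set. Equivalently, one may track a strictly decreasing integer quantity---such as the number of vertices of $C$ strictly inside the span---around the cycle and derive the contradiction from its strict monotonicity. Either formulation shows no directed cycle exists, so $G_{aux}^T$ is a directed acyclic graph, as claimed.
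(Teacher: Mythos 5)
Your overall strategy is the same as the paper's: assign to each anchored tree the spine interval spanned by its leftmost and rightmost \ans, show that an edge $(v_{T'},v_T)$ of $G_{aux}^T$ forces the interval of $\overline{T'}$ strictly inside that of $\overline{T}$, and derive a contradiction from a cycle of strict inclusions. Your use of Lemma~\ref{lem:basic}(\ref{lem:common_anchors}) to make the inclusions strict is also exactly what the paper does: equal intervals would mean two distinct anchored trees sharing both endpoint \ans, i.e., two common \ans.

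However, the justification you give for the crucial confinement step --- that \emph{all} \ans of $\overline{T'}$ lie within a single gap between consecutive \ans of $\overline{T}$ --- does not hold up. You attribute it to Lemma~\ref{lem:basic}(\ref{lem:common_anchors}) and claim that an interleaving of the \ans of the two trees along $\ell$ ``would require the \ccs of the two trees to be joined into a cycle.'' It would not: two vertex-disjoint trees whose \ans alternate along the spine form no cycle whatsoever, so the forest structure of the \ccs is perfectly consistent with interleaved \an positions, and Lemma~\ref{lem:basic}(\ref{lem:common_anchors}) (which only bounds \emph{shared vertices}) says nothing about positions of distinct \ans. What actually rules out interleaving is planarity of the fixed combinatorial embedding: both trees lie in the interior of $C$ and attach to it, so alternating \ans would force a crossing in the embedding specified by IP-\ref{ip:extra} --- this is precisely the paper's one-line justification (``otherwise the embedding specified by IP-\ref{ip:extra} is not planar''). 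Without this embedding-based argument your nesting relation is not even well defined (a priori both $(v_{T'},v_T)$ and $(v_T,v_{T'})$ could occur, already a directed $2$-cycle), so acyclicity does not follow; with it, your proof coincides with the paper's. A minor additional quibble: your proposed alternative monotone quantity, the number of vertices of $C$ strictly inside the span, need not strictly decrease under strict span inclusion (two spans may differ only in their \an endpoints), so the strict-inclusion chain is the formulation to keep.
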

\begin{proof}
Assume to the contrary that there is a cycle $v_{T_1} \rightarrow
\ldots v_{T_s} \rightarrow v_{T_1}$ in $G_{aux}^T$. Let $I_i$ be the
interval defined by the left/right-most \ans of $\overline{T_i}$.
Edge $(v_{T_i},v_{T_{i+1}})$ implies that there is an \an of
$\overline{T_i}$ between consecutive \ans of $\overline{T_{i+1}}$.
However, in this case all \ans of $\overline{T_i}$ should be between
the same two \ans of $\overline{T_{i+1}}$, as otherwise the
embedding specified by IP-\ref{ip:extra} is not planar. So,
$I_i\subseteq I_{i+1}$. By
Lemma~\ref{lem:basic}(\ref{lem:common_anchors}), it follows that
$I_i \neq I_{i+1}$. Hence, $I_1\subset\ldots\subset I_s\subset I_1$,
a contradiction.
\end{proof}

Lemma~\ref{lem:dag} implies that drawing the trees in the order
defined by a topological sorting of $G_{aux}^T$, assures that the
tree $T'$ will be drawn before $T$, if $\overline{T'}$ has an \an
that is between two consecutive \ans of $\overline{T}$ along $\ell$.
Now assume that we have drawn zero or more of these trees s.t.
(i)~all edges are top-drawn, (ii)~there are no edge crossings, and
(iii)~the combinatorial embedding specified by IP-\ref{ip:extra} is
preserved. Let $T$ be the next tree to be drawn. The following lemma
presents an important property of our drawing approach.

\begin{lemma}\label{lem:order}
Assume that all trees that precede $T$ in a topological sorting of
$G_{aux}^T$ have been drawn on the top-page without edge crossings
by preserving the combinatorial embedding specified by
IP-\ref{ip:extra}. If $e$ is a top-drawn edge that does not belong
to $\overline{T}$ and nests at least one \an of $\overline{T}$, then
it nests all \ans of $\overline{T}$.
\end{lemma}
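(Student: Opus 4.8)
The plan is to prove Lemma~\ref{lem:order} by contradiction, using the planarity of the combinatorial embedding specified by IP-\ref{ip:extra} together with Lemma~\ref{lem:label_component}. I would suppose that $e$ is a top-drawn edge not belonging to $\overline{T}$ that nests some but not all \ans of $\overline{T}$. Then there exist two \ans $c_i, c_j$ of $\overline{T}$ with $c_i$ nested by $e$ and $c_j$ not nested by $e$. Since $e$ nests one endpoint-side of $\overline{T}$ but not the other, exactly one endpoint of $e$ lies strictly inside the interval spanned by $\overline{T}$ along $\ell$.

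Let me set up the key objects. Let $c_\ell$ and $c_r$ denote the leftmost and rightmost \ans of $\overline{T}$ along $\ell$. The tree $\overline{T}$ is a connected subgraph whose vertices all lie (except possibly for intervening foreign vertices) within $[c_\ell, c_r]$; crucially, by Lemma~\ref{lem:label_component} every \cc of $\overline{T}$ has an \an of $\overline{T}$ both to its left and to its right, so the connected ``skeleton'' of $\overline{T}$ genuinely spans from $c_\ell$ to $c_r$. First I would argue that if an edge $e$ has exactly one endpoint strictly inside the open interval $(c_\ell, c_r)$ and the other endpoint outside $[c_\ell, c_r]$, and $e$ is top-drawn, then $e$ must cross some top-drawn edge of $\overline{T}$ — because the path of top-drawn tree edges connecting the nested \an $c_i$ back to an \an on the far side of $e$ forms a curve that $e$ must intersect. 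This contradicts either the no-crossing hypothesis on the already-drawn trees or the planarity of the embedding from IP-\ref{ip:extra}.

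The main work, and what I expect to be the central obstacle, is making precise the claim that an edge separating the \ans of $\overline{T}$ necessarily crosses a tree edge. The subtlety is that $\overline{T}$'s edges and $e$ might be drawn on the top-page while foreign vertices and other trees interleave along the spine; one must use Lemma~\ref{lem:dag} and the topological-sorting hypothesis to ensure that $T$ itself is not yet drawn but that $e$ (belonging to an already-drawn structure or to the boundary cycle $C$) is fixed. The cleanest route is to invoke IP-\ref{ip:extra}: since all top-drawn edges respect a single fixed planar combinatorial embedding of $G$, and since $\overline{T}$ is connected and spans $[c_\ell, c_r]$, any top-page edge nesting a proper nonempty subset of $\overline{T}$'s \ans would force two edges whose endpoints alternate along $\ell$, i.e. a crossing on the top-page, violating planarity. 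I would therefore phrase the contradiction directly in terms of alternating endpoints: the offending edge $e$ and a suitable tree edge (or path) of $\overline{T}$ have endpoints that strictly alternate along $\ell$, so they cross on the top-page, contradicting the assumed planarity. Closing this rigorously requires choosing the right tree edge to alternate with $e$, which is exactly where Lemma~\ref{lem:label_component} — guaranteeing \ans on both sides of every \cc — does the decisive work.
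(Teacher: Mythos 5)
Your central step does not go through. At the moment Lemma~\ref{lem:order} is applied, the tree $T$ has \emph{not} been drawn: only the trees preceding it in the topological sorting of $G_{aux}^T$, the edges of $C$, its chords, and the marked/unmarked edges between $C$ and \ans are present, and the positions of the \ccs of $T$ along $\ell$ are not even determined yet. So there is no ``path of top-drawn tree edges'' of $\overline{T}$ for $e$ to cross, and no alternation of endpoints between $e$ and an edge of $\overline{T}$ can be asserted -- the \ans of $\overline{T}$ are its leaves (Lemma~\ref{lem:basic}), pairwise non-adjacent within $\overline{T}$, so not even a single edge of $\overline{T}$ joins two already-placed vertices. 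Your fallback (``wherever the algorithm later places the \ccs, a crossing arises'') is not a contradiction either: that is exactly the failure mode the lemma is designed to rule out, so deriving it from the negation of the lemma is circular. The contradiction has to be extracted from objects that are \emph{already drawn}, together with the abstract planarity of the embedding fixed by IP-\ref{ip:extra}; Lemma~\ref{lem:label_component} only supplies label (future-order) information about $T$'s vertices, not drawn curves.

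What is missing is the case analysis on what $e$ can be, and two mechanisms your sketch never produces. The paper distinguishes: (i)~$e$ is the top-drawn edge of $C$ -- trivial, since all \ans lie between the extreme vertices of $C$; (ii)~$e$ belongs to a previously drawn anchored tree $\overline{T'}$ -- then by Lemma~\ref{lem:label_component} both endpoints of $e$ lie between the extreme \ans of $\overline{T'}$, so the nested \an of $\overline{T}$ sits between two consecutive \ans of $\overline{T'}$, and by the definition of $G_{aux}^T$ this forces $T$ to precede $T'$ in the topological order, contradicting that $T'$ is already drawn. Here the contradiction is the \emph{ordering}, not planarity, and your proposal has no handle on this case at all (note that the endpoints of such an $e$ may be \ccs of $T'$, which have no bottom-page connection to $C$, so no separating-cycle argument applies directly). (iii)~Otherwise both endpoints of $e$ are vertices of $C$ or \ans, each joined to $C$ by a bottom-drawn edge; a bottom-page path then closes $e$ into an \emph{already-drawn} cycle having the nested \an $c$ in its interior and the non-nested \an $c'$ in its exterior, while $c$ and $c'$ are joined in $G$ by a path through $T$ -- contradicting the planarity of the embedding of IP-\ref{ip:extra}. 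This drawn separating cycle, rather than a crossing with the not-yet-drawn $\overline{T}$, is the decisive object your argument lacks.
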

\begin{proof}
If \emph{$e$ is the top-drawn edge of cycle $C$}, then $e$ nests all
\ans of $\overline{T}$, since all \ans of $\overline{T}$ are between
the left/right-most vertices of $C$. Now consider the case where $e$
is not the top-drawn edge of $C$. By
Lemma~\ref{lem:basic}(\ref{lem:2-anchors}), $\overline{T}$ has at
least two \ans, say $c$ and $c'$ (with $c$ to the left of $c'$), and
assume to the contrary that $e$ nests $c$ and not $c'$. If \emph{$e$
is an edge of an anchored tree $\overline{T'}$ drawn before
$\overline{T}$}, then by Lemma~\ref{lem:label_component}, both
endpoints of $e$ are between the left/right-most \ans of
$\overline{T'}$. Since $e$ nests $c$, $T$ should be drawn before
$T'$, a contradiction. Finally, if \emph{$e$ is not an edge of a
previously drawn anchored tree}, then each endpoint of $e$ is either
(i)~a vertex of $C$ or (ii)~an \an. Since such vertices are
connected to $C$ by a bottom edge, there is a path connecting the
endpoints of $e$ on the bottom-page, which together with $e$ forms a
cycle with $c$ in its interior and $c'$ on its exterior. Hence, the
embedding specified by IP-\ref{ip:extra} is not planar, a
contradiction.
\end{proof}

We now describe how to draw $T$ on the top page s.t. (i)~there are
no edge crossings, and, (ii)~the combinatorial embedding specified
by IP-\ref{ip:extra} is preserved. More precisely, we place each \cc
$c$ of $T$ between a pair of consecutive \ans of $\overline{T}$,
s.t. the label of $c$ is larger (smaller) than the label of the \an
to its left (right)\footnote{Note that the existence of this pair of
consecutive \ans of $\overline{T}$ is implied by
Lemma~\ref{lem:label_component}; since for each \cc $c$ of a labeled
anchored tree $\overline{T}$ there exist at least an \an of
$\overline{T}$ with label smaller than that of $c$ and at least
another with label greater than that of $c$, there should be two
consecutive ones with this property as well.}; for \ccs placed
between the same pair of \ans, the one with smaller label is to the
left; all edges of $\overline{T}$ are top-drawn (see
Fig.\ref{fig:example_tree_placed}). Note that we have not fully
specified the exact positions of the \ccs of $\overline{T}$ along
$\ell$, since between consecutive \ans of $\overline{T}$
there may exist \ans that do not belong to $\overline{T}$ or vertices of
$C$ or \ans/\ccs of trees that have already been drawn. Details will
be given shortly. Notice that all \ccs of $\overline{T}$ are placed
between its left/right-most \ans, which by Lemma~\ref{lem:order}
implies that if a top-drawn edge (that does not belong to
$\overline{T}$) nests at least one \an of $\overline{T}$, then it
nests the entire tree $\overline{T}$. By exploiting the
correspondence between the left-to-right order of the vertices of
$\overline{T}$ along $\ell$ and the labeling of $\overline{T}$, we
can prove that the drawing of $\overline{T}$ is planar.

\begin{lemma}\label{lem:tree_planar}
The drawing of the anchored tree $\overline{T}$ is planar.
\end{lemma}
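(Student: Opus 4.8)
The plan is to establish planarity of the drawing of $\overline{T}$ by arguing that the combinatorial embedding specified by IP-\ref{ip:extra} is preserved, and that no two edges of $\overline{T}$ cross on the top-page. Since all edges of $\overline{T}$ are top-drawn and the order of the vertices along $\ell$ determines crossings (two top-drawn edges cross iff their endpoints alternate), it suffices to show that the left-to-right order of the vertices of $\overline{T}$ along $\ell$ is consistent with a planar top-page drawing. The key tool is the correspondence, emphasized just before the statement, between the left-to-right order along $\ell$ and the pre-order labeling of $\overline{T}$.

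First I would recall how the vertices are placed: each \an has a fixed position dictated by the earlier anchor-placement rules, each \cc $c$ is placed between a pair of consecutive \ans of $\overline{T}$ so that the label of $c$ is larger than that of the \an to its left and smaller than that of the \an to its right, and among \ccs between the same pair the smaller-labeled one is to the left. I would then argue that this placement makes the left-to-right order along $\ell$ coincide exactly with the label order on the vertices of $\overline{T}$: both \ans and \ccs appear sorted by their pre-order labels. The crucial point is that this is well-defined because Lemma~\ref{lem:label_component} guarantees, for each \cc, the existence of \ans with smaller and larger labels, hence a valid pair of consecutive \ans to place it between.

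Next, I would show that under an ordering of the vertices by pre-order label, drawing every tree edge on a single page yields no crossings. This is the heart of the argument: an edge of a rooted tree joins a parent $p(c)$ to a child $c$, and in a pre-order traversal the labels of all descendants of $c$ form a contiguous interval starting at the label of $c$. Consequently the interval spanned by the tree edge $(p(c),c)$ along $\ell$ is nested inside, or disjoint from, the interval spanned by any other tree edge; two such intervals can never properly interleave. I would make this precise by observing that for two tree edges $(p(c),c)$ and $(p(c'),c')$, either one of $c,c'$ is a descendant of the other, in which case the corresponding descendant-intervals nest, or neither is, in which case the descendant-intervals are disjoint; in both cases the endpoints do not alternate, so the edges do not cross. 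This is exactly the classical fact that a rooted tree drawn with its vertices in pre-order has a one-page (outerplanar) drawing.

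The main obstacle, and the step I would treat most carefully, is reconciling this clean label-based argument with the fact that the positions along $\ell$ are only partially specified: between consecutive \ans of $\overline{T}$ there may sit \ans not belonging to $\overline{T}$, vertices of $C$, or previously drawn trees. I would handle this by noting that the interleaved foreign vertices are irrelevant to crossings among edges of $\overline{T}$, since whether two top-drawn edges of $\overline{T}$ cross depends only on the relative order of the four endpoints of $\overline{T}$, and that relative order is fixed by the labeling regardless of what lies between. Finally, I would invoke Lemma~\ref{lem:order} (and the remark preceding this statement that all \ccs of $\overline{T}$ lie between its extreme \ans) to confirm that no edge of $\overline{T}$ is forced to interleave with edges outside $\overline{T}$ in a way that violates IP-\ref{ip:extra}, thereby concluding that the drawing of $\overline{T}$ is planar.
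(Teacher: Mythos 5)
Your proposal is correct and follows essentially the same route as the paper: both identify the left-to-right order of the vertices of $\overline{T}$ along $\ell$ with its pre-order labeling and then conclude that no two top-drawn tree edges can have alternating endpoints --- the paper by direct contradiction (an alternating edge pair would place $c'_1$, and hence its child $c'_2$, inside a subtree rooted at an earlier child of $c_1$, forcing the label of $c'_2$ below that of $c_2$), you via the classical nesting/disjointness of pre-order intervals, and your observation that interleaved foreign vertices are irrelevant is sound since crossings depend only on the relative order of the four endpoints. One minor caution: your claim that disjoint \emph{descendant} intervals already preclude alternation elides the fact that a tree edge's parent endpoint lies outside the child's descendant interval, so the argument still needs the step of locating the second edge's parent inside an earlier-child subtree --- which is precisely the step the paper spells out.
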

\begin{proof}
Assume to the contrary that $e=(c_1, c_2)$ and $e'=(c'_1,c'_2)$ of
$\overline{T}$ cross. Since $e$ and $e'$ are top-drawn, their
endpoints alternate along $\ell$. Let the order on $\ell$ be
$c_1\rightarrow c'_1\rightarrow c_2\rightarrow c'_2$. Hence, $c_1$
is the parent of $c_2$, as the label of $c_1$ is smaller than that
of $c_2$ and they are adjacent in $\overline{T}$. Similarly, $c'_1$
is the parent of $c'_2$. Since between $c_1$ and $c_2$ are drawn
subtrees of $\overline{T}$ rooted at children of $c_1$ other than
$c_2$, it follows that $c'_1$ and $c'_2$ belong to a subtree rooted
at a child of $c_1$, different from $c_2$, which implies that the
label of $c'_2$ is smaller than that of $c_2$, a contradiction.
\end{proof}

Recall that we have not fully specified the exact positions of the
\ccs of $\overline{T}$ along $\ell$. Consider the following
scenario. There is a path $P$ of top-drawn edges (e.g., non-marked
edges incident to $C$ and/or edges of previously drawn trees)
joining a pair of consecutive \ans of $\overline{T}$ and our
algorithm must place an \cc $c$ of $\overline{T}$ between them.
Since $c$ is nested by an edge of $P$ and all edges of
$\overline{T}$ are top-drawn, an edge connecting $c$ with an \cc of
$\overline{T}$ placed between another pair of consecutive \ans of
$\overline{T}$ will cross $P$. The following lemma ensures that this
scenario cannot occur, as such a path cannot exist.

\begin{lemma}\label{lem:anchors}
Let $u_0, u_1,\ldots, u_{l+1}$, $l \geq 0$, be vertices (\ans/\ccs
are treated as vertices) drawn on $\ell$ from left to right, s.t.
$u_0$ and $u_{l+1}$ are two consecutive \ans of $\overline{T}$.
Assume that all trees anchored at $u_1,\ldots, u_l$ have been drawn
on the top-page without edge crossings by preserving the
combinatorial embedding specified by IP-\ref{ip:extra}, while $T$
has not been drawn. Then, there is an index $i \in
\{0,1,\ldots,l\}$, such that no two adjacent vertices $u_k$ and
$u_m$ exist with $0 \leq k \leq i$, $i+1 \leq m \leq l+1$ and
$(u_k,u_m)$ is top-drawn.
\end{lemma}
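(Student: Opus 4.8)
The plan is to argue by contradiction, after first observing that the required index $i$ fails to exist precisely when a top-drawn path joins $u_0$ and $u_{l+1}$ inside the interval; the fixed embedding will forbid such a path. So assume no index $i$ works, i.e. for every $i\in\{0,\dots,l\}$ some top-drawn edge $(u_k,u_m)$ with $k\le i<m$ is present. I would first dispose of the trivial cases: if no drawn top edge leaves $u_0$ to the right (i.e. $u_0$ has no top edge into the interval), then the cut $i=0$ is already free; symmetrically $i=l$ works if $u_{l+1}$ has no top edge to its left. Hence I may assume both $u_0$ and $u_{l+1}$ carry a top-drawn edge pointing into the interval.

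Next I would extract the forbidden path. Since the drawing produced so far is planar (IP-\ref{ip:1}, IP-\ref{ip:extra}), the top-drawn edges with both endpoints among $u_0,\dots,u_{l+1}$ form a non-crossing (laminar) family of arcs, so any two of them are nested, disjoint, or share an endpoint. Starting at $u_0$ and repeatedly following the edge covering the current cut that reaches \emph{furthest} to the right, the non-crossing property forces consecutive covering edges to share their meeting endpoint (any other overlap would be a genuine crossing), so the greedy walk advances monotonically and never gets stuck. Under the assumption that every cut is covered, it reaches $u_{l+1}$, yielding a top-drawn path $P$ from $u_0$ to $u_{l+1}$ that uses only $u_1,\dots,u_l$ internally and therefore nests every vertex strictly between $u_0$ and $u_{l+1}$. (This is just the routine fact that, for a laminar arc family on collinear points, $u_0$ and $u_{l+1}$ lie in one component iff every intermediate gap is covered.)

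Then I would derive the contradiction from IP-\ref{ip:extra}. As $u_0$ and $u_{l+1}$ are \ans of $\overline{T}$, each is adjacent in $G$ to a \cc of the connected tree $T$, and the linking edges are top-drawn because all edges of $\overline{T}$ are top-drawn; since $T$ is connected these provide a path $Q$ from $u_0$ to $u_{l+1}$ through $T$, internally disjoint from $P$ (no $u_1,\dots,u_l$ is an \an of $\overline{T}$, because $u_0,u_{l+1}$ are consecutive, so the \ccs on $Q$ are distinct from them). In any book embedding the spine splits the rotation at each vertex into a contiguous top-arc and a contiguous bottom-arc, so at $u_0$ and $u_{l+1}$ the edges of $P$ and of $Q$ sit on the top side while the marked edges to $C$ sit on the bottom side. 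Thus $P\cup Q$ is a cycle drawn entirely on the top page whose interior, sealed off by $P$, confines $T$ to the interval $[u_0,u_{l+1}]$. Since $u_0,u_{l+1}$ are consecutive \ans of $\overline{T}$, any further \an of $\overline{T}$ lies strictly outside this interval, yet by Lemma~\ref{lem:basic} and Lemma~\ref{lem:label_component} the tree $T$ must attach to it; hence a top-drawn edge of $\overline{T}$ is forced to leave the region bounded by $P$ and cross it, contradicting the planarity maintained by IP-\ref{ip:extra}.

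I expect the last planarity step to be the main obstacle. The greedy construction of $P$ is routine once the non-crossing property is invoked, but turning $P$ into a contradiction requires the book-embedding observation that the top/bottom incidences are contiguous in every rotation, combined with a careful use of IP-\ref{ip:extra} to place $T$ on one side of $P\cup Q$. I would also have to treat the borderline configuration in which $\overline{T}$ has exactly the two \ans $u_0,u_{l+1}$: there the ``further \an'' argument is vacuous, and I anticipate one must instead invoke the drawing order of Lemma~\ref{lem:dag}, which guarantees that every already-drawn tree whose edges enter the interval was processed before $T$, to rule out that such a path $P$ could have been drawn at all.
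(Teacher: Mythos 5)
Your greedy construction of the top-drawn path $P(u_0\rightarrow u_{l+1})$ coincides with the paper's first step, but the way you close the argument has a genuine gap --- two, in fact. The decisive step is circular: from the existence of $P$ you conclude that, when $T$ is eventually drawn, an edge of $\overline{T}$ is ``forced to cross $P$.'' But the inability to draw $T$ without crossings is exactly the failure scenario this lemma exists to preclude (see the paragraph preceding it in the paper); it is not a contradiction of any hypothesis. The hypotheses only assert that the trees preceding $T$ have been drawn planarly and consistently with IP-\ref{ip:extra}; a hypothetical future crossing contradicts neither this nor the planarity of the abstract embedding of $G$, and your claim that ``$P\cup Q$ is a cycle drawn entirely on the top page'' is simply false, since $Q$ runs through the not-yet-placed vertices of $T$ and is not drawn at all. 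Moreover, your contradiction needs a third \an of $\overline{T}$ outside the interval $[u_0,u_{l+1}]$, and you concede that the case of exactly two \ans --- which by Lemma~\ref{lem:basic}(\ref{lem:2-anchors}) is the guaranteed minimum and entirely typical --- is unresolved. The fallback you suggest via Lemma~\ref{lem:dag} cannot work: the topological order explicitly \emph{permits} previously drawn trees to have \ans (and hence edges) inside the interval, which is precisely how a path $P$ would arise in the first place.

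Relatedly, you never invoke the max-degree-$4$ bound or the marked-edge placement rule, and the lemma genuinely depends on both. The paper closes as follows: it first shows $P$ must contain a vertex of $C$ --- otherwise $P$ together with the path through $T$ would form a cycle of \bvs, impossible because bridge-blocks contract to a forest (this is the only legitimate role of your path $Q$). It then takes the rightmost vertex $u\in C$ on $P$ and its right neighbor $c$ on $P$, observes that $c$, being adjacent to a vertex of $C$, is an \an with a marked edge $(v,c)$, and that $u=v$ is forced: an \an is placed directly next to the endpoint of its marked edge, so the order $v\rightarrow u\rightarrow c$ on $\ell$ cannot occur. Hence $(u,c)$ is a double edge, top-drawn as an edge of $P$ and bottom-drawn as a marked edge; then $u$ carries two edges of $C$, the double edge $(u,c)$ contributing two more, plus one further edge of $P$, giving $\deg(u)\geq 5$ and contradicting 4-planarity. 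This degree-counting step at the rightmost $C$-vertex of $P$ is the key idea your proposal is missing, and a purely topological argument of the kind you sketch cannot substitute for it, since the statement would fail for higher degree bounds where such paths $P$ can exist.
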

\begin{proof}
Since all trees anchored at $u_1,\ldots, u_l$ have been drawn, edges
incident to $u_1,\ldots, u_l$ are present in the drawing. For a
proof by contradiction, we make the following assumption: For all $i \in
\{0,\ldots,l\}$, there are two adjacent vertices $u_k$ and $u_m$
with $0 \leq k \leq i$, $i+1 \leq m \leq l+1$ and $(u_k,u_m)$ is on
the top-page. We first prove that there is a top-drawn path $P(u_0
\rightarrow u_{l+1}):~u_0\rightarrow u_{j_1} \ldots u_{j_p}
\rightarrow u_{l+1}$ consisting of vertices of
$\{u_0,\ldots,u_{l+1}\}$, whose edges are top-drawn and for each
edge of $P(u_0 \rightarrow u_{l+1})$ there is not a top-drawn edge
with endpoints in $\{u_0,\ldots,u_{l+1}\}$ that nests it. The
existence of $P(u_0 \rightarrow u_{l+1})$ will imply the desired
contradiction.

For $i=0$, by our assumption it follows that for some $m \in
\{1,\ldots,l+1\}$, edge $(u_0,u_m)$ exists and is on the top-page.
Let $j_1 \in \{1,\ldots,l+1\}$ be the maximum s.t. $(u_0,u_{j_1})$
is drawn on the top-page. If $j_1 = l+1$, then $P(u_0 \rightarrow
u_{l+1})$ exists. Let $j_1\neq l+1$. For $i=j_1$, it follows that
for some $k \in \{0,\ldots,j_1\}$ and $m \in \{j_1+1,\ldots,l+1\}$,
$(u_k,u_m)$ exists and is on the top-page. $k \notin
\{1,\ldots,j_1-1\}$, since otherwise $(u_0,u_{j_1})$ and $(u_k,u_m)$
would cross, which is not possible since the combinatorial embedding
specified by IP-\ref{ip:extra} is planar. Also, $k \neq 0$, since
$j_1$ was the maximum of $\{1,\ldots,l+1\}$, s.t. $(u_0,u_{j_1})$ is
drawn on the top-page. Hence, $k=j_1$. Let $j_2 \in
\{j_1+1,\ldots,l+1\}$ be the maximum, s.t. $(u_{j_1},u_{j_2})$ is
drawn on the top-page, and proceed as in the case $i=0$. This
procedure will eventually lead to $P(u_0\rightarrow u_{l+1})$. We
claim that $P(u_0\rightarrow u_{l+1})$ has at least one vertex of
$C$. Assume to the contrary that $P(u_0\rightarrow u_{l+1})$
contains only \ans/\ccs, which cannot belong to $\overline{T}$,
since $u_0$ and $u_{l+1}$ are consecutive \ans of $\overline{T}$. By
Lemma~\ref{lem:basic}(\ref{lem:leafs}), $u_0$ and $u_{l+1}$ are
leaves of $\overline{T}$.  Hence, the path from $u_0$ to $u_{l+1}$
through $T$ and $P(u_0\rightarrow u_{l+1})$ form a cycle of
\ans/\ccs, a contradiction. Let $u$ be the rightmost vertex of $C$
in $P(u_0\rightarrow u_{l+1})$ and $c$ be the neighbor of $u$ in
$P(u_0\rightarrow u_{l+1})$ to the right of $u$ on $\ell$. Since
$u_{l+1}$ is an \an of $\overline{T}$, $c$ is well-defined and is
either an \an or an \cc. Now observe that $c$ is adjacent to $u$ and
$u\in C$, which implies that $c$ is an \an and hence is incident to
a marked edge, say $(v,c)$, where $v\in C$ ($u=v$ is possible).
Assume that $u \neq v$. Then, $v$ is the leftmost neighbor of $c$,
which suggests that the order on $\ell$ is: $v\rightarrow
u\rightarrow c$. However, such an order cannot occur since $(v,c)$
is marked and $u$ in between. It follows that $u=v$. Since $(u,c)\in
P(u_0\rightarrow u_{l+1})$ (i.e. top-drawn) and is marked (i.e.
bottom-drawn), $(u,c)$ is double edge. Now observe that $u \in C$
has two incident edges on $C$, which contribute 2 to its degree.
Double edge $(u,c)$ also contributes 2. Up to now $deg(u)=4$. The
contradiction follows from $u$'s additional edge in
$P(u_0\rightarrow u_{l+1})$.
\end{proof}

We are now ready to specify the exact positions of the \ccs of
$\overline{T}$ along $\ell$. Recall that the \ans of $\overline{T}$
are denoted by $c_i$, $i=1,\ldots,t$, s.t. $c_i$ is to the left of
$c_{i+1}$. Now assume that a particular number of \ccs of
$\overline{T}$ should be drawn between two consecutive \ans $c_i$
and $c_{i+1}$ of $\overline{T}$, $i=1,\ldots,t-1$. By
Lemma~\ref{lem:anchors}, there is a pair of vertices that are
between $c_i$ and $c_{i+1}$ along $\ell$ and there is not a
top-drawn edge with endpoints between $c_i$ and $c_{i+1}$ nesting
both of these vertices. We benefit from this by placing between this
particular pair of vertices all \ccs of $\overline{T}$ that must
reside between $c_i$ and $c_{i+1}$. Their relative order is not
affected, i.e., for \ccs placed between $c_i$ and $c_{i+1}$, the one
with smaller label is to the left. Lemma~\ref{lem:tree_planar}
ensures the planarity of $\overline{T}$. It remains to prove that
the combinatorial embedding specified by IP-\ref{ip:extra} is
preserved.

\begin{lemma}\label{lem:embedding}
Assume that all trees that precede $T$ in a topological sorting of
$G_{aux}^T$ have been drawn on the top-page without edge crossings
by preserving the combinatorial embedding specified by
IP-\ref{ip:extra}. When $\overline{T}$ is drawn, the combinatorial
embedding specified by IP-\ref{ip:extra} is also preserved.
\end{lemma}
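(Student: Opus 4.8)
The plan is to show that the only way the combinatorial embedding specified by IP-\ref{ip:extra} could be violated when drawing $\overline{T}$ is through an edge crossing, and since we have already established planarity of the drawing (no two top-drawn edges cross and no edge crosses the spine), the cyclic order of edges around each vertex is forced to agree with the prescribed embedding. Concretely, I would argue that a combinatorial embedding of a planar graph drawn in a book (with all vertices on the spine and every edge entirely on one page) is completely determined by the left-to-right order of the vertices on $\ell$ together with the top/bottom assignment of each edge. Therefore, to preserve IP-\ref{ip:extra} it suffices to check that the newly drawn edges of $\overline{T}$ respect, around each \an and \cc, the counterclockwise rotation dictated by the given planar embedding of $G$.

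First I would recall how the \ccs of $\overline{T}$ are placed: by the construction, each \cc $c$ is put between the pair of consecutive \ans of $\overline{T}$ whose labels bracket the label of $c$, and among \ccs between the same pair the smaller label lies to the left. By Lemma~\ref{lem:tree_planar} the resulting drawing of $\overline{T}$ is planar, and by Lemma~\ref{lem:order} (applied after Lemma~\ref{lem:anchors} lets us avoid nesting paths) no previously drawn top-edge separates the \ans of $\overline{T}$, so the whole tree is nested consistently inside any edge that nests one of its \ans. The key step is then to verify that the rotation system realized by this placement is exactly the one induced by the pre-order labeling, which by construction was \emph{defined} from the counterclockwise edge order around each \cc starting at the edge to its parent. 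I would make this precise by observing that for an \cc $c$ with parent $p(c)$ to its left, its children are placed to the right in increasing-label order, which mirrors the counterclockwise order of the edges $(c,p(c)), (c,\text{child}_1), (c,\text{child}_2),\dots$ used to define the labeling; hence the rotation at $c$ agrees with the embedding of IP-\ref{ip:extra}.

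The main obstacle will be the interaction at the \ans of $\overline{T}$, where edges of $\overline{T}$ meet edges that were drawn earlier (the marked/non-marked edges to $C$ and the edges of previously drawn trees sharing that \an, per Lemma~\ref{lem:basic}(\ref{lem:common_anchors})). At such a shared \an I must check that inserting the edges of $\overline{T}$ does not reorder the already-fixed rotation; this is exactly where Lemma~\ref{lem:anchors} is needed, since it guarantees a gap between each pair of consecutive \ans into which the \ccs can be slotted without any top-drawn edge nesting them from outside $\overline{T}$, so the new edges occupy a contiguous angular range at the \an and slip into the prescribed rotation without crossing or permuting the existing incident edges.

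I would conclude by assembling these observations: the placement respects the local counterclockwise order at every \cc (by the labeling definition) and at every \an (by the contiguity afforded by Lemma~\ref{lem:anchors} together with planarity), and since all edges remain on the top-page entirely and no edge crosses the spine, the global rotation system of the extended drawing coincides with the planar embedding of $G$. Hence IP-\ref{ip:extra} is preserved after $\overline{T}$ is drawn.
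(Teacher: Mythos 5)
Your reduction of the problem to verifying rotations is sound: a book drawing (all vertices on $\ell$, each edge entirely on one page) determines the rotation system, and your treatment of the \ccs of $\overline{T}$ matches the paper's, which likewise disposes of them in one sentence because placement by increasing labels mirrors, by definition, the counterclockwise order used to produce the labeling. The gap is at the \ans, which you correctly identify as the main obstacle but do not actually resolve. Planarity of the extended drawing only tells you that the realized rotation system is \emph{some} planar embedding; it does not force it to be the one specified by IP-\ref{ip:extra}, since at an \an $c$ the edges of $\overline{T}$ could a priori be inserted crossing-free on the wrong side of a previously drawn edge $e_p$ incident to $c$ (e.g., with $\overline{T}$ drawn outside the nesting of $e_p$ even though the prescribed embedding puts $T$ inside the face bounded by $e_p$, or vice versa). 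Your appeal to Lemma~\ref{lem:anchors} cannot close this: that lemma only guarantees, between two consecutive \ans of $\overline{T}$, a gap not covered by a top-drawn edge with endpoints inside that interval --- it is the tool the paper uses to fix the \emph{exact positions} of the \ccs, and it says nothing about the angular order of the new edges relative to the old ones at a shared \an; indeed, edges from outside $\overline{T}$ may perfectly well nest the \ccs (the top-drawn edge of $C$ always does, and so may an edge $e_p$ as above).

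What the paper does instead, and what your argument is missing, is a pointwise comparison of rotations at each \an $c$: fix the marked edge $e_c$ as the reference, take any previously drawn edge $e_p$ and any tree edge $e_t$ incident to $c$ with $e_p$ preceding $e_t$ clockwise from $e_c$ in the prescribed embedding, and consider the cycle $C(e_p)$ bounding the face of $G$ containing $e_p$. The prescribed embedding places $T$ in the interior of $C(e_p)$, so some top-drawn edge of $C(e_p)$ not belonging to $\overline{T}$ nests an \an of $\overline{T}$, and Lemma~\ref{lem:order} (not Lemma~\ref{lem:anchors}) then forces that edge to nest \emph{all} \ans of $\overline{T}$, including $c$ itself; since $c$ lies on $C(e_p)$, that edge can only be $e_p$, whence $c$ is the leftmost \an and the entire drawing of $\overline{T}$ is nested under $e_p$, which realizes exactly the prescribed clockwise order $e_c, e_p, e_t$ at $c$. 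Some such argument deriving the nesting of $\overline{T}$ under $e_p$ from the face structure is indispensable; without it, your statement that the new edges ``slip into the prescribed rotation'' asserts the conclusion rather than proving it.
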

\begin{proof}
Since the drawing of $\overline{T}$ preserves the order of the edges
of $\overline{T}$ around all \ccs, the combinatorial embedding
specified by IP-\ref{ip:extra} is preserved for all \ccs of
$\overline{T}$. We will prove that the lemma holds for all \ans of
$\overline{T}$. Let $c$ be an \an of $\overline{T}$ and denote by
$e_c$ the marked edge incident to $c$ (which is bottom-drawn). Let
also $e_p$ and $e_t$ be two edges incident to $c$ s.t. $e_p$ is an
edge among those drawn before $T$ and $e_t$ is an edge of
$\overline{T}$. We restrict our proof to the case where in the
combinatorial embedding specified by IP-\ref{ip:extra},  $e_p$
precedes $e_t$ in the clockwise traversal of the edges around $c$
when starting from $e_c$ and $c$ is the left endpoint of $e_p$ along
$\ell$. The remaining cases are treated similarly. Then, there is a
simple path of drawn edges (other than $e_p$) that joins the two
endpoints of $e_p$ and together with $e_p$ forms a face of $G$. Let
$C(e_p)$ be the cycle bounding this face. Since $e_p$ precedes $e_t$
in the clockwise traversal of the edges around $c$ when starting
from $e_c$, $T$ lies in the interior of $C(e_p)$. Hence, there is a
top-drawn edge that belongs to $C(e_p)$ (possibly edge $e_p$) that
does not belong to $\overline{T}$ and that nests an \an of
$\overline{T}$. By Lemma~\ref{lem:order}, this edge nests all \ans
of $\overline{T}$ (including \an $c$). Since $c$ belongs to
$C(e_p)$, it follows that the only edge of $C(e_p)$ that nests
$\overline{T}$ is edge $e_p$. Thus, $c$ is the leftmost \an of
$\overline{T}$ and the entire drawing of $\overline{T}$ is nested by
$e_p$. After drawing $\overline{T}$, $e_p$ still precedes $e_t$ in
the clockwise traversal of the edges around $c$ when starting from
$e_c$, as desired.
\end{proof}

\begin{figure}[t]
\centering
    \begin{minipage}[b]{.17\textwidth}
    \centering
    \subfloat[\label{fig:cutvertex_bad_case}{}]
    {\includegraphics[width=\textwidth,page=1]{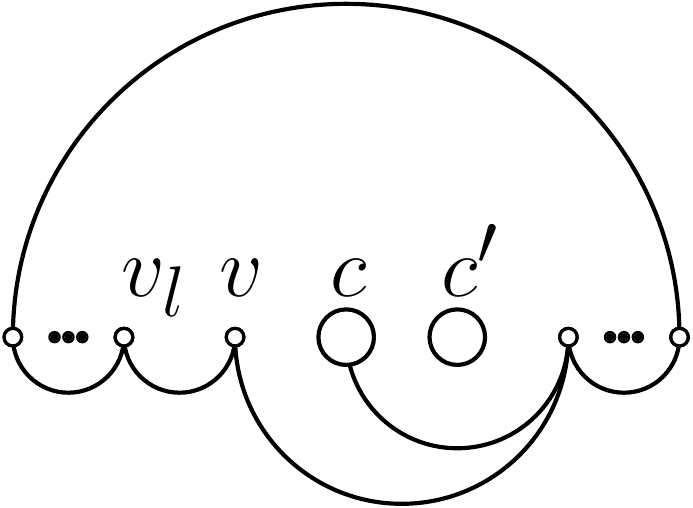}}
    \end{minipage}
    \hfill
    \begin{minipage}[b]{.36\textwidth}
    \centering
    \subfloat[\label{fig:cutvertex_push}{}]
    {\includegraphics[width=\textwidth,page=2]{images/cutvertex}}
    \end{minipage}
    \hfill
    \begin{minipage}[b]{.45\textwidth}
    \centering
    \subfloat[\label{fig:cutvertex_solved}{}]
    {\includegraphics[width=\textwidth,page=3]{images/cutvertex}}
    \end{minipage}
    \caption{Configuration considered in Lemma~\ref{lem:cutvertices}:
    (a) A situation in which placing \an $c$ to the left of $v$ creates
    crossings.
    (b) Moving \bv $c'$ to the right of $v_r$.
    (c) Edge $(v,v_r)$ can be drawn on the top half-plane.}
    \label{fig:lemma_cutvertices}
\end{figure}

In the following lemma, we turn our attention to the case where $C$
contains a vertex of degree $2$ in $\inGraphInc{C}$ (other than its
leftmost or rightmost vertex). We will utilize this lemma later.

\begin{lemma}\label{lem:cutvertices}
Let $v$ be a vertex of $C$ with degree 2 in $\inGraphInc{C}$ that is
not the left/right-most vertex of $C$. Let also $v_r$ ($v_l$) be its
next neighbor on $C$ to its right (left resp.). Since edge $(v,v_r)$
belongs to $C$, it is drawn on the bottom-page. However, it can also
be drawn on the top-page without edge-crossings, while the
combinatorial embedding specified by IP-\ref{ip:extra} is
maintained.
\end{lemma}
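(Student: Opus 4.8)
The plan is to keep the spine order essentially fixed, move the \bvs out of the open interval $(v,v_r)$, and only then redraw the arc $(v,v_r)$ on the top-page. Two observations drive the reduction. First, since no edge crosses the spine (IP-\ref{ip:1}), two top arcs interact only by nesting, so $(v,v_r)$ redrawn on the top-page meets another top edge exactly when that edge has precisely one endpoint strictly between $v$ and $v_r$ on $\ell$. Second, for the combinatorial embedding of IP-\ref{ip:extra}: at $v$ nothing happens, because $v$ has degree $2$ in $\inGraphInc{C}$ and the rotation of a degree-$2$ vertex is insensitive to the page of one of its edges; at $v_r$ the arc $(v,v_r)$ changes pages, so the rotation is preserved precisely when $(v,v_r)$ is the innermost (most nearly horizontal) of the left arcs at $v_r$ — that is, when $v$ lies immediately to the left of $v_r$ — since then sliding the arc across the spine moves it over no other edge in the cyclic order around $v_r$. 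Hence everything reduces to emptying $(v,v_r)$ so that $v$ becomes the immediate left neighbour of $v_r$.

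First I would describe the interval. As $v$ and $v_r$ are consecutive on $C$, every vertex strictly between them on $\ell$ is a \bv. Because $\deg(v)=2$ in $\inGraphInc{C}$, the vertex $v$ carries no marked edge, so no \an is pinned to the right of $v$; consequently a top edge with a single endpoint in $(v,v_r)$ cannot be an \an-to-$C$ edge reaching left of $v$, and must instead belong to an anchored tree $\overline{T}$ that straddles an endpoint of the interval — e.g.\ an \an $c$ lying to the left of $v$ joined to an \cc $c'$ inside $(v,v_r)$ with $c<v<c'<v_r$, the crossing-producing configuration of Fig.~\ref{fig:cutvertex_bad_case}.

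Next I would empty the interval by relocation. Taking the offending \ccs of $(v,v_r)$ in their current left-to-right order, I would slide them to the slot immediately to the right of $v_r$ (Fig.~\ref{fig:cutvertex_push}); since an \cc is adjacent to other \bvs only, this touches no marked edge and hence no rotation along $C$, while an edge that formerly left the interval to the left (such as $(c,c')$) now nests the whole of $[v,v_r]$ and an edge leaving it to the right becomes local. To certify that this is planar and embedding-faithful I would lean on the anchored-tree machinery: the moved part retains the label order of each $\overline{T}$, so Lemma~\ref{lem:tree_planar} still yields planarity; Lemma~\ref{lem:order} guarantees that an edge nesting one \an of a moved tree nests them all, so no half-nesting survives; and Lemma~\ref{lem:anchors} supplies a crossing-free slot into which the moved part can be reinserted, exactly as in the preservation argument of Lemma~\ref{lem:embedding}.

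Once the interval is empty, $v$ is the immediate left neighbour of $v_r$, the arc $(v,v_r)$ is the innermost left arc at both endpoints, and the two observations of the first paragraph apply: redrawing it on the top-page nests only edges whose endpoints both lie in $[v,v_r]$ (no crossing, Fig.~\ref{fig:cutvertex_solved}) and slides the arc across the spine without reordering the rotation at $v$ or $v_r$ (IP-\ref{ip:extra} preserved), which proves the claim. I expect the relocation to be the crux: the genuinely delicate point is to verify that every \bv sitting between $v$ and $v_r$ can be pushed past $v_r$ at once crossing-free and faithfully to the fixed embedding, and in particular that no \an seated at $v_r$ through a bottom (marked) edge is left in place to spoil the boundary-arc condition at $v_r$. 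This is where the degree-$2$ hypothesis together with Lemmas~\ref{lem:order}, \ref{lem:anchors} and~\ref{lem:embedding} must carry the weight, rather than any new geometric idea.
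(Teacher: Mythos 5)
Your reduction breaks at exactly the configuration the lemma is really about. You argue that because $\deg(v)=2$ in $\inGraphInc{C}$ the vertex $v$ carries no marked edge, and you conclude from this that no \an can be pinned inside the interval between $v$ and $v_r$, so that every offending \bv there is an \cc of a straddling anchored tree and can be slid past $v_r$. But an \an can be pinned inside that interval by a marked edge at $v_r$, not at $v$: if $(c,v_r)$ is the marked edge of an \an $c$ and it is the left edge of $v_r$ (with $deg(v_r)=4$ in $\inGraphInc{C}$, or $deg(v_r)=3$ with $v_r$ rightmost on $C$), the placement rules of Case~2 put $c$ \emph{directly to the left of} $v_r$, i.e., strictly between $v$ and $v_r$. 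This is precisely the situation of Fig.~\ref{fig:cutvertex_bad_case}, and your relocation cannot dispose of it: $(c,v_r)$ is bottom-drawn, so on the bottom page, which already carries the spine-spanning path of $C$-edges, $c$ must remain spine-adjacent to $v_r$; pushing $c$ to the right of $v_r$ would move $(c,v_r)$ to the other side in the rotation at $v_r$ and violate IP-\ref{ip:extra}, while pushing it left of $v$ makes $(c,v_r)$ cross a bottom edge of $C$. Hence the interval cannot in general be emptied, and the terminal state of your plan ($v$ becoming the immediate left neighbour of $v_r$) is unreachable. You flag this very case in your closing sentence, but the tools you invoke (the degree-2 hypothesis plus Lemmas~\ref{lem:order}, \ref{lem:anchors} and~\ref{lem:embedding}) do not resolve it: the degree condition constrains $v$, not $v_r$.

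The paper's proof takes a different turn exactly here, and it is the idea your argument is missing: instead of moving $c$, it moves $v$ itself, inserting $v$ between $c$ and $v_r$. This creates a bottom-page crossing between $(v_l,v)$ and the marked edge $(c,v_r)$, which is then repaired by redrawing $(c,v_r)$ on the top page. That redraw can be blocked by a \bv $c'$ between $c$ and $v_r$, and this is where the case analysis earns its keep: $c'$ cannot be an \an, since it would be a second \an pinned to the left of $v_r$ through a marked edge, a configuration the placement rules never produce (in Case~1 both \ans go to the right of their common neighbour); so $c'$ is an \cc, and only then does the relocation to the right of $v_r$ apply, via the extension of Lemma~\ref{lem:anchors} with $u_0=v_r$ and $u_{l+1}=u'$, which is legitimate for \ccs because they carry only top-drawn edges. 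Your \cc-relocation step is essentially sound and matches this final stage of the paper, but without the ``move $v$ and lift $(c,v_r)$ to the top page'' step the proof does not go through.
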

\begin{proof}
If no \bv is drawn between $v$ and $v_r$, then obviously $(v,v_r)$
can be drawn on the top-page. Otherwise, we will move the \bvs
in between to the left of $v$, so that $v$ and $v_r$ are consecutive
along $\ell$. This is not possible if there is an \an $c$ between
$v$ and $v_r$ s.t. $(c,v_r)$ is bottom-drawn (see
Fig.~\ref{fig:cutvertex_bad_case}). Alternatively, we could place
$v$ between $c$ and $v_r$. However, in this case $(v_r,c)$ and
$(v_l,v)$ cross. We could overcome this problem if $(v_r,c)$ is
redrawn on the top-page. This is not possible if there is a \bv $c'$
between $c$ and $v_r$. We have two cases:
\begin{inparaenum}[$(i)$]
\item \emph{$c'$ is an \an, i.e., adjacent to a vertex of $C$}. Then $c'$ can only
be adjacent to $v_r$ through a marked edge. Hence, $c$ and $c'$ are
two \ans that are both to the left of $v_r$ and adjacent to $v_r$
through marked edges, which is not valid by the algorithm, a
contradiction. \item \emph{$c'$ is an \cc}. Then $c'$ belongs to a
tree $T$. All \ccs of $T$ are placed between the left/right-most
\ans of $\overline{T}$. Let $u$ and $u'$ be consecutive \ans of
$\overline{T}$, ordered on $\ell$ as $u\rightarrow c'\rightarrow
u'$; $u=c$ is possible (see the left part of
Fig.~\ref{fig:cutvertex_push}). However, $u'$ cannot be between $c$
and $v_r$ (otherwise the previous case applies for $u'$), thus, $u'$ is
to the right of $v_r$. We claim that Lemma~\ref{lem:anchors} holds
for $u_0=v_r$ and $u_{l+1}=u'$, even though $u_0$ is not an \an but
a vertex of $C$ (the detailed proof is similar to the one of
Lemma~\ref{lem:anchors}). Hence, there are two consecutive vertices
between $v_r$ and $u'$ s.t. $c'$ can be placed between them (and not
between $c$ and $v_r$); see Fig.~\ref{fig:cutvertex_push}. The same
holds for every \cc that was initially placed between $c$ and $v_r$.
If we move all \ccs between $v_r$ and $u'$ by keeping their relative
order unchanged, then $(c,v_r)$ can be drawn on the top-page, and
the problem is resolved (see Fig.~\ref{fig:cutvertex_solved}).
\end{inparaenum}
\end{proof}

Up to now, we have drawn $\inGraphInc{C}$, s.t., every bridge-block
of $\inGraph{C}$ is contracted to a \bv that lies on $\ell$ and each
edge is drawn either on the bottom (if it is a marked edge) or on
the top-page (otherwise). Also, we preserved the order of the
vertices of $C$ on $\ell$ and the embedding of $G$ specified by
IP-\ref{ip:extra}. Hence, crossings in $\outGraphInc{C}$ cannot
occur. Next, we describe how to recursively proceed. Let $c$ be a
\bv of $\inGraph{C}$ with outerface $\mathcal{F}_c$. Initially,
assume that $\mathcal{F}_c$ is a simple cycle. If $c$ is an \an,
denote by $w_0$ the vertex of $\mathcal{F}_c$ incident to the marked
edge of $c$. If $c$ is an \cc, then $c$ belongs to an anchored tree.
In this case, $w_0$ denotes the vertex of $\mathcal{F}_c$ adjacent
to the closest neighbor of $c$ to its left, which is well-defined
since $c$ is always placed between two consecutive \ans of the
anchored tree it belongs to. Let $w_0,w_1,\dots,w_m$ be the vertices
of $\mathcal{F}_c$, in the clockwise traversal of $\mathcal{F}_c$
from $w_0$ (see Fig.~\ref{fig:node_explode_before}).

\begin{figure}[t]
\centering
    \begin{minipage}[b]{.27\textwidth}
        \centering
        \subfloat[\label{fig:node_explode_before}{}]
        {\includegraphics[width=\textwidth,page=1]{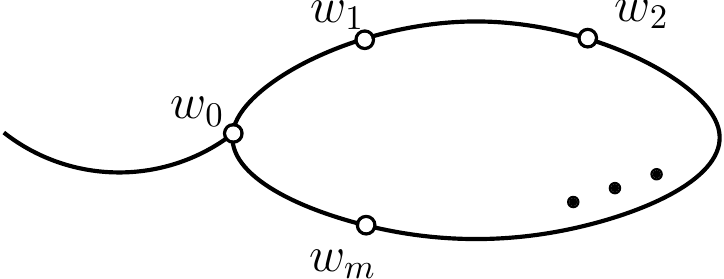}}
    \end{minipage}
    \hfill
    \begin{minipage}[b]{.27\textwidth}
        \centering
        \subfloat[\label{fig:node_explode_1}{}]
        {\includegraphics[width=.6\textwidth,page=2]{images/node_explode}}
    \end{minipage}
    \hfill
    \begin{minipage}[b]{.27\textwidth}
        \centering
        \subfloat[\label{fig:node_explode_2}{}]
        {\includegraphics[width=.7\textwidth,page=3]{images/node_explode}}
    \end{minipage}
    \begin{minipage}[b]{.4\textwidth}
        \centering
        \subfloat[\label{fig:tree_problem}{}]
        {\includegraphics[width=\textwidth,page=4]{images/node_explode}}
    \end{minipage}
    \hfill
    \begin{minipage}[b]{.4\textwidth}
        \centering
        \subfloat[\label{fig:tree_solution}{}]
        {\includegraphics[width=\textwidth,page=5]{images/node_explode}}
    \end{minipage}
        \caption{(a) The outerface of a \bv  $c$.
        (b)-(c) different cases that occur when drawing the outerface of $c$, in the case where $c$ is \an.
        (d)~\Cc $c$ needs to be repositioned.
        (e)~Its placement is determined by Lemma~\ref{lem:tree_fix}.}
    \label{fig:node_explode}
\end{figure}

First assume that $c$ is an \an, i.e., $w_0$ is incident to a marked
edge. We place the vertices of $\mathcal{F}_c$ on $\ell$ as follows:
(i)~$w_0$ occupies the position of $c$ and it is the rightmost
vertex of $\mathcal{F}_c$ on $\ell$, (ii)~$w_1$ is the leftmost
vertex of $\mathcal{F}_c$ on $\ell$, (iii)~$w_i$ is to the left of
$w_{i+1}$ for $i=1,\ldots,m-1$, and, (iv)~there are no vertices in between; 
see Fig.~\ref{fig:node_explode_1}. All edges of
$\mathcal{F}_c$ are top-drawn, except for $(w_1,w_0)$. This
placement is always feasible, except for the case in which in the
combinatorial embedding specified by IP-\ref{ip:extra} there is an
edge incident to $w_0$ that is between $(w_0,w_1)$ and the marked
edge incident to $w_0$ in the counterclockwise order of the edges
around $w_0$ when starting from $(w_0,w_1)$; see
Fig.~\ref{fig:node_explode_2}. In this case, we place $w_0$ to the
left of $w_1,\dots,w_m$, s.t. $w_0$ is the leftmost vertex of
$\mathcal{F}_c$. So, $(w_0,w_m)$ is the bottom-drawn edge of
$\mathcal{F}_c$.

Suppose now that $c$ is an \cc. Let $w$ be the closest neighbor of
$c$ to its left on $\ell$. Then, $w$ is the parent of $c$ in the
tree in which $c$ belongs to and $(w_0,w)$ is top-drawn. We place
the vertices of $\mathcal{F}_c$ as follows: (i)~$w_0$ occupies the
position of $c$ and it is the leftmost vertex of $\mathcal{F}_c$ on
$\ell$, (ii)~$w_m$ is the rightmost vertex of $\mathcal{F}_c$ on
$\ell$, (iii)~$w_i$ is to the left of $w_{i+1}$, $i=1,\ldots,m-1$,
and, (iv)~there are no vertices in between. All edges of
$\mathcal{F}_c$ are top-drawn, except for $(w_0,w_m)$. This
placement is infeasible only when in the combinatorial embedding
specified by IP-\ref{ip:extra} there is an edge incident to $w_0$,
say $(w_0,w')$, and between  $(w_0,w_m)$ and $(w_0,w)$ in the
clockwise order of the edges around $w_0$ when starting from
$(w_0,w_m)$  (see Fig.~\ref{fig:tree_problem}). In this case,
$(w_0,w')$ cannot be drawn on the top-page, as required for edges
incident to \ccs. More precisely, since $c$ has only its parent to
its left among the \bvs of the anchored tree it belongs to, it
follows that, $w'$ is to the right of $c$. Hence, $(w_0,w')$ cannot
be drawn on the top-page, without deviating the combinatorial
embedding specified by IP-\ref{ip:extra}. Since $G$ is biconnected,
$c$ is adjacent to at least another \bv, say $w''$, s.t. $w'' \notin
\{w,w'\}$. The following lemma takes care of this case.

\begin{lemma}\label{lem:tree_fix}
\Cc $c$ can be repositioned on $\ell$, s.t.: (i)~ $c$ is placed
between two consecutive \ans of $\overline{T}$. (ii)~The
combinatorial embedding specified by IP-\ref{ip:extra} is preserved
and the edges $(w_0,w)$, $(w_0,w')$ and $(c,w'')$ are top-drawn and
crossing-free. (iii)~$w_0$ is leftmost vertex of $\mathcal{F}_c$ and
$w_i$ is to the left of $w_{i+1}$, $i=1,\ldots,m-1$; All edges of
$\mathcal{F}_c$ are top-drawn, except for $(w_0,w_m)$.
\end{lemma}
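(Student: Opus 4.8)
The plan is to show that the degree-$4$ restriction pins down the local structure around $w_0$ so tightly that a single repositioning of the block works, and then to certify crossing-freeness and embedding-preservation through the nesting lemmas already established. First I would read off the edges incident to $w_0$. Since $w_0$ lies on the cycle $\mathcal{F}_c$ it carries the two cycle edges $(w_0,w_1)$ and $(w_0,w_m)$; together with the parent edge $(w_0,w)$ and the edge $(w_0,w')$ that caused the infeasibility of the naive placement, this already accounts for four incident edges. As $G$ is $4$-planar, $w_0$ is \emph{saturated}: it has degree exactly $4$ and no further incident edge. Consequently the edge to the third neighbour $w''$ (whose existence is guaranteed by biconnectivity, as noted just before the lemma) cannot leave at $w_0$; it leaves $\mathcal{F}_c$ at some vertex $w_j$ with $1\le j\le m$, so that $(c,w'')=(w_j,w'')$.

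With this structure fixed, I would retain the standard left-to-right explosion demanded by~(iii)---$w_0$ leftmost, $w_1,\dots,w_m$ in order, every cycle edge top-drawn except the base $(w_0,w_m)$---while exploiting the key relaxation of~(iii): unlike the normal \cc explosion, it no longer forbids other vertices from lying between consecutive $w_i$. The prescribed rotation at $w_0$, in which $(w_0,w')$ falls between $(w_0,w_m)$ and $(w_0,w)$, forces $w'$ to lie to the right of the entire cycle; I would therefore route $(w_0,w')$ as a top arc nesting all of $\mathcal{F}_c$ and place $w'$ together with the subtree hanging from it immediately to the right of $w_m$, while the parent edge $(w_0,w)$ stays a short top arc to the left (see Fig.~\ref{fig:tree_problem} and~\ref{fig:tree_solution}). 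The edge $(w_j,w'')$ is drawn on the top page from its interior attachment $w_j$, with $w''$ and its subtree placed so that the left-to-right order respects the labelling of $\overline{T}$. To select the gap receiving the reconfigured block I would move $c$ to lie between the pair of consecutive \ans of $\overline{T}$ dictated by its label, which by Lemma~\ref{lem:label_component} exists and secures~(i).

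Crossing-freeness then follows from the nesting machinery. Lemma~\ref{lem:anchors} guarantees, between the two consecutive \ans of $\overline{T}$ that receive $c$, a pair of adjacent positions not jointly nested by any previously top-drawn edge, so inserting $\mathcal{F}_c$ and the nesting arc $(w_0,w')$ there meets no existing top edge; Lemma~\ref{lem:order} ensures that any top edge nesting one \an of a subtree nests the whole subtree, so the subtrees of $w'$ and of $w''$ stay intact; and Lemma~\ref{lem:tree_planar} keeps the internal tree edges plane. Finally, to establish~(ii) I would replay the reasoning of Lemma~\ref{lem:embedding}: because $(w_0,w)$, $(w_0,w')$ and $(w_j,w'')$ are the only non-cycle edges at their respective endpoints, and each is nested exactly by the face-bounding top structure, their clockwise order around $w_0$ (resp.\ $w_j$) is unchanged, so the combinatorial embedding specified by IP-\ref{ip:extra} is preserved, as is the layout of $\mathcal{F}_c$ in~(iii).

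The hard part is the \emph{simultaneity}: all three external edges must be top-drawn and mutually crossing-free while the fixed rotation at $w_0$ is respected and $c$ remains between two consecutive \ans. It is precisely the degree-$4$ bound---which saturates $w_0$ and forces $w''$ to attach at a vertex $w_j\neq w_0$---that leaves exactly enough room to satisfy every constraint at once; without it, a fourth external edge at $w_0$ could reintroduce an unresolvable wedge. I would therefore treat the saturation of $w_0$ as the pivotal structural fact and organise the whole argument around it.
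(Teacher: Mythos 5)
Your structural preliminaries are correct and match the paper: $w_0$ is saturated by $(w_0,w_1)$, $(w_0,w_m)$, $(w_0,w)$ and $(w_0,w')$, so the third tree edge must attach at some other vertex of $\mathcal{F}_c$, which is why the lemma writes $(c,w'')$. But the central move of your construction is exactly backwards. You claim the prescribed rotation ``forces $w'$ to lie to the right of the entire cycle'' and you route $(w_0,w')$ as a top arc nesting all of $\mathcal{F}_c$, with $w'$ and its subtree placed to the right of $w_m$. That is precisely the configuration that the paragraph preceding the lemma shows to be infeasible: with $w_0$ leftmost and $(w_0,w_m)$ bottom-drawn, a top arc from $w_0$ to a vertex right of $w_m$ appears \emph{between $(w_0,w)$ and $(w_0,w_1)$} in the clockwise order around $w_0$ starting from $(w_0,w_m)$, whereas the embedding of IP-\ref{ip:extra} prescribes $(w_0,w')$ \emph{between $(w_0,w_m)$ and $(w_0,w)$}. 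The prescribed rotation in fact forces $w'$ to end up to the \emph{left} of $w_0$, with $(w_0,w')$ nested under the parent edge $(w_0,w)$. Accordingly, the paper's proof does the opposite of your plan: it leaves $\overline{T(w')}$ (and every \an) where it is and moves only the \cc $c$ rightwards, between the rightmost \an of $\overline{T(w')}$ and the leftmost \an of $\overline{T(w'')}$ --- consecutive \ans of $\overline{T}$ by the pre-order contiguity of the subtree placements --- after which $(w_0,w)$ and $(w_0,w')$ both go left in the correct nesting order and $(c,w'')$ goes right, all top-drawn. Your construction fails requirement (ii) of the lemma outright, and since the planarity of the whole algorithm (Lemmas~\ref{lem:order} and~\ref{lem:embedding} in particular) hinges on never deviating from IP-\ref{ip:extra}, this is not a cosmetic defect.

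There are two further problems. First, ``place $w'$ together with the subtree hanging from it immediately to the right of $w_m$'' is not a legal operation: $\overline{T(w')}$ contains \ans, whose spine positions were fixed earlier by the marked-edge rules (Cases 1 and 2) next to their neighbors on $C$; only \ccs may be repositioned, and the arguments of Lemma~\ref{lem:order} and Lemma~\ref{lem:anchors} depend on the \ans staying put. Second, your rule for choosing the receiving gap --- ``between the pair of consecutive \ans of $\overline{T}$ dictated by its label'' --- is inconsistent with the rest of your construction: under pre-order labeling, $c$'s label is smaller than every label in $T(w')$, so the label-dictated gap is $c$'s \emph{original} position, i.e., no repositioning at all, leaving the rotation violation at $w_0$ unresolved. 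The correct target gap is the one \emph{after} all of $\overline{T(w')}$, which deliberately breaks the label-to-position correspondence for this one \cc; that is the actual content of the paper's proof.
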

\begin{proof}
$w$ is the parent of $c$ and $w'$, $w''$ are children of $c$ in
$\overline{T}$, with $w'$ being the first child of $c$. For our
proof, $w''$ is its second child. So, $(c,w)$, $(c,w')$ and
$(c,w'')$ are consecutive around $c$ as in
Fig.~\ref{fig:tree_problem}. Let $\overline{T(w')}$ and
$\overline{T(w'')}$ be subtrees of $\overline{T}$ rooted at $w'$ and
$w''$, resp. Initially, $c$ is to the left of all vertices of
$\overline{T(w')}$, all vertices of $\overline{T(w')}$ are to the
left of all vertices of $\overline{T(w'')}$ and there are no \ccs of
$\overline{T}$ in between. We place $c$ between the rightmost
(leftmost) \an  of $\overline{T(w')}$ ($\overline{T(w'')}$); see
Fig.~\ref{fig:tree_solution}. So, $c$ is placed between two
consecutive \ans of $\overline{T}$. If we place the vertices of
$\mathcal{F}_c$, with $w_0$ being leftmost on $\mathcal{F}_c$ and
$w_i$ to the left of $w_{i+1}$, then $(w_0,w)$, $(w_0,w')$ and
$(c,w'')$ are drawn on the top-page and the embedding is preserved.
\end{proof}

If we process all \ccs that have to be repositioned from right to
left along $\ell$, then by Lemma~\ref{lem:tree_fix} we obtain a
planar drawing in which the embedding specified by IP-\ref{ip:extra}
is preserved once\begin{wrapfigure}{r}{.31\textwidth}
\includegraphics[width=.31\textwidth,page=6]{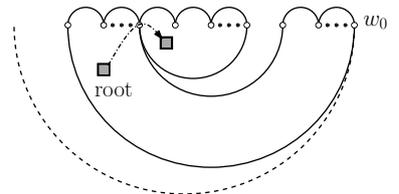}
\caption{$\mathcal{F}_c$ is not simple}
\label{fig:node_explode_cutvertex}
\end{wrapfigure} the outerface of
 each \bv is drawn and all edges
that connect \bvs are eventually drawn on the top-page. Initially,
we assumed that $\mathcal{F}_c$ is simple. If not so,
$\mathcal{F}_c$ consists of smaller simple \emph{subcycles}, s.t.
(i)~any two subcycles share at most one vertex of $\mathcal{F}_c$
and (ii)~any vertex of $\mathcal{F}_c$ is incident to at most two
subcycles. Hence, the \emph{``tangency graph''} of these subcycles
(which has a vertex for each subcycle and an edge between every pair
of subcycles that share a vertex) is a tree. Define  $w_0$ as in the
case of simple cycle and let the tangency tree be rooted at the
cycle containing $w_0$. Due to degree restriction, $w_0$ cannot be
incident to two subcycles. We draw the subcycles of $\mathcal{F}_c$
in the order implied by the Breadth First Search (BFS) traversal of
the  tangency tree. The first one (incident to $w_0$) is drawn as in
the case of simple cycle. Each next subcycle is plugged into the
drawing, as shown in Fig.~\ref{fig:node_explode_cutvertex}.

It remains to ensure that IP-\ref{ip:1} up to IP-\ref{ip:4} are
satisfied when a simple cycle, say $C_s$, is recursively drawn.
IP-\ref{ip:1} holds, since each edge is drawn either on the bottom
(if it is a marked edge) or on the top-page (otherwise) and no two
edges intersect. Lemma~\ref{lem:embedding} implies
IP-\ref{ip:extra}. If $C_s$ is the outerface of a \bv or a leaf in
the tangency tree, then IP-\ref{ip:2} trivially holds. If $C_s$ is a
non-leaf in the tangency tree, it contains at least one edge on the
bottom-page (see Fig.~\ref{fig:node_explode_cutvertex}). This
violates IP-\ref{ip:2}. However, we can benefit from
Lemma~\ref{lem:cutvertices} since the edge which is improperly
bottom-drawn is incident to a vertex (of degree four) that is not
adjacent to any other vertex in the interior of $C_s$. For the sake
of the recursion we assume that it is drawn on the top-page and once
$C_s$ is completely drawn, we redraw it on the bottom-page using
Lemma~\ref{lem:cutvertices}. If $C_s$ is the outerface of a \bv or
root of the tangency tree of a non-simple outerface $\mathcal{F}_c$,
then at least one vertex of $C_s$ is adjacent to $\outGraph{C_s}$.
If $C_s$ is an internal node of the tangency tree of
$\mathcal{F}_c$, then its leftmost vertex has two edges in
$\outGraph{C_s}$. Hence, IP-\ref{ip:3} also holds. 

However, IP-\ref{ip:4} does not necessarily hold. To cope with this
case, consider a simple cycle $C_s$ and, with a slight abuse of the
notation developed so far, denote by $w_1,\dots,w_m$ the vertices of
$C_s$ from left to right along $\ell$. If IP-\ref{ip:4} is violated,
then $deg(w_1)=4$ in $\inGraphInc{C_s}$ and $w_1$ is incident to
exactly one chord of $C_s$, say $(w_1,w_i)$,
$i\in\left\{3,\ldots,m-1\right\}$; see
Fig.~\ref{fig:ip4_before_general}. Let $v$ be the other neighbor of
$w_1$ in $\inGraph{C_s}$. Clearly, $v\notin C_s$. In general,
$(w_1,w_i)$ belongs to a path of chords stemming from $w_1$. Let
$w_j$, $j\geq i$, be the end of this path be the end of this path
$P(w_1\rightarrow w_j)$. The degree restriction implies that
$P(w_1\rightarrow w_j)$ is uniquely defined. We refer to it as the
\emph{separating path of chords} of $C_s$, since it splits
$\inGraphInc{C_s}$ into two subgraphs (see
Fig.~\ref{fig:ip4_split_left}-\ref{fig:ip4_split_right}):

\begin{itemize}
\item[$-$] $\inGraphInc{C_l}$ with outerface $C_l$
consisting of the edges $(w_1,w_2)$, $(w_2,w_3)$, $\ldots$,
$(w_{j-1},w_j)$ and the edges of $P(w_1\rightarrow w_j)$
(highlighted in gray in Fig.~\ref{fig:ip4_before_general}) and

\item[$-$] $\inGraphInc{C_r}$ with outerface $C_r$
consisting of the edges $(w_j,w_{j+1})$, $\ldots$, $(w_{m-1},w_m)$,
$(w_m,w_1)$ and the edges of $P(w_1\rightarrow w_j)$.
\end{itemize}

\begin{figure}[t!]
\centering
    \begin{minipage}[b]{.37\textwidth}
        \centering
        \subfloat[\label{fig:ip4_before_general}{A graph $\inGraphInc{C_s}$ which violates IP-\ref{ip:4}.}]
        {\includegraphics[width=\textwidth,page=1]{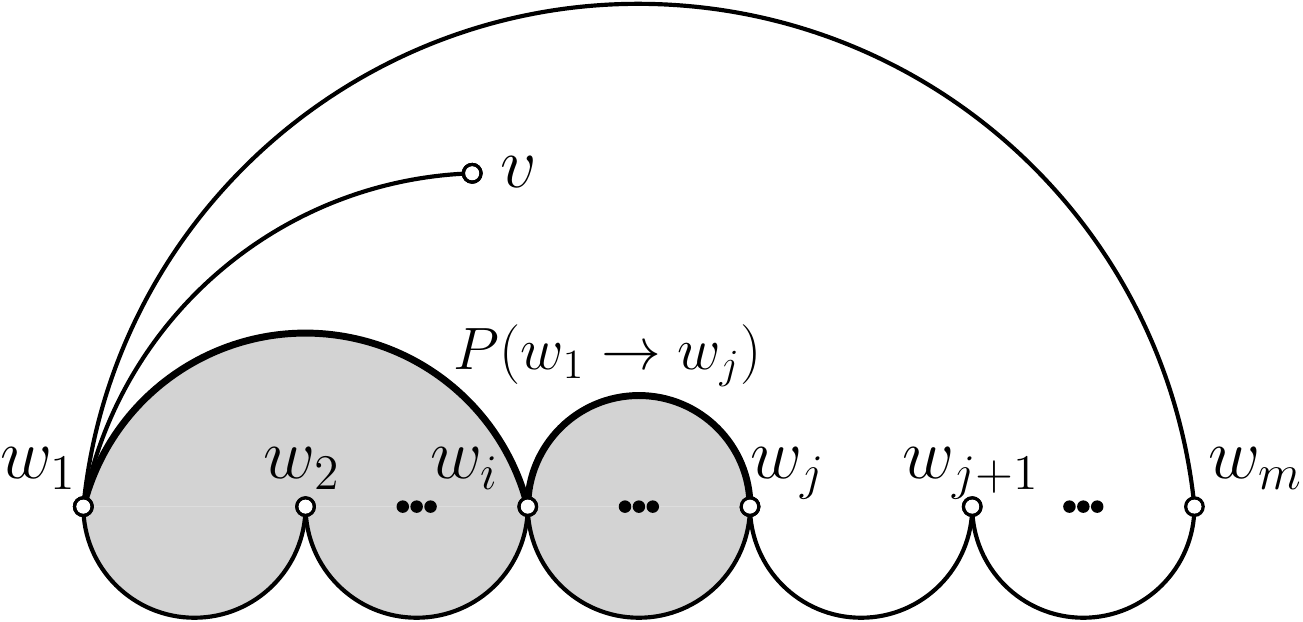}}
    \end{minipage}
    \hfill
    \begin{minipage}[b]{.24\textwidth}
        \centering
        \subfloat[\label{fig:ip4_split_left}{Subgraph $\inGraphInc{C_l}$.}]
        {\includegraphics[width=\textwidth,page=2]{images/ip4}}
    \end{minipage}
    \hfill
    \begin{minipage}[b]{.37\textwidth}
        \centering
        \subfloat[\label{fig:ip4_split_right}{Subgraph $\inGraphInc{C_r}$.}]
        {\includegraphics[width=\textwidth,page=3]{images/ip4}}
    \end{minipage}
    \hfill
    \begin{minipage}[b]{.37\textwidth}
        \centering
        \subfloat[\label{fig:ip4_right_1}{$\inGraphInc{C_r}$ in the case where $j=m$. }]
        {\includegraphics[width=\textwidth,page=4]{images/ip4}}
    \end{minipage}
    \hfill
    \begin{minipage}[b]{.61\textwidth}
        \centering
        \subfloat[\label{fig:ip4_right_2}{Graph $\inGraphInc{C_r'}$ obtained from $\inGraphInc{C_r}$ by contracting $P(w_1\rightarrow w_j)$ into a vertex}]
        {\includegraphics[width=\textwidth,page=5]{images/ip4}}
    \end{minipage}
    \hfill
    \begin{minipage}[b]{.24\textwidth}
        \centering
        \subfloat[\label{fig:ip4_before_1_left}{$deg(w_1)=3$ in $\inGraphInc{C_l}$}]
        {\includegraphics[width=\textwidth,page=6]{images/ip4}}
    \end{minipage}
    \hfill
    \begin{minipage}[b]{.18\textwidth}
        \centering
        \subfloat[\label{fig:ip4_before_1_right}{Subgraph $\inGraphInc{C_r'}$}]
        {\includegraphics[width=\textwidth,page=7]{images/ip4}}
    \end{minipage}
    \hfill
    \begin{minipage}[b]{.50\textwidth}
        \centering
        \subfloat[\label{fig:ip4_after_1}{The resulting drawing of $\inGraphInc{C}$, when $deg(w_1)=3$ in $\inGraphInc{C_l}$}]
        {\includegraphics[width=\textwidth,page=8]{images/ip4}}
    \end{minipage}
    \hfill
    \begin{minipage}[b]{.24\textwidth}
        \centering
        \subfloat[\label{fig:ip4_before_2_left}{Subgraph $\inGraphInc{C_l'}$}]
        {\includegraphics[width=\textwidth,page=9]{images/ip4}}
    \end{minipage}
        \centering
        \begin{minipage}[b]{.18\textwidth}
        \subfloat[\label{fig:ip4_before_2_right}{Subgraph $\inGraphInc{C_r'}$}]
        {\includegraphics[width=\textwidth,page=10]{images/ip4}}
    \end{minipage}
    \hfill
    \begin{minipage}[b]{.50\textwidth}
        \centering
        \subfloat[\label{fig:ip4_after_2}{The resulting drawing of $\inGraphInc{C}$, when $deg(w_1)=2$ in $\inGraphInc{C_l}$}]
        {\includegraphics[width=\textwidth,page=11]{images/ip4}}
    \end{minipage}
    \caption{In all figures, $P(w_1\rightarrow w_j)$ is drawn fat, dotted edges are removed and gray-shaded dashed edges are added.}
    \label{fig:ip4}
\end{figure}

In the following, we describe how the two sub-instances
$\inGraphInc{C_l}$ and $\inGraphInc{C_r}$ can be recursively solved.
Observe that if $i\neq j$, then $C_l$ is not simple. In this case,
$C_l$ consists of a particular number of smaller simple subcycles,
for which IP-\ref{ip:3} and IP-\ref{ip:4} hold (hence they can be
recursively drawn), except for the first one, that is leftmost drawn
along $\ell$. First consider $\inGraphInc{C_r}$. We distinguish two
cases:

\begin{itemize}
\item \textbf{Case 1:} $j=m$ (see Fig.~\ref{fig:ip4_right_1}).
Then, $C_r$ is formed by $P(w_1\rightarrow w_m)$ and $(w_1,w_m)$.
Observe that $w_m$ is the rightmost vertex of $C_r$ and incident to
a chord of $C_s$. Hence, $deg(w_m)=2$ in $\inGraphInc{C_r}$. Since
none of the edges of $P(w_1\rightarrow w_m)$ is nested by a chord of
$C_r$, all vertices of $C_r$ (except possibly for $w_1$) are of
degree 2 in $\inGraphInc{C_r}$. If $deg(w_1)=3$ in
$\inGraphInc{C_r}$, then $(w_1,v)$ is bridge; a contradiction since
$G$ is biconnected. Hence, $\inGraphInc{C_r} \equiv C_r$. So, we
draw it as in Fig.~\ref{fig:ip4_right_1}, i.e., on the top-page.
Then, each subcycle of $C_l$ conforms to IP-\ref{ip:3} and
IP-\ref{ip:4} (including the first one, that is leftmost drawn along
$\ell$) and can be recursively drawn. The drawing of
$\inGraphInc{C_s}$ is derived by plugging the drawing of the
subcycles of $C_l$ into the drawing of $\inGraphInc{C_r}$. Observe
that the combinatorial embedding is preserved.

\item \textbf{Case 2:} $j<m$ (see
Fig.~\ref{fig:ip4_split_right}). All vertices of $P(w_1\rightarrow
w_j)$ have degree 2 in $\inGraphInc{C_r}$, except for $w_1$ and
$w_j$, that can have max-degree 3. We modify $\inGraphInc{C_r}$ as
follows (see Fig.~\ref{fig:ip4_right_2}): We contract
$P(w_1\rightarrow w_j)$ into a vertex, identified by $w_j$. Let
$\inGraphInc{C_r'}$ be the new subgraph with outerface $C_r'$.
Clearly, IP-\ref{ip:4} holds for $\inGraphInc{C_r'}$. IP-\ref{ip:3}
also holds, since $w_m$ is the rightmost vertex of $C_r'$ and
$deg(w_m)\leq3$. Hence, $\inGraphInc{C_r'}$ can be recursively
drawn. We proceed by distinguishing two sub-cases based on the
degree of $w_1$ in $\inGraphInc{C_l}$:

\begin{itemize}
\item \textbf{Case 2.1:} $deg(w_1)=3$ in $\inGraphInc{C_l}$
(see Fig.~\ref{fig:ip4_before_1_left}). Here IP-\ref{ip:3} and
IP-\ref{ip:4} hold for $\inGraphInc{C_l}$, so, it can be recursively
drawn. Once $\inGraphInc{C_l}$ and $\inGraphInc{C_r'}$ are drawn,
the drawing of $\inGraphInc{C}$ can be derived by deleting
$(w_j,w_m)$ from $\inGraphInc{C_r'}$ and restoring $(w_1,w_m)$, as
in Fig.~\ref{fig:ip4_after_1}. Since $w_1$ has no neighbors in
$\inGraph{C_r'}$, the embedding is preserved.

\item \textbf{Case 2.2:} $deg(w_1)=2$ in $\inGraphInc{C_l}$
(see Fig.~\ref{fig:ip4_before_2_right}): In this case, $deg(w_1)=3$
in $\inGraphInc{C_r'}$; see Fig.~\ref{fig:ip4_before_2_right}.
Again, we modify $\inGraphInc{C_l}$ as follows; see
Fig.~\ref{fig:ip4_before_2_left}. We remove $w_1$ and join $w_2$ and
$w_i$ by an edge. Let $\inGraphInc{C_l'}$ be the new subgraph with
outerface $C_l'$. IP-\ref{ip:4} may not hold for
$\inGraphInc{C_l'}$. However, $\inGraphInc{C_l'}$ has fewer vertices
than $\inGraphInc{C}$. We can benefit from this by proceeding
recursively, as we initially did with $\inGraphInc{C}$. Eventually,
at some point IP-\ref{ip:4} should hold, otherwise a graph with at
most $3$ vertices on its outerface should have a chord;
contradiction. Once $\inGraphInc{C_l'}$ has been drawn, we derive
the drawing of $\inGraphInc{C}$ as follows; see
Fig.~\ref{fig:ip4_after_2}. We remove $(w_2,w_i)$ and connect the
neighbors of $w_j$ in $\inGraphInc{C_r'}$ with its copy in
$\inGraphInc{C_l'}$. Note that no crossings are introduced, since
the two copies of $w_j$ in $\inGraphInc{C_l'}$ and
$\inGraphInc{C_r'}$ are consecutive on $\ell$. To complete the
drawing of $\inGraphInc{C}$, it remains to replace the copy of $w_j$
in $\inGraphInc{C_r'}$ with $w_1$, and add $(w_1,w_2)$ and
$(w_1,w_i)$.
\end{itemize}
\end{itemize}

To complete the description of our algorithm, it remains to describe
how the recursion begins. To do so, we need the following theorem,
that describes a simple property of planar graph drawing.

\begin{theorem}\label{thm:chordless}
Any planar graph $G$ admits a planar drawing $\Gamma(G)$ with a
chordless outerface.
\end{theorem}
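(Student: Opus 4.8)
The plan is to reduce the statement to the existence of a single \emph{chordless face}: if some face $f$ of a planar embedding of $G$ is bounded by a chordless cycle $C_f$, then re-embedding via the sphere so that $f$ becomes the unbounded region yields a planar drawing $\Gamma(G)$ whose outerface is exactly $C_f$ and hence has no chord. Since any face of a fixed embedding can be chosen as the outerface, it therefore suffices to show that every planar embedding already contains a face whose boundary carries no chord of $G$.

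I would first treat the $2$-connected case, where every facial boundary is a simple cycle, arguing by contradiction: assume every face has a chord. For a face $f$ with a chord $e=(u,v)$, the edge $e$ together with the shorter of the two $u$--$v$ arcs of $C_f$ bounds a closed disk $R_{f,e}$ that does not contain $f$. Among all such pairs $(f,e)$ I would choose one minimizing $R_{f,e}$, say by the number of faces it contains. Let $g$ be a face incident to $e$ on the inside of $R_{f,e}$. By assumption $g$ also has a chord $e'$; since both endpoints of $e'$ lie on $\partial g\subseteq\overline{R_{f,e}}$ and $e'$ cannot cross the cycle bounding $R_{f,e}$, the disk $R_{g,e'}$ lies inside $R_{f,e}$ but excludes $g$, so $R_{g,e'}\subsetneq R_{f,e}$. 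This contradicts minimality and proves that a chordless face must exist. Intuitively, a chord always cuts off a strictly smaller region, so an inclusion-minimal region cannot itself be re-cut; note also that any triangular face is automatically chordless, which is simply the innermost instance of this descent.

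To obtain the full statement I would lift the result from blocks to arbitrary planar graphs. For a connected graph with cut vertices, I would pick a leaf block $B$ of its block-cut tree, apply the $2$-connected case to get a chordless face of $B$, and re-embed so that this face becomes the outerface while all remaining blocks are nested inside an internal face of $B$; since no edge joins distinct blocks, the outer boundary is the chordless cycle of $B$. The disconnected case follows by the same nesting idea, placing all but one component inside a bounded face of a fixed component. Degenerate pieces (a single vertex, a single edge, or a tree) have no chords by definition and require no argument.

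The step I expect to be the main obstacle is making the minimality descent fully rigorous, specifically verifying that $R_{g,e'}$ is a \emph{proper} subregion of $R_{f,e}$: one must check that $e'$ is routed inside $\overline{R_{f,e}}$ (which uses planarity together with the fact that $\partial R_{f,e}$ consists of edges of $G$) and that the arc of $C_g$ closing off $R_{g,e'}$ stays inside $R_{f,e}$. Once this containment is in place the contradiction is immediate, so the remaining effort is the careful bookkeeping of which side of each chord the relevant face lies on.
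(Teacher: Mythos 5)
Your route is genuinely different from the paper's: the paper never establishes that some face of the given embedding is chordless. Instead it picks a chord $(u_i,u_j)$ of the outer cycle $C$ that cuts off a minimal arc, so that the cycle $C'$ formed by the chord and that arc is chordless (minimality is trivial there, because every chord of $C'$ has both endpoints on $C$ and is therefore a chord of $C$ cutting off a shorter arc), and then re-embeds $G_2-\{(u_i,u_j)\}$ inside the face of $G_1$ incident to the chord, making $C'$ the new outerface. Your plan proves the stronger structural statement that \emph{every} embedding of a $2$-connected plane graph contains a chordless face, and then merely re-selects the outerface; this is more elegant if it works, but it is exactly where the delicate point sits that the paper's choice of the outer cycle lets it bypass.

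As written, your descent has a genuine gap at the step you yourself flagged, and your proposed justification does not close it. First, the ``shorter arc'' convention breaks the minimization: the smaller region $R_{g,e'}$ produced by the descent may be the one bounded by $e'$ and the \emph{longer} arc of $C_g$, hence need not belong to the family you minimized over; you must minimize over all triples (face, chord, side), e.g.\ by inclusion or by the number of faces enclosed. Second, the containment $e'\subseteq\overline{R_{f,e}}$ does \emph{not} follow from ``planarity together with the fact that $\partial R_{f,e}$ consists of edges'': an edge with both endpoints on $\partial R_{f,e}$ can in general be routed outside the region without crossing anything. What actually saves the argument is a parallel-edge observation: if $e'$ leaves $R_{f,e}$, its interior must lie in the region $S$ bounded by $e$ and the other arc of $C_f$ (it cannot pass through the face $f$), so both of its endpoints lie in $\overline{R_{f,e}}\cap\overline{S}$, which is precisely the closed edge $e$; hence both endpoints are the endpoints $u,v$ of $e$, making $e'$ parallel to $e$ --- impossible in a simple graph. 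With these two repairs (minimize over both sides; rule out escape via the parallel-edge argument), your descent goes through, and the block-tree and disconnected reductions you sketch are fine since any edge joining two vertices of a block lies in that block. So the approach is salvageable and even yields a stronger fact than the paper proves, but the proof as proposed would fail at the minimality step and at the escape step.
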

\begin{proof}
Suppose that we are given a planar drawing $\Gamma'(G)$ of $G$, in
which the cycle, say $C: u_1 \rightarrow \ldots \rightarrow u_k
\rightarrow u_1$, bounding its outerface contains at least one
chord. Then, the endpoints of any chord of $C$ is a separation pair
of $G$. Let $(u_i,u_j)$, $1 \leq i < j \leq k$, be a chord of $C$
s.t. cycle $C': u_i \rightarrow u_{i+1} \rightarrow \ldots
\rightarrow u_j \rightarrow u_i$ has no chords. Let also $G_1$ and
$G_2$ be the two subgraphs of $G$ with outerfaces $u_i \rightarrow
u_{i+1} \rightarrow \ldots \rightarrow u_{j-1} \rightarrow u_j
\rightarrow u_i$ and $u_i \rightarrow u_{i-1} \rightarrow \ldots
\rightarrow u_{j+1} \rightarrow u_j\rightarrow u_i$ resp. Denote by
$f$ the face of $G_1$ that contains edge $(u_i,u_j)$. Since $u_i$
and $u_j$ is a separation pair of $G$, there exist a planar drawing
$\Gamma(G)$ of $G$ in which $G_2-\{(u_i,u_j)\}$ is drawn in the
interior of $f$ and the outerface of $G$ is bounded by the
chordless cycle $C'$.
\end{proof}

We are now ready to describe how the recursion begins. This is done
by specifying a drawing of $G$ with a chordless outerface, say
$C_{out}:v_1 \rightarrow \ldots v_k \rightarrow v_1$, which by
Theorem~\ref{thm:chordless} exists. Then, we place $v_1,\ldots,v_k$
in this order along $\ell$ and draw the edges of $C_{out}$ as
imposed by IP-\ref{ip:2}. If there is a vertex of $C_{out}$ with
degree less than four, then it is chosen as $v_k$ and all invariant
properties of our algorithm are satisfied. However, in the case
where such a vertex does not exist, it follows that $deg(v_k)=4$ in
$\inGraphInc{C_{out}}$ and therefor IP-\ref{ip:3} does not holds.

To cope with the latter case, we assume that we have computed the
\bvs of $\inGraph{C_{out}}$. Let $v_l$ ($v_r$, resp.) be the left
(right, resp.) neighbor of $v_k$ in $\inGraph{C_{out}}$ (see
Fig.~\ref{fig:outerface_before}) and $c_r$ ($c_l$, resp.) the \bv
that $v_r$ ($v_l$, resp.) belongs to ($c_r=c_l$ is possible).
Clearly, $v_l,v_r \notin C_{out}$, since $C_{out}$ is chordless. We
will augment $G$, s.t. IP-\ref{ip:3} holds in the augmented graph
$G_{aug}$. We outline our case analysis:

\begin{enumerate}[{Case~}1:]
\item\label{enum:1} $c_r$ is incident to a vertex of $C_{out}$ other than $v_k$.

\item\label{enum:2} $c_r$ is not incident to any other vertex of $C_{out}$ apart from $v_k$. In this case, once we define $G_{aug}$, we consider two additional subcases:

\begin{enumerate}[{Case~}\ref{enum:2}.1:]

\item Vertices $v_l$ and $v_r$ belong to two different \bvs of $G_{aug}$.

\item Vertices $v_l$ and $v_r$ belong to the same \bv of $G_{aug}$.

\end{enumerate}

\end{enumerate}

\begin{figure}[p]
    \begin{minipage}[b]{.44\textwidth}
        \centering
        \subfloat[\label{fig:outerface_before}{An instance in which IP-\ref{ip:3} is violated since $deg(v_k)=4$ in $\inGraphInc{C_{out}}$.}]
        {\includegraphics[width=.85\textwidth,page=1]{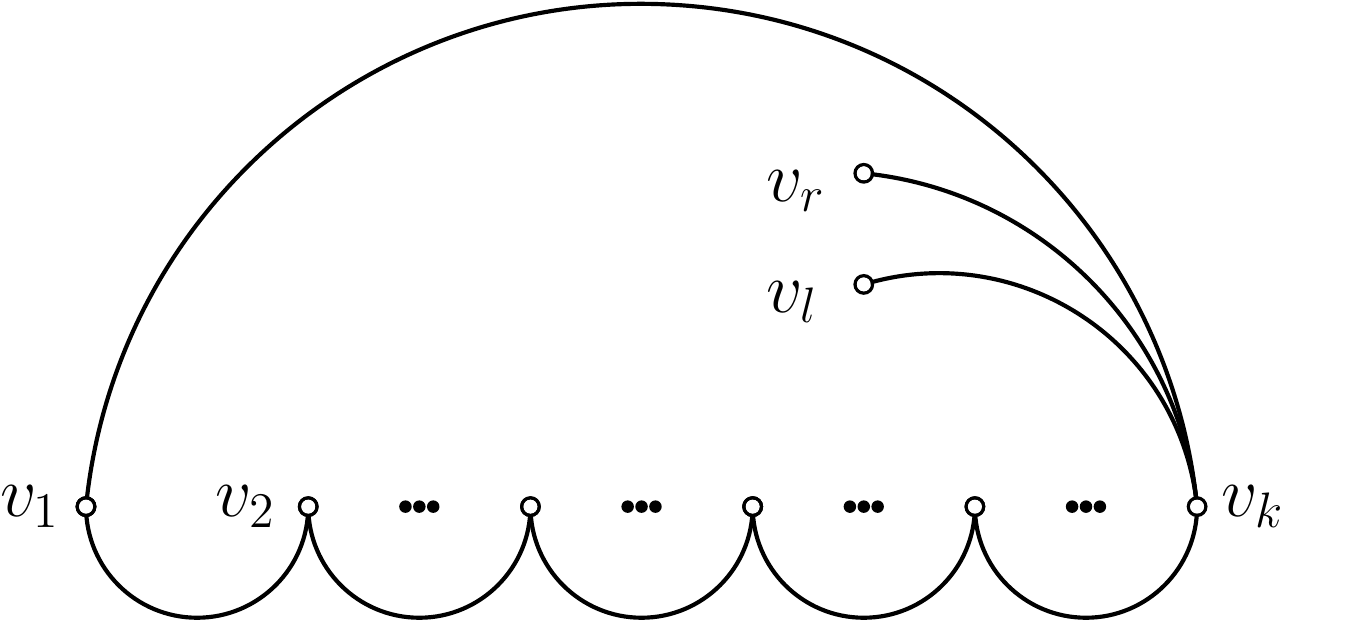}}
    \end{minipage}
    \hfill
    \begin{minipage}[b]{.44\textwidth}
        \centering
        \subfloat[\label{fig:outerface_before_4}{Graph $G_{aug}$, when $c_r$ is adjacent to a vertex of $C_{out}$ other than $v_k$.}]
        {\includegraphics[width=.85\textwidth,page=2]{images/ip3}}
    \end{minipage}
    \hfill
    \begin{minipage}[b]{.44\textwidth}
        \centering
        \subfloat[\label{fig:outerface_drawn_4}{The drawing of $G_{aug}$, when $c_r$ is adjacent to a vertex of $C_{out}$ other than $v_k$.}]
        {\includegraphics[width=.85\textwidth,page=3]{images/ip3}}
    \end{minipage}
    \hfill
    \begin{minipage}[b]{.44\textwidth}
        \centering
        \subfloat[\label{fig:outerface_restored_4}{A valid drawing of $G$, when $c_r$ is adjacent to a vertex of $C_{out}$ other than $v_k$.}]
        {\includegraphics[width=.85\textwidth,page=4]{images/ip3}}
    \end{minipage}
    \hfill
    \begin{minipage}[b]{.44\textwidth}
        \centering
        \subfloat[\label{fig:outerface_mirror}{Mirroring the input graph $G$ along the $y$--axis.}]
        {\includegraphics[width=.85\textwidth,page=5]{images/ip3}}
    \end{minipage}
    \hfill
    \begin{minipage}[b]{.44\textwidth}
        \centering
        \subfloat[\label{fig:outerface_modify}{Graph $G_{aug}$ when $c_r$ is not incident to a vertex of $C_{out}$ apart from $v_k$.}]
        {\includegraphics[width=.85\textwidth,page=6]{images/ip3}}
    \end{minipage}
    \hfill
    \begin{minipage}[b]{.44\textwidth}
        \centering
        \subfloat[\label{fig:outerface_drawn_1}{The drawing of $G_{aug}$, when: $c_r$ is not adjacent to another vertex of $C_{out}$ apart from $v_k$; $v_l$ and $v_r$ belong to different \bvs.}]
        {\includegraphics[width=.85\textwidth,page=7]{images/ip3}}
    \end{minipage}
    \hfill
    \begin{minipage}[b]{.44\textwidth}
        \centering
        \subfloat[\label{fig:outerface_restored_1}{A valid drawing of $G$, when: $c_r$ is not adjacent to another vertex of $C_{out}$ apart from $v_k$; $v_l$ and $v_r$ belong to different \bvs.}]
        {\includegraphics[width=.85\textwidth,page=8]{images/ip3}}
    \end{minipage}
    \hfill
    \begin{minipage}[b]{.44\textwidth}
        \centering
        \subfloat[\label{fig:outerface_shifted_2}{The drawing of $G_{aug}$, when: $c_r$ is not adjacent to any other vertex of $C_{out}$ apart from $v_k$; $v_l$ and $v_r$ belong to the same \bv; $(v_k,v_{k+1})$ is marked.}]
        {\includegraphics[width=.85\textwidth,page=9]{images/ip3}}
    \end{minipage}
    \hfill
    \begin{minipage}[b]{.44\textwidth}
        \centering
        \subfloat[\label{fig:outerface_restored_2}{A valid drawing of $G$, when: $c_r$ is not adjacent to any other vertex of $C_{out}$ apart from $v_k$; $v_l$ and $v_r$ belong to the same \bv; $(v_k,v_{k+1})$ is marked.}]
        {\includegraphics[width=.85\textwidth,page=10]{images/ip3}}
    \end{minipage}
    \hfill
    \begin{minipage}[b]{.44\textwidth}
        \centering
        \subfloat[\label{fig:outerface_general_3}{The drawing of $G_{aug}$, when: $c_r$ is not adjacent to any other vertex of $C_{out}$ apart from $v_k$; $v_l$ and $v_r$ belong to the same \bv; $(v_k,v_{k+1})$ is not marked.}]
        {\includegraphics[width=.85\textwidth,page=11]{images/ip3}}
    \end{minipage}
    \hfill
    \begin{minipage}[b]{.44\textwidth}
        \centering
        \subfloat[\label{fig:outerface_restored_3}{A valid drawing of $G$, when: $c_r$ is not adjacent to any other vertex of $C_{out}$ apart from $v_k$; $v_l$ and $v_r$ belong to the same \bv; $(v_k,v_{k+1})$ is not marked.}]
        {\includegraphics[width=.85\textwidth,page=12]{images/ip3}}
    \end{minipage}
        \caption{In all figures, dotted edges are removed and gray-shaded dashed edges are added.}
    \label{fig:ip3_all}
\end{figure}

\begin{itemize}
\item \textbf{Case~1:} Let $v_i$, $i<k$, be the
leftmost neighbor of $c_r$ on $C_{out}$. We augment $G$ as in
Fig.~\ref{fig:outerface_before_4}, by introducing three vertices to
the right of $v_k$. Let $C_{aug}$ be the outerface of the augmented
graph. Now observe that $G_{aug}$ satisfies IP-\ref{ip:3} and can be
recursively drawn. We claim that, in the drawing of $G_{aug}$, $v_r$
and $v_l$ are to the left of $v_k$, as in
Fig.~{\ref{fig:outerface_drawn_4}}. Denote by $c_r^{aug}$
($c_l^{aug}$, resp.) the \bv that $v_r$ ($v_l$, resp.) belongs to in
$G_{aug}-C_{aug}$. Note that $c_r^{aug}=c_l^{aug}$ is possible.
$c_r^{aug}$ is incident to $v_i$ through a marked edge, since $v_i$
is the leftmost neighbor of $c_r$. This implies that $c_r^{aug}$ is
placed directly next to $v_i$ (hence, to the left of $v_k$). Now,
observe that $v_k$ is the rightmost neighbor of $c_l^{aug}$. So,
$c_l^{aug}$ is placed to the left of $v_k$, even if $(v_l,v_k)$ is
marked, due to chord $(v_k,v_{k+2})$. Between $v_k$ and $v_{k+3}$ no
vertices of $G_{aug}$ exist, except for $v_{k+1}$ and $v_{k+2}$,
since the only \an that could be between $v_{k+2}$ and $v_{k+3}$ is
$c_r^{aug}$, which, however, is to the left of $v_k$, and so all
vertices of $G_{aug}-C_{aug}$ are to the left of $v_k$. If we
contract $v_k$, $v_{k+1}$, $v_{k+2}$ and $v_{k+3}$ back into $v_k$,
we obtain a valid drawing of $G$ (see
Fig.~\ref{fig:outerface_restored_4}).

\item \textbf{Case~2:} $c_r$ is not incident to any other vertex of
$C_{out}$ apart from $v_k$. We claim that we are allowed to assume
w.l.o.g. that $c_l$ is not incident to any other vertex of $C_{out}$
apart from $v_k$. If not so, consider a mirroring of $\ell$ at the
$y$--axis (see Fig.~\ref{fig:outerface_mirror}). The clockwise order
of the edges around each vertex of $G$ is reversed. So, $(v_k,v_l)$
($(v_k,v_r)$, resp.) is the right (left, resp.) edge of $v_k$. If
$c_l$ is incident to a vertex of $C_{out}$ other than $v_k$, then
Case~1 applies. Assume w.l.o.g. that $c_r$ and $c_l$ are not
incident to any other vertex of $C_{out}$ apart from $v_k$. We
augment $G$ as in Fig.~\ref{fig:outerface_modify}, s.t.
IP-\ref{ip:3} holds. Let $C_{aug}$ be the outerface of the augmented
graph. Since $v_k$ is not a cutvertex in $G$, $(v_k,v_{k+1})$ cannot
be a bridge in $G_{aug}$. Hence, $G_{aug}$ can be recursively drawn.
We distinguish two subcases:

\begin{itemize}
\item \textbf{Case~\ref{enum:2}.1:}  \emph{Vertices $v_l$ and
$v_r$ belong to two different \bvs of $G_{aug}-C_{aug}$}, say
$c_l^{aug}$ and $c_r^{aug}$ resp. So, $v_{k+1}$ belongs to another
\bv (containing only $v_{k+1}$) and is incident to $v_k$. Both
$c_r^{aug}$ and $c_l^{aug}$ are \ccs. Since $v_{k+1}$ is adjacent to
one vertex of $C_{out}$ (i.e. $v_k$), $(v_k,v_{k+1})$ is the marked
edge of $v_{k+1}$ and $v_{k+1}$ is placed directly to the left of
$v_k$ (see Fig.~\ref{fig:outerface_drawn_1}). $(v_k,v_{k+1})$ is
drawn on the bottom half-plane (marked edge) and $(v_l,v_{k+1})$ and
$(v_r,v_{k+1})$ are drawn on the top half-plane. If there was an \an
between $v_{k+1}$ and $v_k$, it would be adjacent to $v_k$,
contradicting the fact that $deg(v_k)=3$ in $G_{aug}$. So, the
rightmost \an of $G_{aug}-C_{aug}$ is $v_{k+1}$. Then, all vertices
of $G_{aug}-C_{aug}$ are to the left of $v_{k+1}$. So, if we
contract $v_k$ and $v_{k+1}$ back to $v_k$, then we obtain a valid
drawing of $G$ (see Fig.~\ref{fig:outerface_restored_1}).

\item \textbf{Case~\ref{enum:2}.2:} \emph{Vertices $v_l$ and $v_r$ belong to
the same \bv, say $c$, of $G_{aug}$}. Then, $v_{k+1}$ must belong to
$c$, as well. Also, $v_r$, $v_{k+1}$ and $v_l$ appear in this order
in the clockwise traversal of the outerface  $C_c$ of $c$. Since $c$
contains $v_{k+1}$, $c$ is adjacent to $v_k$ of $C_{out}$, and so
$c$ is incident to a marked edge, which ``determines'' the placement
of the vertices of $C_c$ on $\ell$. Let $v'$ be the vertex of $C_c$
incident to the marked edge of $c$. Since $c$ is adjacent to $v_k$,
$v'=v_{k+1}$ is possible (but $v'\notin\left\{ v_r, v_l\right\}$
since $v_r$ and $v_l$ are not incident to a vertex of $C_{out}$).

\begin{itemize}

\item Assume that $v'=v_{k+1}$, i.e.,
$(v_k,v_{k+1})$ is the marked edge of $c$ (see
Fig.~\ref{fig:outerface_shifted_2}). Then, $c$ is directly to the
left of $v_k$, with $v_{k+1}$ being the rightmost vertex of $C_c$.
Then between $v_{k+1}$ and $v_k$ no vertices of $G_{aug}$ exist,
since $deg(v_k)=3$ in $G_{aug}$, i.e., if there was an \an between
$v_k$ and $v_{k+1}$, it would be adjacent to $v_k$ and then
$deg(v_k)=4$. So, the rightmost \an of $G_{aug}-C_{aug}$ has
$v_{k+1}$ as its rightmost vertex. Then all vertices of
$G_{aug}-C_{aug}$ are to the left of $v_{k+1}$. If we contract
vertices $v_k$ and $v_{k+1}$ back to $v_k$, and draw $(v_l,v_k)$ on
the bottom half-plane and $(v_r,v_k)$ on the top half-plane, we
obtain a valid drawing of $G$ (see
Fig.~\ref{fig:outerface_restored_2}).

\item Assume now that $v'\neq v_{k+1}$. We claim that $v_r$,
$v_{k+1}$ and $v_l$ appear in this order from left to right on
$\ell$. Assume to the contrary that, either $v_l$ and $v_{k+1}$, or,
$v_{k+1}$ and $v_r$, are the leftmost and rightmost vertices of
$C_c$ on $\ell$, resp. The contradiction is implied by the
construction, in which $v'$ is either leftmost or rightmost on
$C_c$, and $v'\notin\left\{ v_{k+1}, v_r, v_l\right\}$. The current
situation is depicted in Fig.~\ref{fig:outerface_general_3}. If we
remove $v_{k+1}$, and draw $(v_l,v_k)$ and $(v_r,v_k)$ on the top
half-plane, then we obtain a valid drawing of $G$ (see
Fig.~\ref{fig:outerface_restored_3}).
\end{itemize}
\end{itemize}
\end{itemize}

We are now ready to state our main theorem.

\begin{theorem}
\label{thm:4degree} Any planar graph of maximum degree $4$ on $n$
vertices admits a two-page book embedding, which can be constructed
in $O(n^2)$ time.
\end{theorem}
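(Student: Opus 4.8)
The plan is to combine the invariant-maintenance machinery of this section into a single inductive argument. First I would reduce to the biconnected case: since the page number of a graph equals the maximum page number over its biconnected components~\cite{bk-btg-79}, it suffices to treat each biconnected component separately, so I assume $G$ is biconnected. To start the recursion, I would invoke Theorem~\ref{thm:chordless} to fix a planar drawing of $G$ whose outerface is bounded by a chordless cycle $C_{out}$, place the vertices of $C_{out}$ on the spine $\ell$ in the cyclic order prescribed by IP-\ref{ip:2}, and route its edges as required by IP-\ref{ip:2}. If some vertex of $C_{out}$ has degree at most three in $\inGraphInc{C_{out}}$, I would designate it $v_k$, so that IP-\ref{ip:1}--IP-\ref{ip:4} all hold; otherwise $deg(v_k)=4$ and IP-\ref{ip:3} is violated, which is exactly the configuration dispatched by the augmentation analysis (Case~1, Case~2.1 and Case~2.2): there one builds a graph $G_{aug}$ satisfying IP-\ref{ip:3}, draws it recursively, and contracts the auxiliary vertices back to recover a valid drawing of $G$.

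I would then prove, by induction on the number of vertices of the current instance $\inGraphInc{C}$, that whenever the bounding cycle $C$ satisfies IP-\ref{ip:1}--IP-\ref{ip:4} the recursive step yields a planar two-page drawing of $\inGraphInc{C}$ that re-establishes these four invariants for every simple cycle drawn at the next level of recursion. The inductive step is the construction already spelled out: draw the chords of $C$ on the top page (crossing-free by IP-\ref{ip:extra} and IP-\ref{ip:2}); contract the bridge-blocks of $\inGraph{C}$ into \bvs, split into \ans and \ccs; position the \ans by the marked-edge rules (Case~1 and Case~2); and position the \ccs tree by tree in the topological order furnished by Lemma~\ref{lem:dag}, invoking Lemma~\ref{lem:order}, Lemma~\ref{lem:tree_planar}, Lemma~\ref{lem:anchors} and Lemma~\ref{lem:tree_fix} to fix their exact locations without crossings, while Lemma~\ref{lem:embedding} certifies that the embedding of IP-\ref{ip:extra} is preserved. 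Degenerate situations are absorbed by Lemma~\ref{lem:cutvertices} and, when IP-\ref{ip:4} fails, by the separating-path-of-chords decomposition into the strictly smaller sub-instances $\inGraphInc{C_l}$ and $\inGraphInc{C_r}$. Since every \bv is ultimately replaced by the vertices of its outerface and each IP-\ref{ip:4} repair strictly decreases the vertex count, each recursive call has fewer vertices than its parent, so the recursion is well-founded and terminates.

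Correctness of the final object then follows purely from the invariants. By IP-\ref{ip:1} no edge ever crosses the spine, and by IP-\ref{ip:extra} together with Lemma~\ref{lem:embedding} the fixed planar combinatorial embedding of $G$ is maintained throughout; since two edges drawn on the same page cross precisely when their endpoints alternate along $\ell$, a fixed planar embedding realised with every edge confined to one of two pages can contain no same-page crossing, so the output is a genuine two-page book embedding. For the running time I would charge each invocation of the recursive step on a cycle $C$ with work linear in $|\inGraphInc{C}|$ -- computing the bridge-blocks, the anchored-tree forest, the auxiliary DAG $G_{aux}^T$ and its topological order, and all placements can each be done in linear time -- and bound the recursion depth by $O(n)$ while the instances processed at any fixed depth have total size $O(n)$; multiplying gives the claimed $O(n^2)$ bound.

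The step I expect to be the main obstacle is the bookkeeping that every recursively drawn simple cycle truly re-establishes all four invariants, and in particular IP-\ref{ip:4}: a leftmost vertex of degree four carrying a chord forces the separating path $P(w_1\rightarrow w_j)$ and the delicate Case~2.2 surgery -- deleting $w_1$, contracting $P(w_1\rightarrow w_j)$ into a single vertex, and later stitching the two copies of $w_j$ together -- and one must argue both that this surgery preserves the combinatorial embedding and that it strictly shrinks the instance so that the nested induction on IP-\ref{ip:4} terminates (otherwise an outerface with at most three vertices would be forced to carry a chord, a contradiction).
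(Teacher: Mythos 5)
Your proposal is correct and takes essentially the same route as the paper: the paper's stated proof of Theorem~\ref{thm:4degree} is only the running-time accounting (bridge-blocks, topological sorting of $G_{aux}^T$, BFS on tangency trees, each $O(n)$ per step, hence $O(n^2)$ overall), with correctness delegated to exactly the machinery you recapitulate -- the biconnectivity reduction, Theorem~\ref{thm:chordless} to seed the recursion, the invariants IP-\ref{ip:1}--IP-\ref{ip:4} maintained via Lemmas~\ref{lem:dag}--\ref{lem:tree_fix}, and the IP-\ref{ip:3} augmentation and IP-\ref{ip:4} separating-path repairs. Your inductive packaging and your per-invocation linear-work charge match the paper's argument, so there is no substantive difference to report.
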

\begin{proof}
At each step, our algorithm performs a series of computations; the
computation of the bridge-blocks, the topological sorting of
$G_{aux}^T$, BFS-traversals on the tangency trees. Using standard
algorithms from the literature all of these computations can be done
in $O(n)$ time, resulting in $O(n^2)$ total time.
\end{proof}

\section{Conclusions and Open Problems}
\label{sec:conclusions}
Two approaches were proposed to embed a 4-planar graph into two
pages. One reasonable question arising at this point is whether the
result can be extended to 5-planar graphs.

\bibliographystyle{plain}
\bibliography{references}

\end{document}